\documentclass{amsart}

\RequirePackage{amsthm,amsmath,amsfonts,amssymb,comment}
\RequirePackage[numbers,sort&compress]{natbib}
\RequirePackage[colorlinks,citecolor=blue,urlcolor=blue]{hyperref}
\RequirePackage{graphicx}

\usepackage{amsthm}
\usepackage{hyperref}
\hypersetup
{colorlinks = true,
linkcolor = red, 
anchorcolor = red, 
citecolor = blue, 
filecolor = blue,
urlcolor = blue}
\usepackage{hypcap}
\usepackage{cleveref}
\usepackage{lipsum}
\usepackage{enumerate}
\usepackage{enumitem}
\usepackage{amsmath,amsfonts,amssymb,mathrsfs, amscd,amsthm,amsbsy,amsxtra,bbm,bm, epsf,calc,comment, xcolor}
\usepackage[toc,page]{appendix}
\usepackage{color}
\usepackage{datetime}
\usepackage{latexsym}
\usepackage[english]{babel}
\usepackage{graphicx}
\usepackage{epsfig}
\usepackage{dsfont}
\usepackage{tikz}

\usepackage{soul}
\usepackage{multirow}

\usepackage{algorithmic} 
\usepackage{algorithm} 
\usepackage{dsfont}

\theoremstyle{plain}

\numberwithin{equation}{section}
\newtheorem{theorem}{Theorem}[section]
\newtheorem{lemma}[theorem]{Lemma}
\newtheorem{definition}[theorem]{Definition}
\newtheorem{proposition}[theorem]{Proposition}
\newtheorem{corollary}[theorem]{Corollary}
\newtheorem{example}[theorem]{Example}

\newtheorem{assumption}[theorem]{Assumption}

\theoremstyle{remark}
\newtheorem{remark}[theorem]{Remark}

\definecolor{mygreen}{rgb}{0.1,0.75,0.2}

\DeclareMathOperator*{\argmin}{argmin}
\DeclareMathOperator*{\esssup}{ess\, sup}

\DeclareSymbolFont{bbold}{U}{bbold}{m}{n}
\DeclareSymbolFontAlphabet{\mathbbold}{bbold}

\newcommand{\spt}{\textup{spt}}

\begin{document}

\title[Extension of coupling via OT]{Extension of coupling via the Projection of Optimal Transport}

\thanks{*: Equal contribution}
\author{Jakwang Kim*}
\address{School of Data Science, The Chinese University of Hong Kong, Shenzhen, Guangdong, 518172, P.R. China.}
\email{jakwangkim@cuhk.edu.cn}
\author{Young-Heon Kim}
\address{Department of Mathematics, University of British Columbia, 1984 Mathematics Road, Vancouver, British Columbia, V6T 1Z2, Canada.}
\email{yhkim@math.ubc.ca}
\author{Chan Park*}
\address{Department of Statistics, University of Illinois Urbana-Champaign, 136 Computing Applications Building,
605 E Springfield Ave, Champaign, IL 61820, USA.}
\email{parkchan@illinois.edu}

\date{\today}
\thanks{
JK is supported by CUHK-SZ start-up UDF03004229.
YHK is partially supported by the Natural Sciences and Engineering Research Council of Canada (NSERC), with Discovery Grant RGPIN-2019-03926 and RGPIN-2025-06747, as well as Exploration Grant (NFRFE-2019-00944) from the New Frontiers in Research Fund (NFRF). YHK is also a member of the Kantorovich Initiative (KI), which is supported by the PIMS Research Network (PRN) program of the Pacific Institute for the Mathematical Sciences (PIMS). We thank PIMS for their generous support. This research is partially done while YHK was visiting Korea Advanced Institute of Science and Technology (KAIST), and he thanks their generous support and excellent research environment. 
\copyright 2026 by the authors. All rights reserved.}

\begin{abstract}
In many statistical settings, two types of data are available: coupled data, which preserve the joint structure among variables but are limited in size due to cost or privacy constraints, and marginal data, which are available at larger scales but lack joint structure. Since standard methods require coupled data, marginal information is often discarded. We propose a fully nonparametric procedure that integrates decoupled marginal data with a limited amount of coupled data to improve the downstream analysis. The approach can be understood as an extension of coupling via projection in optimal transport. Specifically, the estimator is a solution for the optimal transport projection over the space of probability measures, which genuinely provides a natural geometric interpretation. Not only is its stability established, but its sample complexity is also derived using recent advances in statistical optimal transport. In addition to this, we present its explicit formula based on ``shadow," a notion introduced by Eckstein and Nutz. Furthermore, the estimator can be approximated in almost linear time and in parallel by entropic shadow, which demonstrates the theoretical and practical strengths of our methods. Lastly, we present experiments with real and synthetic data to justify the performance of our method.
\end{abstract}

\keywords{Extension of coupling, Reconstruction of joint distribution, Optimal transport, Optimal transport projection, Wasserstein distance projection, Shadow, Entropic Shadow}

\maketitle
\tableofcontents

\section{Introduction} \label{sec: intro}

Many statistical methods aim to elucidate relationships among a set of variables $\bm{Z}$. Standard approaches---regression, machine learning, and related tools---assume access to independent and identically distributed (i.i.d.) observations $\bm{Z}_1, \ldots, \bm{Z}_m$, each a complete realization of the random vector $\bm{Z}$ drawn from its joint distribution. This intact joint structure is crucial: it ensures that the observed sample reflects the true relationships in $\bm{Z}$, up to sampling error, and theoretical guarantees of statistical methods are expressed in terms of the sample size, to the extent that certain properties (e.g., consistency, central limit theorems) are attained as $m \to \infty$. Throughout this manuscript, we refer to data preserving this joint structure as \textit{coupled data}.

Unfortunately, fully coupled data are often available only for a limited fraction of the population, typically due to privacy, administrative constraints, or high collection costs. For instance, the U.S. Census Bureau’s American Community Survey (ACS) 2018 5-year Public Use Microdata Sample (PUMS) contains approximately $m \approx 6.13 \times 10^5$ individual records for Illinois. Each record includes multiple attributes---such as demographic and socioeconomic characteristics---jointly observed, allowing estimation of relationships among variables (e.g., conditional probabilities). Although this sample is relatively large, it represents only about 4.8\% of Illinois’s total population ($n \approx 1.28 \times 10^7$), leaving the vast majority unrepresented in the joint model. In contrast, marginal or low-dimensional aggregate statistics are often available at much larger scales. The ACS 2018 5-year Summary File, for example, provides population-level marginal distributions of each characteristic in Illinois, covering the full population. However, these data do not preserve the joint structure present in PUMS, leaving the dependencies between variables unobserved. We refer to such datasets—containing only marginal information without the coupling structure—as \textit{marginal data}.

In practice, marginal data are often regarded as ``auxiliary'' because they cannot be readily utilized by off-the-shelf statistical methods, which typically require coupled data as input. Consequently, researchers often restrict their analysis to the coupled dataset, justifying this choice with the assumption that the data are a representative sample of the population. While this justification is well-grounded, relying solely on coupled data inherently discards other available information. This observation leads to the overarching question of this paper: ``Can we leverage the information in marginal data to enhance the estimation of relationships among variables?''

The challenge of enhancing statistical inference by integrating detailed but small-scale coupled data with aggregate but large-scale marginal data is well-established across statistics and econometrics. In the survey sampling literature, this is primarily framed as \textit{calibration} \citep{Deville1992, Wu2001model, Wu2003, Montanari2005, Kim2010}, a method that re-weights survey samples so that weighted aggregates match known population totals. In the missing data literature, marginal data provide crucial information for imputing unobserved components of the coupled data, thereby enabling recovery of the full joint distribution \citep{Sadinle2019sequentially, Akande2021leveraging}. Similarly, in econometrics \citep{Imbens1994, Hellerstein1999, Nevo2003, Chen2008}, marginal data (often referred to as ``macro'' data) serve as moment restrictions within likelihood or generalized method of moments (GMM; \citealp{GMM1982}) frameworks to improve the precision of micro-level parameter estimates. More recently, researchers have utilized deep generative methods to create synthetic populations that mirror coupled data while respecting the structure of marginal data \citep{Qian2025, Sane2025}.

Crucially, although these approaches incorporate marginal data as constraints or as known ground truth, they differ from our objective in several important respects. First, calibration methods primarily aim to improve estimation of population means for selected variables, rather than to recover complex relationships among variables; moreover, some calibration techniques can produce negative or unstable weights, leading to counterintuitive or unreliable results. Second, in the missing data literature, marginal data are leveraged under specific parametric assumptions about the missingness mechanism and often require computationally intensive Bayesian inference to reconstruct the joint distribution, including careful justification of prior selection. Third, GMM approaches with marginal constraints, as commonly used in econometrics, generally target estimation of specific low-dimensional parameters (e.g., regression coefficients), rather than recovery of the full joint distribution. Finally, although recent deep generative methods share our goal of reconstructing lost coupling structure, their black-box nature often limits interpretability and obscures the underlying statistical principles. 

\subsection{Overview of Our Approach} Our approach complements the existing literature by providing a statistically principled framework that explicitly reconstructs the joint coupling structure without relying on black-box architectures or restrictive parametric assumptions.

We formally introduce the problem and our main contributions as follows. For a space $\mathcal{S}$, $\mathcal{P}(\mathcal{S})$ denotes the set of probability measures on $\mathcal{S}$. Given a probability $\pi^0 \in \mathcal{P}(\mathcal{X}_1 \times \dots \times \mathcal{X}_K)$, let $\left\{ \bm{Z}_j=(Z_{1j}, \dots, Z_{Kj}): j=1, \dots, m \right\}$ be the set of coupled data i.i.d. from $\pi^0$, of which marginal distributions are $\mu^i \in \mathcal{P}(\mathcal{X}_i)$ for $i=1,\dots,K$. Denoting by $\bm{\mu} := (\mu^1, \dots, \mu^K)$, the vector of $K$-marginals, it can be rewritten shortly as
\begin{equation}\label{eq: hypothesis}
    \pi^0 \in \Pi(\bm{\mu}):= \Pi(\mu^1, \dots, \mu^K),
\end{equation}
which is the set of joint distributions over $\mathcal{X}_1 \times \dots \times \mathcal{X}_K$ whose marginals are $\bm{\mu}$. 
On the other hand, there are marginal data $\{ X_{1j} : j=1, \dots, n \}, \dots, \{ X_{Kj} : j=1, \dots, n \}$. We assume that they are originally i.i.d. from the same $\pi^0$ but observed only marginally, hence $X_{ij} \sim \mu^i$ independently for $i=1,\dots, K$.

In the ACS application of Section \ref{sec:data}, each $\bm{Z}_j$ represents the $j$th individual's jointly observed record of $K=5$ socioeconomic characteristics: health insurance status ($Z_{1j}$), working-age indicator ($Z_{2j}$), sex ($Z_{3j}$), race ($Z_{4j}$), and education level ($Z_{5j}$), drawn from the PUMS dataset of size $m \approx 6.13 \times 10^5$. In addition, $\{ X_{ij} : j=1, \dots, n \}$ represents the population-level distribution of the $i$th characteristic. For example, $\{ X_{3j} \}$ is the marginal distribution of sex, obtained from the Summary File covering $n \approx 1.28 \times 10^7$ individuals in Illinois. Lastly, $\pi^0$ represents the unknown joint distribution of these characteristics within the Illinois population.

In practice, even when $n$ is much larger than $m$ (as in the ACS example above) and the marginal data contain information about $\pi^0$, they are often discarded because many existing statistical approaches require the input data to exhibit coupling structures. The goal is to reconstruct the \emph{coupling of marginal datasets by using a relatively small amount of coupled data}, so that the marginal data can be incorporated into the analysis and overall inference can be improved.

To formalize, define empirical measures based on the samples
\[
    \pi^0_m:= \frac{1}{m}\sum_{j=1}^m \delta_{\bm{Z}_j}, \quad \mu^i_n:= \frac{1}{n}\sum_{j=1}^n \delta_{X_{ij}} \text{ for $i=1, \dots, K$}, \quad \bm{\mu}_n=(\mu^1_n, \dots, \mu^K_n).
\]
Our main problem is then expressed as:
\begin{align}\label{eq:goal}
    \textit{optimally reconstruct $\pi^0$ from $\pi^0_m$ and $\bm{\mu}_n$.}
\end{align}
Although $\pi^0 \in \Pi(\bm{\mu})$, $\pi^0_m \not\in \Pi(\bm{\mu}_n)$ almost surely. However, it is still reasonable to expect that under the hypothesis \eqref{eq: hypothesis} there exists a $\hat{\pi} \in\Pi(\bm{\mu}_n)$ that is similar to $\pi^0_m$, thereby providing a good candidate for a coupling among $\mu^i$'s.

Our approach to \eqref{eq:goal} is to find a point in $\Pi(\bm{\mu}_n)$ which is closest to $\pi^0_m$ in the Wasserstein distance ($\mathcal{W}_p$; see \Cref{subsec: notation}); it is indeed a metric projection. Such an \emph{optimal transport projection} has been considered in another context by  \citet[Remark 4.2]{Eckstein_Nutz_2022} (and \cite{bayraktar_StabilitySampleComplexity_2025}) for a different purpose: they considered a particular type of such a projection called {\em shadow} and used it in studying the stability of Sinkhorn iterations. Our problem, instead, is statistical, and we use such an optimal transport projection as a method to extend the given sampled coupling to a coupling of the entire set of marginals, serving as an estimator of the hidden ground truth coupling; we view it as a key novelty of our present paper to frame marginal data integration as an OT projection problem. We use shadow method of \cite{Eckstein_Nutz_2022} to characterize our statistical estimator. In addition, we establish new sample-complexity bounds tailored to the two-sample structure $(m,n)$ of our problem, derive the limit distribution of the estimator under a finite-support assumption, and provide associated confidence sets based on the recent works \cite{Sommerfeld_Munk_inference_OT2017, limit_random_lp22,liu2023asymptotic}.

The projection is formulated  (see \citet[Remark 4.2]{Eckstein_Nutz_2022}) as 
\begin{equation}\label{eq: Wasserstein projection}
    \pi^0_m \longmapsto \hat{\pi} := \argmin_{\pi' \in \Pi(\bm{\mu}_n)} \mathcal{W}_p(\pi', \pi^0_m).
\end{equation}
In the ACS application, $\hat{\pi}$ is the estimated joint distribution of the five socioeconomic variables for the full Illinois population. It is constructed by finding the coupling in $\Pi(\bm{\mu}_n)$---the set of joint distributions whose marginals match those observed in the Summary File---that is closest to the empirical joint distribution $\pi^0_m$ from the smaller PUMS dataset in terms of the Wasserstein metric. In other words, $\hat{\pi}$ extends the coupling structure learned from the PUMS to the entire population while respecting the population-level marginal constraints provided by the Summary File.

Since $\mathcal{W}_p$ is continuous and $\Pi(\bm{\mu}_n)$ is weakly compact, \eqref{eq: Wasserstein projection} always attains a solution. Under the hypothesis \eqref{eq: hypothesis}, $\pi^0$ is the unique minimizer of the Wasserstein projection of $\pi^0$ onto $\Pi(\bm{\mu})$, therefore it is expected that $\hat{\pi}$ is close to $\pi^0$ as $\bm{\mu}_n$ is close to $\bm{\mu}$. (However, note that $\hat \pi$ is not unique in typical cases as the marginals $\mu^i_n$'s are discrete measures; see \Cref{rem:nonunique}.)

Main contributions of this work can be summarized as follows. 
\begin{enumerate}
    \item[(i)] A fully nonparametric approach for \eqref{eq:goal} by using \eqref{eq: Wasserstein projection}. In words, we frame the reconstruction problem as a minimization problem over Wasserstein space. Leveraging the geometry of this space, our framework provides a principled alternative to existing methods with several advantages. Unlike traditional survey calibration, it goes beyond simple re-weighting to reconstruct the full joint distribution geometrically, avoiding negative weights and the limitation of focusing only on population means. Compared with missing data methods, it does not require parametric assumptions about the missingness mechanism or the specification of subjective priors. Finally, unlike deep generative approaches, the Wasserstein-based projection replaces black-box optimization with a interpretable and mathematically grounded geometric problem.
    \item[(ii)] We establish several theoretically and practically significant properties of the estimator. In particular, we prove consistency and derive upper and lower bounds of its convergence rate, characterizing its finite-sample complexity (\Cref{theorem: Quantitative stability} and \Cref{cor: sample complexity} in \Cref{section: main results}). In the special case where the support of a probability measure is finite, we further describe the estimator's limiting distribution and corresponding confidence sets, enabling formal statistical inference (\Cref{thm: limit distribution of projection} in \Cref{section: finite support case}).
    \item[(iii)] We also provide concrete and efficient computation methods for $\hat{\pi}$ using \emph{shadow} and \emph{entropic regularization}, demonstrating that it is efficiently computable. These methods are illustrated in \Cref{sec:sim} and \Cref{sec:data} through simulation studies and real-world applications.
\end{enumerate}


\subsection{Organization}
We will briefly discuss the background of optimal transport and prerequisite notation in \Cref{sec:preplim}. In \Cref{section: main results}, we will introduce the optimal transport projection framework and its theoretical results in detail. We continue to discuss the limit distribution and the confidence sets of the proposed estimator for finite support case in \Cref{section: finite support case}. In \Cref{sec:sim} and \Cref{sec:data}, synthetic and real data analysis will support the practicality of our method. Lastly, we will conclude this paper in \Cref{sec: conclusion}. More details on the proofs and simulation results will be discussed in \Cref{sec:appendix:deferred proofs} and \Cref{sec:appendix:additional sim}, respectively.

\section{Preliminaries}\label{sec:preplim}

\subsection{Review: Optimal Transport}

In the present paper, we propose an optimal transport (OT) method to reconstruct couplings from partial data. OT was proposed by \citet{monge1781memoire} and developed by \citet{kantorovich1942translocation, kantorovich1948problem}, who developed the Kantorovich relaxation, for which he received the Nobel Prize in economic sciences. Since then, OT has interwoven partial differential equations, geometry, and probability~\cite{Brenier91, mccann1994convexity, Caffarelli_contraction, Benamou_brener2000, Otto2000GeneralizationOA, otto2001geometry, Lott2004RicciCF, Figalli2010Invent, Kim2010JEMS, Figall2011duke, Caffarelli2013Reine, Figalli2015CMP, GOZLAN20173327}. Recently, OT plays a central role in data sciences: optimization of mean field two-layer neural networks~\cite{mei2018mean, chizat2018global, sirignano2020mean, Rotskoff2022CPAM}, diffusion models~\cite{de2021diffusion, wang2021deep, hamdouche2023generative}, variational inference~\cite{lambert2022variational, diao2023forward}, and adversarial training~\cite{pydi2021the, Trillos2020AdversarialCN}.

Late 18th the Monge problem \cite{monge1781memoire} was introduced to find a transport map to minimize the cost with the pushforward condition. Later, Kantorovich studied its convex relaxation, in which the solution space is extended to the set of couplings \cite{kantorovich1942translocation, kantorovich1948problem}.

Here, we focus on a special case of the Kantorovich problem, which is obtained when a Polish space $\mathcal{X}=\mathcal{Y}$ and $c=d^p$ where $d$ is a metric on $\mathcal{X}$. The \emph{$p$-Wasserstein distance} for $1 \leq p \leq \infty$, or \emph{Kantorovich-Rubinstein distance}, is defined over $\mathcal{P}_p(\mathcal{X})$, the set of probability measures with finite $p$-th order moment, as follows:
\[
\begin{aligned}
    {W}_p(\mu, \nu)& := \inf_{\gamma \in \Pi(\mu, \nu)} \left( \int_{\mathcal{X} \times \mathcal{X}} d(x,y)^p d\gamma(x,y) \right)^{\frac{1}{p}} \text{ for $1 \leq p < \infty$},\\
    W_\infty(\mu, \nu) &:= \inf_{\gamma \in \Pi(\mu, \nu)} \esssup_{(X,Y) \sim \gamma} d(X,Y).
\end{aligned}
\]
It is well known that $\mathcal{P}_p(\mathcal{X})$ equipped with $W_p$ is a Polish space: in particular, $(\mathcal{P}_2(\mathcal{X}), {W}_2)$ exhibits a Riemannian structure \cite{otto2001geometry}.

An important ingredient used in this work is the recent progress of statistical optimal transport, which, roughly speaking, aims at understanding the behaviour of $W_p(\mu_n,\mu)$ where $\mu_n$ is an empirical distribution of i.i.d. samples drawn from $\mu$. Since \citet{dudley69} paved this field, there have been numerous papers in this direction \cite{MR2280433, MR2861675, Dereich_Scheutzow_Schottstedt2013, MR3189084, NF_AG_rate_Wasserstein, MR4359822, merigot2021non, MR4255123, MR4441130}.

An interesting extension of OT is \emph{multimarginal optimal transport} (MOT). Given $\mu^i \in \mathcal{P}(\mathcal{X}_i)$ for $i=1, \dots, K$ and $c : \mathcal{X}_1 \times \dots \times \mathcal{X}_K \to \mathbb{R} $, MOT is defined as
\[
    \min_{\pi \in \Pi(\mu^1,\dots, \mu^K)} \int c(x_1, \dots, x_K) d\pi(x_1, \dots ,x_K).
\]
Since \citet{Gangbo1998OptimalMF} studied the multimarginal Monge problem, there have been a lot of subsequent papers establishing its geometric and analytic aspects \cite{Kim2013MultimarginalOT, KimPass-SIAM2014, MR3423275, MR3482267, MR4809472, pass2025dynamicalformulationmultimarginaloptimal}. MOT problems have also been found in other fields as well: density functional theory in physics \cite{seidl2007strictly, buttazzo2012optimal, MR3020313, mendl2013kantorovich, MR3314838}, theoretical economics \cite{MR2110613, MR2564439, Carlier2010MatchingFT}, Wasserstein barycenters \cite{MR2801182, cuturi2014fast, benamou2015iterative, MR3423268, srivastava2018scalable, delon2020wasserstein} and adversarial training in machine learning \cite{jakwang2023JMLR, jakwang_2024existence, jakwang_2024optimal}.

\subsection{Notation}\label{subsec: notation}
Let $\mathcal{Z} := \mathcal{X}_1 \times \dots \times \mathcal{X}_K$, and $\bm{x}:=(x_1, \dots, x_K) \in \mathcal{Z}$. Recall $\bm{\mu} := (\mu^1, \dots, \mu^K)$ be the vector of $K$-marginals, and similarly $\bm{\mu}_n := (\mu^1_n, \dots, \mu^K_n)$ where $\mu^i_n$'s are empirical measures of $\mu^i$'s. For simplicity, we use $(\otimes \bm{\mu}):=\mu^1 \otimes \dots \otimes \mu^K$, $\Pi(\bm{\mu}):= \Pi(\mu^1, \dots, \mu^K)$ and $\Pi(\bm{\mu}, \pi^0):=\Pi(\mu^1, \dots, \mu^K, \pi^0)$; here, $\Pi(\mu^1, \dots, \mu^K)$ denotes the set of probability measures with marginals $\mu^1, \dots, \mu^K$.

Given metric spaces $(\mathcal{X}_i, d_i)$ for $i=1, \dots, K$, define the \emph{separable} product metric $d_{\mathcal{Z},p}$ over $\mathcal{Z}$ as 
\begin{equation}\label{eq: metric over Z}
    d_{\mathcal{Z},p}(\bm{x}, \bm{z}):=
    \begin{cases}
        \left( \sum_{i=1}^K d_i(x_i, z_i)^p\right)^\frac{1}{p} &\text{ for $p \in[1,\infty)$},\\
        \max_{i=1, \dots, K} d_i(x_i, z_i)  &\text{ for $p =\infty$}.
    \end{cases} 
\end{equation}
Unless otherwise noted, the $p$-Wasserstein distances on $\mathcal{X}_i$ and $\mathcal{Z}$ are defined with respect to $d_i$ and $d_{\mathcal{Z},p}$, respectively. For notational convenience and clarity, we use $\mathcal{W}_p(\pi, \pi')$ to the $p$-Wasserstein distance on $\mathcal{Z}$ only, and
\[
    W_p(\bm{\mu}, \bm{\nu}):= \left( \sum_{i=1}^K W^p_p(\mu^i, \nu^i) \right)^{\frac{1}{p}} \text{for $p \in [1,\infty)$}, \quad W_\infty(\bm{\mu}, \bm{\nu}):= \max_{i=1, \dots, K} W_\infty(\mu^i, \nu^i).
\]

On the space $\mathcal{Z} \times \mathcal{Z}$, $\mathrm{P}_{x_i}$ and $\mathrm{P}_{z_i}$ denote  projections onto $\mathcal{X}_i$ of the first $\mathcal{Z}$, and that of the second, respectively. Similarly, we use $\mathrm{P}_{\bm{x}}$ and $\mathrm{P}_{\bm{z}}$ to denote projections onto the first $\mathcal{Z}$ and the second, respectively.

\section{Main results
}\label{section: main results}
\subsection{Optimal transport projection}

Since $\pi^0 \in \Pi(\bm{\mu})$, note that $\pi^0$ is the unique minimizer for the Wasserstein projection onto $\Pi(\bm{\mu})$, that is,
\[
    \pi^0 =\argmin_{\pi' \in \Pi(\bm{\mu})} \mathcal{W}_p(\pi', \pi^0). 
\]
It is expected that a solution $\hat{\pi}$ for \eqref{eq: Wasserstein projection}, even though it is nonunique in typical cases (see \cref{rem:nonunique}), will converge to $\pi^0$ as $m$ and $n$, the numbers of samples, grow. Not only is this true, but also the upper and lower bounds of the $p$-Wasserstein distance between $\hat{\pi}$ and $\pi^0$ can be obtained as follows as a simple consequence of \cite[Lemma 3.2]{Eckstein_Nutz_2022}. Bounds are described in terms of the distances between the true and empirical distributions only.

\begin{theorem}[Stability]\label{theorem: Quantitative stability}
Let $\nu^i_m$ be the $i$-th marginal of $\pi^0_m$ for $i=1, \dots, K$, and $\bm{\nu}_m:=(\nu^1_m, \dots, \nu^K_m)$. Then
\begin{equation*}
    W_p(\bm{\mu}, \bm{\mu}_n) \leq \mathcal{W}_p(\hat{\pi}, \pi^0) \leq W_p(\bm{\mu}_n, \bm{\nu}_m)  +  \mathcal{W}_p(\pi^0_m, \pi^0).
\end{equation*}

\begin{proof}
Since $\hat\pi \in \Pi(\bm{\mu}_n)$, a lower bound follows as
\begin{align*}
    \mathcal{W}_p(\hat{\pi}, \pi^0) \geq \min_{\pi \in \Pi(\bm{\mu}_n)} \mathcal{W}_p(\pi, \pi^0) =  W_p(\bm{\mu}, \bm{\mu}_n),
\end{align*}
where the last equation is by \cite[Lemma 3.2]{Eckstein_Nutz_2022} and \eqref{eq: hypothesis}.

An upper bound is attained by \cite[Lemma 3.2]{Eckstein_Nutz_2022} as
\begin{align*}
    \mathcal{W}_p(\hat{\pi}, \pi^0) \leq  \mathcal{W}_p(\hat{\pi}, \pi^0_m) + \mathcal{W}_p(\pi^0_m, \pi^0)&= \min_{\pi \in \Pi(\bm{\mu}_n)} \mathcal{W}_p(\pi, \pi^0_m) + \mathcal{W}_p(\pi^0_m, \pi^0)\\
    &=W_p(\bm{\mu}_n, \bm{\nu}_m) +  \mathcal{W}_p(\pi^0_m, \pi^0).
\end{align*}
\end{proof}
\end{theorem}

\begin{remark}[Statistical interpretation of the stability bounds]
The two terms on the right-hand side of the upper bound in Theorem \ref{theorem: Quantitative stability} have distinct statistical interpretations. The term $W_p(\bm{\mu}_n, \bm{\nu}_m)$ measures how well the marginals of the small coupled data ($\bm{\nu}_m$) match the marginals of the large decoupled data ($\bm{\mu}_n$); this term is small when the coupled sample is representative of the population marginals. The term $\mathcal{W}_p(\pi^0_m, \pi^0)$ measures the quality of the coupled empirical measure as an approximation of the true joint distribution $\pi^0$; this is the standard sampling error from the coupled data alone, which vanishes as $m$ grows. The lower bound $W_p(\bm{\mu}, \bm{\mu}_n)$ shows that the estimator cannot be better than the marginal convergence rate determined by the large dataset of size $n$. Together, these bounds confirm qualitatively that the proposed estimator benefits from both data sources.
\end{remark}


Combined with \cite[Theorem 1]{NF_AG_rate_Wasserstein}, \Cref{theorem: Quantitative stability} yields the upper and lower bounds of the sample complexity of $\hat{\pi}$. Here we assume that all dimensions are sufficiently larger than $p$ for simplicity.

\begin{corollary}[Sample complexity] \label{cor: sample complexity}
Assume that $\mathcal{X}_i= \mathbb{R}^{d_i}$ and $d_i >2p$ for $1 \leq i \leq K$, also that there is some $q > \frac{p \sum d_i }{\sum d_i - p}$ such that the $q$-th moment of $\pi^0$ (hence all $\mu^i$'s) is finite. Then there are constants $C_1, C_2, C_3$ which do not depend on $n,m$ but on $p, \sum d_i, q$ and $q$-th moments of each of the measures such that
\[
    C_1\left( \sum_{i=1}^K n^{-\frac{p}{d_i}} \right)^\frac{1}{p} \leq \mathbb{E} \mathcal{W}_p(\hat{\pi}, \pi^0)  \leq  C_2 m^{-\frac{1}{d_1 + \dots + d_K}} + C_3 \left( \sum_{i=1}^K n^{-\frac{p}{d_i}} \right)^\frac{1}{p}.
\]    
\end{corollary}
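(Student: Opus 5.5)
The plan is to take expectations in the two-sided bound of \Cref{theorem: Quantitative stability} and control each Wasserstein term with the moment-type rates of \cite[Theorem 1]{NF_AG_rate_Wasserstein}. First, though, I would reduce the upper bound so that only two kinds of quantities appear: the marginal fluctuations $W_p(\mu^i,\mu^i_n)$ and the joint fluctuation $\mathcal{W}_p(\pi^0_m,\pi^0)$.

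\textbf{Step 1 (reduction of the upper bound).} By the triangle inequality for the $\ell^p$-combined distance on vectors of marginals,
\[
W_p(\bm{\mu}_n,\bm{\nu}_m)\le W_p(\bm{\mu}_n,\bm{\mu})+W_p(\bm{\mu},\bm{\nu}_m).
\]
I then claim that $W_p(\bm{\mu},\bm{\nu}_m)\le \mathcal{W}_p(\pi^0,\pi^0_m)$. This holds because the metric \eqref{eq: metric over Z} is separable: take an optimal $\gamma\in\Pi(\pi^0,\pi^0_m)$. Its pushforward under $(\mathrm{P}_{x_i},\mathrm{P}_{z_i})$ couples $\mu^i$ with $\nu^i_m$, so
\[
\sum_i W_p^p(\mu^i,\nu^i_m)\le \int \sum_i d_i(x_i,z_i)^p\,d\gamma=\mathcal{W}_p^p(\pi^0,\pi^0_m).
\]
Together with \Cref{theorem: Quantitative stability} this gives
\[
W_p(\bm{\mu},\bm{\mu}_n)\le \mathcal{W}_p(\hat\pi,\pi^0)\le W_p(\bm{\mu},\bm{\mu}_n)+2\,\mathcal{W}_p(\pi^0_m,\pi^0).
\]

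\textbf{Step 2 (upper rates).} By Jensen's inequality,
\[
\mathbb{E}\,\mathcal{W}_p(\pi^0_m,\pi^0)\le (\mathbb{E}\,\mathcal{W}_p^p(\pi^0_m,\pi^0))^{1/p}.
\]
Since $\mathcal{Z}=\mathbb{R}^{\sum d_i}$ and $d_{\mathcal{Z},p}$ is equivalent to the Euclidean metric up to constants depending on $p$ and $K$, \cite[Theorem 1]{NF_AG_rate_Wasserstein} applies. With $D=\sum d_i>2p$ and the given $q$-th moment, it yields $\mathbb{E}\,\mathcal{W}_p^p\lesssim m^{-p/D}+m^{-(q-p)/q}$. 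The condition $q>pD/(D-p)$ is exactly what makes the second term of lower order, so this term contributes $C_2 m^{-1/D}$. Similarly,
\[
\mathbb{E}\,W_p(\bm{\mu},\bm{\mu}_n)\le \Big(\sum_i \mathbb{E}\,W_p^p(\mu^i,\mu^i_n)\Big)^{1/p},
\]
and applying the same theorem in each $\mathbb{R}^{d_i}$ with $d_i>2p$ gives $\mathbb{E}\,W_p^p(\mu^i,\mu^i_n)\lesssim n^{-p/d_i}+n^{-(q-p)/q}$.

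\textbf{Step 3 (lower rate).} For the lower bound I would take expectations of $W_p(\bm{\mu},\bm{\mu}_n)\ge (K^{-1})^{\cdot}\dots$; more precisely I would use $W_p(\bm{\mu},\bm{\mu}_n)\ge K^{-1}\sum_i W_p(\mu^i,\mu^i_n)$ up to constants. I would then invoke the standard quantization lower bound: an $n$-atom measure is at $W_1$-distance at least $c\,n^{-1/d_i}$ from any measure with a nontrivial absolutely continuous part. Combined with $W_p\ge W_1$, this gives $\mathbb{E}\,W_p(\mu^i,\mu^i_n)\gtrsim n^{-1/d_i}$, and summing yields the left-hand side $C_1(\sum_i n^{-p/d_i})^{1/p}$.

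\textbf{Main obstacles.} I expect two delicate points. First, the lower bound genuinely needs nondegeneracy: for Dirac marginals $W_p(\mu^i,\mu^i_n)=0$. So $C_1$ must depend on the $\mu^i$ (e.g.\ on the mass of their absolutely continuous parts), or an implicit assumption of this kind must be made explicit. Second, the tail terms $n^{-(q-p)/q}$ in Step 2 must be dominated by $n^{-p/d_i}$. The stated moment condition is calibrated to $D=\sum d_i$, whereas each individual $d_i$ a priori needs $q>pd_i/(d_i-p)$, which is a stronger requirement. I would first try to check this requirement carefully. If it fails, I would either strengthen the moment hypothesis or absorb the discrepancy into $C_3$ by the convention that the dimensions are ``sufficiently larger than $p$'', as the text preceding the corollary suggests.
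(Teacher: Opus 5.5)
Your upper bound follows the paper's route. The paper's whole justification is the sentence that \Cref{theorem: Quantitative stability} combined with \cite[Theorem 1]{NF_AG_rate_Wasserstein} yields both bounds. Your Step~1 (Minkowski, plus $W_p(\bm{\mu},\bm{\nu}_m)\le\mathcal{W}_p(\pi^0,\pi^0_m)$ from the separability of $d_{\mathcal{Z},p}$) and Step~2 (Jensen, then Fournier--Guillin on $\mathbb{R}^D$, $D=\sum_i d_i$, with $q>pD/(D-p)$ giving $m^{-(q-p)/q}\le m^{-p/D}$) supply exactly what that sentence leaves out.

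For the lower bound you genuinely depart from the paper, and you have to. Theorem~1 of \cite{NF_AG_rate_Wasserstein} is an upper bound only, so it cannot produce $C_1$. Your quantization estimate is the right tool: every $n$-point set $\{y_j\}$ satisfies $\int\min_j|x-y_j|\,d\mu^i(x)\ge c(\mu^i)\,n^{-1/d_i}$ when $\mu^i$ has a nontrivial absolutely continuous part, so the bound even holds pathwise. Together with $W_p\ge W_1$ and your comparison of $\ell^p$ and $\ell^1$ sums, it gives the left inequality. Your caveat is also correct: with $C_1>0$ depending only on $p$, $\sum_i d_i$, $q$ and moments, the stated lower bound is false. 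Dirac marginals give $\hat\pi=\pi^0$. For finitely supported marginals with $m\ge n$, one has $\mathbb{E}\mathcal{W}_p(\hat\pi,\pi^0)=O(n^{-1/(2p)})=o(n^{-1/d_{\max}})$, where $d_{\max}:=\max_i d_i>2p$. So a nondegeneracy hypothesis and a $\bm{\mu}$-dependent $C_1$ are unavoidable.

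Your second concern is also a real defect of the statement, but only one of your remedies works. The precise requirement is $q>p\,d_{\max}/(d_{\max}-p)$, not $q>pd_i/(d_i-p)$ for every $i$. Each marginal tail term $n^{-(q-p)/q}$ only has to be dominated by the slowest rate $n^{-p/d_{\max}}$. Still, $t\mapsto pt/(t-p)$ is decreasing and $d_{\max}<D$ for $K\ge 2$, so this threshold exceeds $pD/(D-p)$ however large the dimensions are. The ``sufficiently large dimensions'' convention therefore does not close the gap.

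Nor can the discrepancy be absorbed into $C_3$: it is a different power of $n$, and that power is attained. Take $\mu^1$ with $\mu^1(|x|>r)\approx r^{-q}(\log r)^{-2}$, so its $q$-th moment is finite, and $pD/(D-p)<q<p\,d_{\max}/(d_{\max}-p)$. Choose $R$ with $n\mu^1(|x|>R)=1/2$. On the event, of probability at least $1/2$, that no sample leaves the ball of radius $R$, the mass of $\mu^1$ outside radius $2R$ must travel at least $R$. Hence $\mathbb{E}W_p(\mu^1_n,\mu^1)\gtrsim n^{-(q-p)/(pq)}$ up to logarithms, which is much larger than $n^{-1/d_{\max}}$. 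Now choose $m$ with $m^{-1/D}\le n^{-1/d_{\max}}$; the lower bound in \Cref{theorem: Quantitative stability} then contradicts the claimed upper bound.

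So you must strengthen the hypothesis, e.g.\ to $q>p\,d_{\max}/(d_{\max}-p)$ (implied by $q\ge 2p$, since $d_i>2p$), or add $n^{-(q-p)/(pq)}$ to the right-hand side. With that change, your Steps 1--2 prove the upper bound.
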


{\begin{remark}[Curse of dimensionality]
The corollary provides the quantitative estimation error by the two sample sizes $m$ and $n$. Folklore in the OT community says the rate of convergence of empirical distributions in the Wasserstein distance suffers the curse of dimensionality: $O( n^{-\frac{1}{d}} )$ where $d$ is the ambient dimension. Since the marginal terms $n^{-p/d_i}$ are negligible when $n \gg m$, the overall error is dominated by the coupled data term $m^{-1/(d_1 + \cdots + d_K)}$. This reconfirms that the estimation error exhibits the curse of dimensionality, which conforms with the fact that estimating the joint distribution becomes increasingly difficult in higher dimensions.

On the other hand, if the support of the measure enjoys intrinsic low dimensionality, the exponents can be improved. \citet{JW_FB_sample_rates} proved the optimal rate of convergence of empirical distributions in the Wasserstein distance is $O(n^{-\frac{1}{k}})$ where $k$ relates to the \emph{Wasserstein dimension} provided that its support is compact. Thus, the sample complexity of $\hat{\pi}$ can be improved if $\pi^0$ possesses the intrinsic low dimensionality.
\end{remark}
}

The next natural question is how to compute $\hat{\pi}$. Rewriting \eqref{eq: Wasserstein projection}, it is equivalent to the following MOT problem:
\begin{equation}\label{eq: MOT formulation of Wasserstein projection}
\begin{aligned}
    \min_{\gamma \in \Pi(\bm{\mu}_n, \pi^0_m)} & \left\{ \int  d_{\mathcal{Z},p}(\boldsymbol{x}, \boldsymbol{z})^p d\gamma \right\}^{\frac{1}{p}} \text{ for $p \in [1,\infty)$},\\
    \min_{\gamma \in \Pi(\bm{\mu}_n, \pi^0_m)} & \max_{i=1, \dots, K} \esssup_{(X_i, Z_i) \sim \gamma^i} d_i(X_i, Z_i) \text{ for $p=\infty$}
\end{aligned}
\end{equation}
where the marginal constraints should be understood as $(\mathrm{P}_{\bm{x}})_\#(\gamma) = \pi'$ for some $\pi' \in \Pi(\bm{\mu}_n)$ and $(\mathrm{P}_{\bm{z}})_\#(\gamma) = \pi^0_m$. Note that $\gamma^i= (\mathrm{P}_{x_i}, \mathrm{P}_{z_i})_\# \gamma \in \Pi(\mu^i_n, \nu^i_m)$. Denoting an optimal multimarginal coupling of \eqref{eq: MOT formulation of Wasserstein projection} by $\hat{\gamma}$, an  optimal solution $\hat{\pi}$ of \eqref{eq: Wasserstein projection} is obtained from $\hat{\gamma}$ by $\hat{\pi} := (\mathrm{P}_{\boldsymbol{x}})_\# (\hat{\gamma})$.

\begin{remark}\label{rem:nonunique}
Note for instance the above explains nonuniqueness of $\hat \pi$ as an optimizer $\hat \gamma$ of the MOT problem \eqref{eq: MOT formulation of Wasserstein projection} is generally not unique.
\end{remark}

OT (linear programming in general) is solvable in $O(n^3)$ computational complexity by the simplex method or the Hungarian method. In particular, \eqref{eq: MOT formulation of Wasserstein projection} requires $O(n^{3(K+1)})$ complexity. Note that solving the generic MOT is NP-hard as the number of marginals increases \cite{barycenter_NPhard2022}.

Using the separability of $d_{\mathcal{Z}, p}$, however, $\hat{\pi}$ can be computed efficiently based on \emph{shadow} as proposed by \citet{Eckstein_Nutz_2022}:

\begin{theorem}[Shadow \cite{Eckstein_Nutz_2022}]\label{thm: shadow}
Let $\hat{\gamma}_i$ be an optimal coupling for $W_p(\mu^i_n, \nu^i_m)$, which can be written as $\hat{\gamma}_i(dx_i, dz_i) = \nu_m^i(dz_i) \hat{\kappa}^i(dx_i |z_i)$ by disintegration formula. Then
\begin{equation*}
    \hat{\gamma}(d \boldsymbol{x}, d\boldsymbol{z}) = \pi^0_m(d\boldsymbol{z}) \kappa(d\boldsymbol{x} | \boldsymbol{z})
\end{equation*}
is optimal for \eqref{eq: MOT formulation of Wasserstein projection} where $\hat{\kappa}(d\boldsymbol{x} | \boldsymbol{z}):=  \kappa_1(dx_1 |z_1) \otimes \dots \otimes \kappa_K(dx_K |z_K)$. In particular, an optimal $\hat{\pi}$ (the shadow of $\pi^0_m$ onto $\Pi(\bm{\mu}_n)$) for \eqref{eq: Wasserstein projection} is obtained by
\[
    \hat{\pi}(d\boldsymbol{x}) = \int_{\mathcal{Z}} \pi^0_m(d\boldsymbol{z}) \hat{\kappa}(d\boldsymbol{x} | \boldsymbol{z}).
\] 
Furthermore, $\eqref{eq: Wasserstein projection}=\eqref{eq: MOT formulation of Wasserstein projection}=
    W_p(\bm{\mu}_n, \bm{\nu}_m)$.
\end{theorem}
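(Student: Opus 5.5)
The plan is to prove a lower bound and then exhibit a coupling $\hat\gamma$ that attains it. Throughout, take $p\in[1,\infty)$; the case $p=\infty$ is handled the same way with sums replaced by maxima.

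\textbf{Lower bound.} Take any feasible $\gamma$ for \eqref{eq: MOT formulation of Wasserstein projection}, so that $(\mathrm{P}_{\bm x})_\#\gamma\in\Pi(\bm\mu_n)$ and $(\mathrm{P}_{\bm z})_\#\gamma=\pi^0_m$. Its bivariate marginal $\gamma^i=(\mathrm{P}_{x_i},\mathrm{P}_{z_i})_\#\gamma$ then lies in $\Pi(\mu^i_n,\nu^i_m)$. By separability of $d_{\mathcal Z,p}$,
\[
\int d_{\mathcal Z,p}(\bm x,\bm z)^p\,d\gamma=\sum_{i=1}^K\int d_i(x_i,z_i)^p\,d\gamma^i\ \ge\ \sum_{i=1}^K W_p^p(\mu^i_n,\nu^i_m)=W_p^p(\bm\mu_n,\bm\nu_m).
\]
This gives $\eqref{eq: MOT formulation of Wasserstein projection}\ge W_p(\bm\mu_n,\bm\nu_m)$. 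For $p=\infty$ the analogous statement is that $\max_i\esssup_{\gamma^i} d_i\ge\max_i W_\infty(\mu^i_n,\nu^i_m)$.

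\textbf{Equivalence of the two problems.} Next I would record that \eqref{eq: Wasserstein projection} and \eqref{eq: MOT formulation of Wasserstein projection} have the same value. Indeed, \eqref{eq: MOT formulation of Wasserstein projection} is the joint minimization over $\pi'\in\Pi(\bm\mu_n)$ and over $\gamma\in\Pi(\pi',\pi^0_m)$. Minimizing in $\gamma$ for fixed $\pi'$ yields $\mathcal W_p(\pi',\pi^0_m)$, so the two infima coincide, and the $\bm x$-marginal of an optimal $\gamma$ is an optimal $\pi'$.

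\textbf{Achievability via the shadow.} Disintegrate the optimal plans as $\hat\gamma_i=\nu^i_m(dz_i)\hat\kappa^i(dx_i|z_i)$, and define $\hat\gamma=\pi^0_m(d\bm z)\,\hat\kappa^1(dx_1|z_1)\otimes\cdots\otimes\hat\kappa^K(dx_K|z_K)$. Three checks are needed.
\begin{enumerate}[label=(\roman*)]
\item The $\bm z$-marginal of $\hat\gamma$ is $\pi^0_m$; this is immediate because the kernels are probability measures.
\item The $(x_i,z_i)$-marginal is $\hat\gamma_i$. To see this, integrate out the $x_j$ with $j\neq i$, each kernel contributing total mass one. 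What remains is $\pi^0_m(d\bm z)\hat\kappa^i(dx_i|z_i)$, and integrating out $z_j$ for $j\ne i$ gives $\nu^i_m(dz_i)\hat\kappa^i(dx_i|z_i)=\hat\gamma_i$.
\item By (ii), the $x_i$-marginal is $\mu^i_n$, so $(\mathrm{P}_{\bm x})_\#\hat\gamma=\hat\pi\in\Pi(\bm\mu_n)$ and $\hat\gamma$ is feasible.
\end{enumerate}
The separable cost computation above, run with $\gamma=\hat\gamma$, now gives exactly $\sum_i W_p^p(\mu^i_n,\nu^i_m)$, so the lower bound is attained. The same holds for $p=\infty$, since the esssup of $d_i$ under $\gamma^i=\hat\gamma_i$ is $W_\infty(\mu^i_n,\nu^i_m)$. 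Hence $\hat\gamma$ is optimal, $\hat\pi=(\mathrm{P}_{\bm x})_\#\hat\gamma=\int\pi^0_m(d\bm z)\hat\kappa(d\bm x|\bm z)$ is optimal for \eqref{eq: Wasserstein projection}, and all three values coincide with $W_p(\bm\mu_n,\bm\nu_m)$.

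\textbf{Main obstacle.} Conceptually the argument is routine. The only delicate points are measurability of the product kernel and the existence of optimal $\hat\gamma_i$ and their disintegrations. Existence of the $\hat\gamma_i$ is standard (lower semicontinuous cost, compact $\Pi(\mu^i_n,\nu^i_m)$), and disintegration holds on Polish spaces. Here all measures are finitely supported, so the kernels are simply finite conditional probability tables and measurability is trivial. The main care goes into verifying the marginal identity in (ii) rigorously via Fubini.
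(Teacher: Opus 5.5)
Your proposal is correct and follows the same route as the paper, which simply cites \cite[Lemma 3.2]{Eckstein_Nutz_2022}. That lemma is proved exactly as you do it: restrict any feasible plan to its $(x_i,z_i)$-marginals and use separability of $d_{\mathcal{Z},p}^p$ to get the lower bound $W_p(\bm{\mu}_n,\bm{\nu}_m)$, then glue the pairwise optimal plans along $\pi^0_m$ to attain it.
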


\begin{proof}
See \cite[Lemma 3.2]{Eckstein_Nutz_2022}.
\end{proof}

\begin{remark}\label{rem:nonunique-projection}
Note that optimal solutions to the optimal transport projection problem \eqref{eq: Wasserstein projection}, equivalently \eqref{eq: MOT formulation of Wasserstein projection}, are not necessarily given by a shadow; 
see \citet[Remark 4.2]{Eckstein_Nutz_2022}.
\end{remark}


The computational benefit of shadow is straightforward. Each $\hat{\gamma}_i$ can be computed with $O(n^3)$ computational cost. Overall, the total computational cost of $\hat{\gamma}$ is $O(Kn^3)$, which is much less than the $O(n^{3(K+1)})$ complexity of \eqref{eq: MOT formulation of Wasserstein projection}. In addition to this efficiency, parallel computation for $\hat{\gamma}$ is applicable in this case, which improves the speed of the computation.

\subsection{Entropic regularization}
There is an estimator with almost linear time complexity to approximate OT, the so-called \emph{entropic regularization} method, proposed by \citet{cuturi2013sinkhorn}. This framework allows one to compute the approximate solution in almost linear time \cite{altschuler2017near, lin2022complexity}.

Fix $\eta > 0$, called the entropic parameter. 
For our MOT problem \eqref{eq: MOT formulation of Wasserstein projection}, the usual entropic regularized version is defined as follows:
\begin{align*}
    \min_{\gamma \in \Pi(\bm{\mu}_n, \pi^0_m)} &\int  d_{\mathcal{Z},p}(\boldsymbol{x}, \boldsymbol{z})^p d\gamma  + \eta \mathrm{KL} \left( \gamma \, \Vert\,  (\otimes \bm{\mu}_n) \otimes \pi^0_m \right) \text{ for $p \in [1,\infty)$},\\ 
    \min_{\gamma \in \Pi(\bm{\mu}_n, \pi^0_m)} &\max_{i=1, \dots, K} \esssup_{(X_i, Z_i) \sim \gamma} d_i(X_i, Z_i) + \eta \mathrm{KL} \left( \gamma \, \Vert\,  (\otimes \bm{\mu}_n) \otimes \pi^0_m \right) \text{ for $p=\infty$}.
\end{align*}
Due to the strict convexity of the KL divergence, the entropic MOT provides a unique optimal solution. Also, importantly, the entropic MOT boosts the speed of computation. However, this approach does not take advantage of the separability of the cost $d_{\mathcal{Z},p}(\boldsymbol{x}, \boldsymbol{z})^p$.

\begin{remark}
The generic $K$ marginals MOT requires $O( \eta^{-2}K^3 n^{K} \log n)$ computational cost if each of marginals has $O(n)$ points \cite[Theorem 16]{lin2022complexity}.    
\end{remark}

Let us introduce \emph{entropic shadow}, which allows us to leverage the separability structure significantly.
Consider the following entropic regularized version of \eqref{eq: MOT formulation of Wasserstein projection}: 
\begin{equation}\label{eq: entropic MOT formulation of Wasserstein projection}
\begin{aligned}
    \min_{\gamma \in \Pi(\bm{\mu}_n, \pi^0_m)} & \sum_{i=1}^K \int d_i(x_i, z_i)^p d\gamma^i + \eta \mathrm{KL}(\gamma^i \, \Vert\, \mu^i_n \otimes \nu^i_m) \text{ for $p \in [1,\infty)$},\\
    \min_{\gamma \in \Pi(\bm{\mu}_n, \pi^0_m)} & \max_{i=1, \dots, K} \esssup_{(X_i, Z_i) \sim \gamma^i} d_i(X_i, Z_i) + \eta \mathrm{KL}(\gamma^i \, \Vert\, \mu^i_n \otimes \nu^i_m) \text{ for $p=\infty$}.
\end{aligned}
\end{equation}
Denoting by $\hat{\gamma}^{i,\eta} \in \Pi(\mu^i_n, \nu^i_m)$ the (unique) optimal entropic coupling for
\[
    \int d_i(x_i, z_i)^p d\gamma^i + \eta \mathrm{KL}(\gamma^i \, \Vert\, \mu^i_n \otimes \nu^i_m) \text{ or } \esssup_{(X_i, Z_i) \sim \gamma^i} d_i(X_i, Z_i) + \eta \mathrm{KL}(\gamma^i \, \Vert\, \mu^i_n \otimes \nu^i_m),
\]
define the kernel $\hat{\kappa}^{i,\eta}(dx_i |z_i)$ such that $\hat{\kappa}^{i,\eta}(dx_i |z_i)\nu^i(dz_i) = \hat{\gamma}^{i, \eta}(dx_i, d z_i)$. Similarly to \Cref{thm: shadow}, the optimal entropic MOT coupling for \eqref{eq: entropic MOT formulation of Wasserstein projection}, denoted by $\hat{\gamma}^\eta$, is obtained as
\[
    \hat{\gamma}^\eta (d\boldsymbol{x}, d \boldsymbol{z} ) = \pi^0_m(d \boldsymbol{z}) \hat{\kappa}^\eta(d \boldsymbol{x} |\boldsymbol{z})
\]
where $\hat{\kappa}^\eta = \hat{\kappa}^{1,\eta}(dx_1 |z_1) \otimes \dots \otimes \hat{\kappa}^{K,\eta}(dx_K |z_K)$. Then the entropic estimator $\hat{\pi}^\eta$ is achieved by
\begin{align}\label{eqn:entropic-shadow}
    \hat{\pi}^\eta(d\boldsymbol{x}) = \int_{\mathcal{Z}} \pi^0_m(d\boldsymbol{z}) \hat{\kappa}^\eta(d\boldsymbol{x} | \boldsymbol{z}).
\end{align} 
The name entropic shadow is rooted from $\pi^0_m$ which is obtained via the entropic regularized $p$-Wasserstein distance.

\begin{remark}[Nonuniqueness]
Notice that since the entropic marginal couplings $\hat \gamma^{i, \eta}$'s are uniquely determined, the entropic shadow is also uniquely determined; however, this does not imply uniqueness of an optimizer for  \eqref{eq: entropic MOT formulation of Wasserstein projection} as one consider a counterexample similar to that given in \cite[Remark 4.2]{Eckstein_Nutz_2022}.    
\end{remark}

A strong advantage of entropic shadow is its computational speed and efficiency. Note that it can be computed in parallel as before, and each $\hat{\gamma}^i$ can be computed by Sinkhorn algorithm, of which computational cost is almost linear. Overall, it decreases the computational cost to $\widetilde{O} (K n^{2})$, which is much less than $\widetilde{O} (K^3 n^K)$.

A drawback of entropic approach is that that $\hat \pi^\eta$ does not recover the true $\pi^0$ in general as $n,m \to \infty$ as long as $\eta > 0$, due to the entropic bias. One could ask whether the bias vanishes as the regularization disappears. Such a convergence is well known in the literature. Under mild conditions, \emph{$\Gamma$-convergence} guarantees the convergence of entropic estimator to $\pi^0$; see \Cref{def: Gamma convergence}.

\begin{remark}\label{rem:Gamma-convergence}
Let us discuss a brief history for the $\Gamma$-convergence of the entropic OT to the vanilla OT. \citet{Christian2012} proves the $\Gamma$-convergence by relating it to the large deviation theory under the assumption that $c$ is nonnegative and lower semicontinuous. \citet{CarlierDuvalPeyreSchmitzer2017} prove the same result for $c=d^2$ with the emphasis of numerical applications. \citet{BerntonGhosalNutz2022} prove the same result with the large deviation theory and provide characterization of the rate function in terms of Kantorovich potentials based on geometric argument. \citet{NutzWiesel2022} prove the convergence of entropic dual potentials to Kantorovich potentials.
\end{remark}

\begin{proposition}[$\Gamma$-convergence]\label{prop: Gamma convergence}
As $\eta \to 0$, \eqref{eq: entropic MOT formulation of Wasserstein projection} $\Gamma$-converges to \eqref{eq: MOT formulation of Wasserstein projection}. Furthermore, there is a subsequence of solutions of \eqref{eq: entropic MOT formulation of Wasserstein projection} which converges to a solution for \eqref{eq: MOT formulation of Wasserstein projection}.
\end{proposition}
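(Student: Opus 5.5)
The plan is to prove $\Gamma$-convergence on $\Pi(\bm{\mu}_n, \pi^0_m)$ with the weak topology. All measures are finitely supported, so this set is a compact polytope of probability vectors on a finite grid, and weak convergence is just coordinatewise convergence. Write
\[
F(\gamma) := \sum_{i=1}^K \int d_i(x_i,z_i)^p\, d\gamma^i
\]
for the objective in \eqref{eq: MOT formulation of Wasserstein projection}; the case $p\in[1,\infty)$ is taken in its $p$-th power form. Write $F_\eta := F + \eta\, G$ with $G(\gamma) := \sum_i \mathrm{KL}(\gamma^i \,\Vert\, \mu^i_n\otimes\nu^i_m)$ for the entropic objective in \eqref{eq: entropic MOT formulation of Wasserstein projection}.

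By separability of $d_{\mathcal{Z},p}^p$, $F$ is linear and continuous in $\gamma$, and $F$ agrees with the objective of \eqref{eq: MOT formulation of Wasserstein projection}. The key structural fact is boundedness of the penalty: each $\gamma^i$ is a coupling of discrete measures, so $\gamma^i \ll \mu^i_n\otimes\nu^i_m$. Hence
\[
0 \le G(\gamma) \le \sum_i \bigl(\log n + \log m\bigr) =: C,
\]
since $\mathrm{KL}(\gamma^i\Vert \mu\otimes\nu)$ is at most the entropy of the marginals. In particular $\sup_{\Pi(\bm{\mu}_n,\pi^0_m)} |F_\eta - F| \le \eta C \to 0$.

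With this uniform bound, the two $\Gamma$-convergence inequalities are immediate.
\begin{itemize}
\item \textbf{Liminf.} If $\gamma_\eta\to\gamma$, then $\liminf F_\eta(\gamma_\eta) \ge \liminf F(\gamma_\eta) = F(\gamma)$, because $G\ge 0$ and $F$ is continuous.
\item \textbf{Recovery sequence.} Take the constant sequence $\gamma_\eta \equiv \gamma$, which gives $F_\eta(\gamma) \le F(\gamma) + \eta C \to F(\gamma)$.
\end{itemize}
So uniform convergence plus continuity of the limit yields $\Gamma$-convergence. I would also note, for robustness, that the general (non-discrete) version follows from \Cref{rem:Gamma-convergence} applied marginalwise, but the finite setting suffices here.

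For the second claim, let $\gamma_\eta$ be minimizers of $F_\eta$; these exist since $F_\eta$ is lower semicontinuous on a compact set, and the shadow construction above gives one explicitly. By compactness of $\Pi(\bm{\mu}_n,\pi^0_m)$ there is a subsequence $\gamma_{\eta_k}\to\bar\gamma$, and $\bar\gamma$ remains in the closed constraint set. For any $\gamma$,
\[
F(\bar\gamma) \le \liminf F_{\eta_k}(\gamma_{\eta_k}) \le \limsup F_{\eta_k}(\gamma) = F(\gamma),
\]
using the liminf inequality, then minimality, then the recovery sequence. So $\bar\gamma$ minimizes \eqref{eq: MOT formulation of Wasserstein projection}, and $(\mathrm{P}_{\bm{x}})_\#\gamma_{\eta_k}\to(\mathrm{P}_{\bm{x}})_\#\bar\gamma$ is then a solution of \eqref{eq: Wasserstein projection}.

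The main obstacle is the case $p=\infty$. There the objective $\max_i \esssup_{\gamma^i} d_i$ is not continuous but only lower semicontinuous under weak convergence: the support can shrink in the limit, although it cannot jump up. The liminf inequality still holds by this lower semicontinuity, and the constant recovery sequence still works because the penalty is bounded by $C$. The remaining issue is that the minimality step requires lower semicontinuity, which does hold. I would verify carefully that $\esssup_{\gamma^i} d_i$ equals the maximum of $d_i$ over the atoms charged by $\gamma^i$, and that this quantity is lower semicontinuous coordinatewise, since a positive mass at the limit forces positive mass along the tail of the sequence.
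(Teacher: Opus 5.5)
Your proof is correct. It takes a different route from the paper, whose proof is only a pointer to the general vanishing-regularization theory in \cite[Section 5]{Nutz_introEOT2022}.

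\textbf{What the paper's route involves.} That theory is built for arbitrary, possibly non-discrete marginals. There an unregularized optimizer usually has infinite relative entropy. So the $\Gamma$-$\limsup$ needs a genuine construction of finite-entropy recovery sequences (block approximations or similar), plus continuity or integrability assumptions on the cost. It is also set up for the usual entropic problem with an integral cost and a single relative-entropy penalty. Applying it to the penalty $\sum_i \mathrm{KL}(\gamma^i \Vert \mu^i_n \otimes \nu^i_m)$ on $\Pi(\bm{\mu}_n, \pi^0_m)$, and above all to the sup-type cost at $p=\infty$, requires an adaptation that the paper leaves implicit.

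\textbf{What your route does.} You use the fact that the statement concerns empirical measures. Every admissible $\gamma^i$ is absolutely continuous with respect to $\mu^i_n\otimes\nu^i_m$. Its relative entropy is a mutual information bounded by $\log n+\log m$ (indeed by $\min(\log n,\log m)$). Hence $F\le F_\eta\le F+\eta C$ on a compact set, and everything reduces to lower semicontinuity of $F$.

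This gives a self-contained argument that handles the paper's actual penalty and the case $p=\infty$. Your lower semicontinuity argument for the essential supremum over finitely many atoms is right. The two-sided bound holds whether the KL terms are read inside or outside the maximum. It also yields the rate $0\le \min F_\eta-\min F\le \eta C$. What it gives up is any content beyond the discrete setting, such as the population-level version in the appendix.

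\textbf{Two minor remarks.}
\begin{enumerate}
\item The $\Gamma$-limit you obtain is the $p$-th power of the objective in \eqref{eq: MOT formulation of Wasserstein projection}. This is the only consistent reading, since the entropic problem carries no $1/p$ root, and the minimizers are the same.
\item Your parenthetical claim that the non-discrete case follows ``marginalwise'' from \Cref{rem:Gamma-convergence} is not automatic. A recovery sequence must converge to $\gamma$ itself, which is not determined by its pairwise marginals $\gamma^i$, so one would need an approximation on the full product space. Since the statement is about the empirical problem, you can simply drop that remark.
\end{enumerate}
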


\begin{proof}
See \cite[Section 5]{Nutz_introEOT2022}.  
\end{proof}

\begin{corollary}[Vanishing bias]\label{cor: consistency of entropic estimator}
There is a subsequence of $\hat{\pi}^\eta$ converging to $\pi^0$ weakly as $\eta \to 0$, and $n,m\to \infty$.    
\end{corollary}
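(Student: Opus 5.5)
Our plan is a diagonal argument that chains \Cref{prop: Gamma convergence} with \Cref{theorem: Quantitative stability}. Throughout we use the triangle inequality
\[
\mathcal{W}_p(\hat{\pi}^\eta, \pi^0) \le \mathcal{W}_p(\hat{\pi}^\eta, \hat{\pi}_{n,m}) + \mathcal{W}_p(\hat{\pi}_{n,m}, \pi^0),
\]
where $\hat{\pi}_{n,m}$ is an unregularized solution of \eqref{eq: Wasserstein projection} built from the data $(\pi^0_m,\bm{\mu}_n)$. We control the two terms separately.

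\textbf{Step 1 (second term).} By \Cref{theorem: Quantitative stability} and \Cref{thm: shadow}, every solution $\hat{\pi}_{n,m}$ satisfies
\[
\mathcal{W}_p(\hat{\pi}_{n,m},\pi^0)\le W_p(\bm{\mu}_n,\bm{\nu}_m)+\mathcal{W}_p(\pi^0_m,\pi^0).
\]
We bound the first term on the right by $W_p(\bm{\mu}_n,\bm{\mu})+W_p(\bm{\mu},\bm{\nu}_m)$, and note that $W_p(\mu^i,\nu^i_m)\le \mathcal{W}_p(\pi^0,\pi^0_m)$ because marginal projections are $1$-Lipschitz for the separable metric $d_{\mathcal{Z},p}$. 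Under finite $p$-th moments, Varadarajan's theorem together with the strong law for moments gives $W_p(\mu^i_n,\mu^i)\to0$ and $\mathcal{W}_p(\pi^0_m,\pi^0)\to0$ almost surely; alternatively, \Cref{cor: sample complexity} gives convergence in expectation, hence in probability. Either way, $\hat{\pi}_{n,m}\to\pi^0$ in $\mathcal{W}_p$ as $n,m\to\infty$, whichever minimizer is chosen.

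\textbf{Step 2 (first term).} For fixed $(n,m)$, \Cref{prop: Gamma convergence} gives a sequence $\eta_k\to0$ along which the entropic optimizers $\hat\gamma^{\eta_k}$ converge to some optimizer $\hat\gamma$ of \eqref{eq: MOT formulation of Wasserstein projection}. The $\bm{x}$-marginal map is continuous, so $\hat{\pi}^{\eta_k}\to\hat{\pi}_{n,m}:=(\mathrm{P}_{\bm{x}})_\#\hat\gamma$ weakly. All these measures are supported on the finite set $\operatorname{supp}\mu^1_n\times\dots\times\operatorname{supp}\mu^K_n$, so weak convergence is equivalent to $\mathcal{W}_p$ convergence. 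Since $\hat{\pi}_{n,m}$ solves \eqref{eq: Wasserstein projection}, Step 1 applies to it. Note that we must take $\hat{\pi}_{n,m}$ to be this particular limit, because solutions of \eqref{eq: Wasserstein projection} are not unique (\Cref{rem:nonunique}).

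\textbf{Step 3 (diagonalization).} Take a sequence $(n_j,m_j)\to\infty$ along which Step 1 holds, either almost surely or after passing to an a.s.-convergent subsequence. For each $j$, choose $\eta_j$ from the sequence in Step 2 with $\eta_j<1/j$ and $\mathcal{W}_p(\hat{\pi}^{\eta_j}_{n_j,m_j},\hat{\pi}_{n_j,m_j})<1/j$. The triangle inequality then gives $\mathcal{W}_p(\hat{\pi}^{\eta_j}_{n_j,m_j},\pi^0)\to0$, which implies weak convergence.

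The main obstacle is the interpretation of the joint limit ``$\eta\to0$ and $n,m\to\infty$''. The $\Gamma$-convergence subsequence depends on the sample, so the result is a subsequence-existence statement along a diagonal with $\eta$ chosen depending on $(n,m)$; it is not a uniform statement. A minor measurability issue in choosing $\eta_j$ is avoided because we work pathwise on a fixed full-measure event.
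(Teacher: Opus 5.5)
Your proof is correct, but it takes the two limits in the opposite order from the paper. The paper fixes $\eta>0$ and lets $n,m\to\infty$ first, using stability of entropic optimal transport under weak convergence of marginals \cite[Theorem 1.4]{GHOSAL2022109622}. This gives $\hat\pi^\eta\to\pi^\eta$, the entropic shadow of $\pi^0$ onto $\Pi(\bm\mu)$. It then sends $\eta\to0$ at the population level with \Cref{prop: Gamma convergence}, where the unregularized minimizer is the unique coupling $(\mathrm{Id},\mathrm{Id})_\#\pi^0$, and diagonalizes.

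You instead send $\eta\to0$ at a fixed sample. There the problem is finite-dimensional and the $\Gamma$-convergence is elementary: the KL terms are bounded on $\Pi(\bm\mu_n,\pi^0_m)$, so the objectives converge uniformly. You land on some unregularized minimizer and then control every minimizer at once with \Cref{theorem: Quantitative stability}. Because that bound is uniform over minimizers, it disposes of the non-uniqueness in \Cref{rem:nonunique}.

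What your route buys:
\begin{itemize}
\item It stays essentially within results stated in the paper.
\item It gives convergence in $\mathcal{W}_p$ rather than just weakly, with a rate for the sampling part from \Cref{cor: sample complexity}.
\item It sidesteps a step the paper's short proof leaves implicit. Convergence of the pairwise entropic couplings $\hat\gamma^{i,\eta}$ does not by itself give convergence of the glued measure $\pi^0_m(d\bm z)\hat\kappa^\eta(d\bm x\,|\,\bm z)$, since disintegration is not weakly continuous.
\item It gives slightly more than a subsequence: $\limsup_{\eta\to0}\mathcal{W}_p(\hat\pi^\eta_{n,m},\pi^0)\le W_p(\bm\mu_n,\bm\nu_m)+\mathcal{W}_p(\pi^0_m,\pi^0)$. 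Hence any $\eta$ below a sample-dependent threshold works.
\end{itemize}

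The trade-off is the regime covered. Your diagonal lives where $\eta$ is tiny relative to the sample. The paper's covers the complementary, computationally more relevant regime where $n,m$ are large relative to a fixed $\eta$. It also identifies the intermediate limit $\pi^\eta$, which is what quantifies the entropic bias.

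Your Step 1 tacitly uses $p<\infty$ and a finite $p$-th moment of $\pi^0$, since $W_\infty(\mu^i_n,\mu^i)$ need not vanish. This is harmless in the paper's setting.
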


\begin{proof}
By \cite[Theorem 1.4]{GHOSAL2022109622}, $\hat{\gamma}^\eta \to \gamma^\eta$ weakly as $n,m \to \infty$, hence so as $\hat{\pi}^\eta \to \pi^\eta$ weakly. At the same time, \Cref{prop: Gamma convergence} implies that there is a subsequence of $\pi^\eta$ converging to $\pi^0$. Choosing $(n_k,m_k,\eta_k)$ accordingly, therefore, one can find a subsequence of $\hat{\pi}^{\eta_k}_{n_k,m_k}$ converging to $\pi^0$ weakly.
\end{proof}

\section{Limit distribution and confidence set: finite support case}\label{section: finite support case}

Discrete data are common in many applications. For example, in the ACS dataset, all five variables (health insurance status, working-age indicator, sex, education level, and race) are categorical. Binary variables take two values, while race has four categories, meaning that the joint distribution spans at most $2 \times 2 \times 2 \times 4 \times 2 = 64$ cells; thus, $\pi^0$ is inherently a finitely supported distribution. More generally, survey and administrative data in social science, public health, and government often involve categorical or discretized responses, making finite support a natural assumption.

Motivated by this, we focus on the case where  $\pi^0$ (hence each of the $\mu^i$'s) has finite support, aligning with this practical motivation. The goal of this section is to derive the limit distribution of an estimator $\hat{\pi}$. For the finite support case, our problem reduces to a finite-dimensional linear program; hence, understanding the limit distribution of $\hat{\pi}$ requires studying the asymptotic behavior of the corresponding linear programming, which has been recently investigated by \citet{Sommerfeld_Munk_inference_OT2017}, \citet{limit_random_lp22} and \citet{liu2023asymptotic}.

Let $\mathcal{X}_i=\spt(\mu^i) = \{ x_{i1}, \dots, x_{i{s_i}} \}$ and $s_i := |\mathcal{X}_i|$ for $i=1, \dots, K$ and $s_0 := |\spt(\pi^0)| \leq s_1\cdots s_K$. Also, we will use $S_+:= s_0 + s_1 + \dots + s_K$ and $S_*:=s_0s_1\dots s_K$. After vectorizing all the objects, \eqref{eq: MOT formulation of Wasserstein projection} becomes a finite-dimensional linear program:
\begin{equation}\label{def: finite linear primal}
    \min_{\gamma \in \mathbb{R}^{S_*}} \langle \boldsymbol{c}, \gamma \rangle,\quad \textrm{s.t.}\ \bm{A}\gamma= \bm{b}_{n,m},\ \gamma\geq\bm{0}
\end{equation}
where $\boldsymbol{c} \in \mathbb{R}^{S_*}$ is the vectorization of the $s_1 \times \dots \times s_K \times s_0$ cost tensor (defined as $d_{\mathcal{Z}, p}^p$), $\bm{A} \in  \mathbb{R}^{S_+ \times S_*}$ is the projection matrix corresponding to the marginal constraints, and 
\begin{equation*}
    \bm{b}_{n,m}:=
\begin{bmatrix}
\bm{\mu}_n\\
\pi^0_m
\end{bmatrix},
\quad 
\bm{b} := 
\begin{bmatrix}
\bm{\mu}\\
\pi^0
\end{bmatrix}\in \mathbb{R}^{s_1} \times \dots \times \mathbb{R}^{s_K} \times \mathbb{R}^{s_0}
\end{equation*}
are the vector of marginal distributions.

\begin{remark}
For generic MOT, a canonical way to vectorize the tensors $\bm{c}$ and $\gamma$ is to write them in lexicographical order.

\end{remark}

\begin{example}[Marginal constraint vectors in the ACS application]
To illustrate the structure of $\bm{b}_{n,m}$ and $\bm{b}$, consider a simplified version of the ACS application with $K=2$ variables: sex ($X_3$, binary: $s_1=2$) and race ($X_4$, 4-level: $s_2=4$). The joint support has $s_0 \leq s_1 \cdot s_2 = 8$ cells. Then
\[
\bm{b}_{n,m} =
\begin{bmatrix}
\mu^{3}_n \\[3pt]
\mu^{4}_n \\[3pt]
\pi^0_m
\end{bmatrix}
=
\begin{bmatrix}
(\hat{p}_{\text{Male}},\, \hat{p}_{\text{Female}}) \\[3pt]
(\hat{p}_{\text{White}},\, \hat{p}_{\text{Black}},\, \hat{p}_{\text{Asian}},\, \hat{p}_{\text{Others}}) \\[3pt]
(\hat{p}_{\text{Male,White}},\, \hat{p}_{\text{Male,Black}},\, \ldots,\, \hat{p}_{\text{Female,Others}})
\end{bmatrix}
\in \mathbb{R}^{2} \times \mathbb{R}^{4} \times \mathbb{R}^{8},
\]
where $\hat{p}$ denotes empirical proportions: the first two blocks are estimated from the Summary File (i.e., marginal data), and the last block from the PUMS (i.e., coupled data). The population counterpart $\bm{b}$ has the same structure with true proportions replacing $\hat{p}$. 
\end{example}

The terms $\bm{c}$ and $\bm{A}$ are fixed and only $\bm{b}_{n,m}$ varies, so we denote the set of solutions for \eqref{def: finite linear primal} by $\gamma(\bm{b}_{n,m})$. \citet{limit_random_lp22} and \citet{liu2023asymptotic} study the limit distribution of $\gamma(\bm{b}_{n,m})$ for the case
\begin{equation}\label{eq: limit of b_n}
    r_{n,m} (\bm{b}_{n,m} - \bm{b}) \overset{d}{\longrightarrow} \mathbb{G}
\end{equation}   
for some random process $\mathbb{G}$ with rate $r_{n,m} \to \infty$ as $n,m \to \infty$. For the limit distribution of $\gamma(\bm{b}_{n,m})$, the following assumptions on \eqref{def: finite linear primal} should be imposed.

\begin{assumption}{\cite[Assumption 1]{liu2023asymptotic}}
\label{assumption: finite support case}
The constraint matrix $\bm{A}$ has full rank, the optimal solution set $\gamma(\bm{b})$ is nonempty and bounded, and the Slater's condition \cite[Section 5.2.3]{boyd2004convex} is satisfied, i.e., there exists $\gamma_0 \in \mathbb{R}^{S_*}$ such that $\gamma_0>0,\, \bm{A} \gamma_0=\bm{b}$.
\end{assumption}


\begin{remark}[Slater's condition]
In the statistical context, the Slater's condition means that there must exist a joint distribution over $\mathcal{Z}$ with strictly positive probability on every cell of the product support.
 In the ACS application, this requires every combination of health insurance status, working age, sex, race, and education to have positive probability in the Illinois population, which is a mild and verifiable condition given the large population size.
\end{remark}

Let $\hat{\pi}(\bm{b}_{n,m})$ denote a solution for the projection, i.e., $\hat{\pi}(\bm{b}_{n,m}) = (\mathrm{P}_{\bm{x}})_\# \gamma(\bm{b}_{n,m})$. Also, let $\mathbb{G}_{\bm{\mu}, \pi^0}$ be the limit distribution of $\bm{b}_{n,m}$ given in \eqref{eq: limit of b_n}: see \eqref{eq: limit of mu and pi^0} for more details. 
We apply 
\cite[Theorem 6.1]{limit_random_lp22} and \cite[Theorem 4]{liu2023asymptotic}
 to get the limit distribution of $\hat{\pi}(\bm{b}_{n,m})$ as follows.

\begin{theorem}\label{thm: limit distribution of projection}
Assume that $\frac{m}{n+m} \to \lambda \in (0, 1)$ as $n,m \to \infty$, $\spt(\pi^0)$ is finite and $c=d_{\mathcal{Z},p}^p$ in \eqref{eq: metric over Z}. Let $\hat{\pi}(\bm{b}_{n,m})$ be a solution for \eqref{eq: Wasserstein projection} with input $\bm{\mu}_n$ and $\pi^0_m$. Then
\[
    \sqrt{\frac{nm}{n+m}} \left( \hat{\pi}(\bm{b}_{n,m}) -  \pi^0 \right) \overset{d}{\longrightarrow} (\mathrm{P}_{\boldsymbol{x}})_\# (p^*(\mathbb{G}_{\bm{\mu}, \pi^0}))
\]
where $p^*(\mathbb{G}_{\bm{\mu}, \pi^0})$ is the set of optimal solutions for the following linear program:
\begin{equation*}
    \min_{p \in \mathbb{R}^{S_*}} \langle \boldsymbol{c}, p \rangle,\quad \textrm{s.t.}\ \bm{A}p= \mathbb{G}_{\bm{\mu}, \pi^0},\ p_{ij} \geq 0 \text{ for all } (ij) \notin \spt((\mathrm{Id}, \mathrm{Id})_\#(\pi^0)).
\end{equation*}

\begin{proof}
\Cref{lemma: assumption check} shows that \eqref{def: finite linear primal} satisfies \Cref{assumption: finite support case}. Then, by \Cref{lemma: aymptotic of linear programming}, we have
\begin{equation}\label{eqn:limit-coupling}
    \sqrt{\frac{nm}{n+m}} ( \gamma(\bm{b}_{n,m}) - \gamma(\bm{b}) ) \overset{d}{\longrightarrow} p^*(\mathbb{G}_{\bm{\mu}, \pi^0})
\end{equation}
where $\mathbb{G}_{\bm{\mu}, \pi^0}$ is the Gaussian defined in \eqref{eq: limit of mu and pi^0} and $p^*(\mathbb{G}_{\bm{\mu}, \pi^0})$ is the set of optimal solutions to the linear program:
\begin{equation}\label{eq: auxiliary LP}
    \min_{p \in \mathbb{R}^{S_*}} \langle \boldsymbol{c}, p \rangle,\quad \textrm{s.t.}\ \bm{A}p= \mathbb{G}_{\bm{\mu}, \pi^0},\ p_{ij} \geq 0 \text{ for all } (ij) \notin \text{spt}(\gamma(\bm{b})).
\end{equation}
This follows from \Cref{lemma: aymptotic of linear programming} (\cite[Theorem 6.1]{limit_random_lp22} and \cite[Theorem 4]{liu2023asymptotic}). Note that $\gamma(\bm{b}) =\{ (\mathrm{Id}, \mathrm{Id})_\#(\pi^0) \}$ since $c=0$ only on the diagonal. Recalling the fact that $\hat{\pi}(\bm{b}_{n,m}) = (\mathrm{P}_{\boldsymbol{x}})_\# (\gamma(\bm{b}_{n,m}))$, applying 
$(\mathrm{P}_{\boldsymbol{x}})_\#$ to \eqref{eqn:limit-coupling} yields the conclusion by continuous mapping theorem.
\end{proof}
\end{theorem}

\begin{remark}[Asymptotic Behavior and Efficiency]
The convergence rate in \cref{thm: limit distribution of projection}, that is, $\sqrt{nm/(n+m)}$ is parametric (root-$n$ type) in the combined sample sizes. When $n \gg m$ as in the ACS data, this rate simplifies to approximately $\sqrt{m}$, indicating that the smaller sample governs the effective rate. The main practical benefit of incorporating marginal data is not an increase in the convergence rate itself, but that the estimator $\hat{\pi}$ provides a more accurate point estimate of $\pi^0$ than the empirical measure based only on the coupled data, $\pi_m^0$. This improvement is reflected in the narrower confidence intervals shown in Sections \ref{sec:sim} and \ref{sec:data}.
\end{remark}

For statistical inference tasks, one needs more than \Cref{thm: limit distribution of projection} since it does not provide any
data-driven description of asymptotically valid confidence sets. Two approaches can be used to obtain such confidence sets.  The first approach is subsampling \cite{politis1994large, politis1999subsampling}, also known as the $m$ out of $n$ bootstrap. The validity of subsampling relies on the existence of an asymptotic distribution for the estimator, which is ensured by \Cref{thm: limit distribution of projection}. The procedure is implemented in available software, such as the \texttt{moonboot} R package \citep{moonboot,Dalitz_2025}, and we refer readers to these references for implementation details.

The second approach follows \cite{liu2023asymptotic}, in which asymptotic confidence sets are constructed using the auxiliary linear program \eqref{eq: auxiliary LP}. Although their result rigorously guarantees an at least $1-\alpha$ confidence set, it is too conservative to apply it directly to real data,  compared to the subsampling method. The full statement of this approach is presented in \Cref{subsec: confidence sets}.



\begin{remark}
While the bootstrap \citep{efron1994introduction} is widely used and often viewed as a convenient tool for practical inference, we do not recommend it for inference on the estimator obtained from the linear program \eqref{def: finite linear primal}. The bootstrap is known to be invalid for non-regular estimators \citep{shao1994bootstrap, andrews2000inconsistency, fang2019inference}, and the estimator derived from \eqref{def: finite linear primal} is generally non-regular. As a result, bootstrap procedures may fail to yield valid statistical inference. Our simulation results indeed confirm that the standard bootstrap can produce invalid confidence intervals; see \Cref{sec:sim}. 
\end{remark}

\section{Simulation} \label{sec:sim}

We conducted simulation studies to investigate the finite sample performance of the proposed approach.

\subsection{Infinite Support Case}
First, we considered a setting with infinite support, i.e., continuous random variables. The number of coupled observations was varied over $m \in \{ 50, 100, 200 \}$ and the number of decoupled observations was varied over $n \in \{5m, 10m, 25m \}$. We then generated $(m+n)$  i.i.d. coupled data from the bivariate normal distribution as follows:
\begin{align*}
    (Z_1,Z_2) \sim 
    \pi^0 =
    N 
    \left(
    \begin{pmatrix}
        0 \\ 0
    \end{pmatrix},
    \begin{pmatrix}
        1 & \rho \\ \rho & 1
    \end{pmatrix}
    \right) \ .
\end{align*}
The covariance parameter $\rho$ was varied over $ \rho \in \{0, 0.25, 0.75\}$ to examine the performance of the proposed method under conditions ranging from no correlation to a strong association between $Z_1$ and $Z_2$. The marginal distributions of $Z_1$ and $Z_2$ are the same as $\mu=\nu=N(0,1)$. The coupled dataset consisted of the first $m$ $(Z_1,Z_2)$ pairs, while the decoupled dataset is obtained by de-coupling the remaining $n$ pairs and randomly permuting them.

We estimated the coupling for the marginal data from \eqref{eq: entropic MOT formulation of Wasserstein projection} with the squared Euclidean cost. For the entropic regularization parameter $\eta$, we considered three settings: (i) fixed values $\eta \in \{2^{0},2^{3},2^{-6} \}$; and (ii) a data-driven choice of $\eta$ based on a cross-validation procedure, which we denote by $\eta_{\text{cv}}$; the details of this procedure are provided in Algorithm~\ref{alg:cv eta}. Roughly speaking, the cross-validation procedure consists of the following steps:
(i) partition the coupled dataset into non-overlapping, approximately equal-sized $K$ folds; (ii) apply the proposed extension of coupling framework to the dataset excluding one fold;
(iii) evaluate the cost on the held-out fold; (iv) repeat steps (ii) and (iii) for each of the $K$ folds. Finally, $\eta_{\text{cv}}$ is defined as the value of $\eta$ that minimizes the average cost across all held-out folds. To reduce the dependence on a particular split, steps (i)-(iv) can be repeated $R$ times. For the simulation study, we used $R = 10$ repetitions and $K = 5$ folds, with the candidate set of $\eta$ values given by $\mathcal{E} = \{ 2^{0}, 2^{-1}, 2^{-2},2^{-3},2^{-4}, 2^{-5}, 2^{-6} \}$.

\begin{algorithm}[!htp]
    \begin{algorithmic}[1]    
        \STATE Define a set of candidate values for $\eta$, denoted by $\mathcal{E} = \{ \eta_1,\ldots,\eta_J \}$
        \FOR{$j=1,\ldots,J$}
        \STATE Let $R$ be the number of cross-validation repetitions
        \FOR{$r=1,\ldots,R$}
        \STATE Partition $\mathcal{I} = \{1, \ldots, m\}$ into $K$ non-overlapping, approximately equal-sized folds, denoted by $\mathcal{I}_{r,1}, \ldots, \mathcal{I}_{r,K}$
        \FOR{$k=1,\ldots,K$}
        \STATE Let $\widehat{\pi}_{r,k}^{(j)}$ denote an estimate of $\pi^0$ obtained by using $\mathcal{I} \setminus \mathcal{I}_{r,k}$ as the coupled dataset and $\mathcal{I}_{r,k}$ as the marginal dataset, with $\eta_j$ as the entropic regularization parameter
        \STATE Evaluate the cost over $\mathcal{I}_{r,k}$ using the estimated coupling:
        \begin{align*}
            \mathcal{C}_{r,k}^{(j)}
            =
            \frac{1}{|\mathcal{I}_{r,k}|}
            \sum_{i \in \mathcal{I}_{r,k}}
            \int c((x,y),(x_i,y_i)) \, d\widehat{\pi}_{r,k}^{(j)} (x,y)
        \end{align*}
        \ENDFOR %
        \ENDFOR %
        \ENDFOR %
        \STATE Select the value of $\eta$ that minimizes the average cost across folds and repetitions:
\begin{align*}
    \eta_{\text{cv}}
    =
    \eta_{j^*}
    \quad \text{ where }
    j^*
    &= \argmin_{j \in \{ 1, \ldots, J\} }
       \frac{1}{R} 
       \sum_{r=1}^{R} \sum_{k=1}^{K} \mathcal{C}_{r,k}^{(j)},
\end{align*}
    \end{algorithmic}    
    \caption{Repeated Cross-validation for Selecting $\eta$}
    \label{alg:cv eta}
\end{algorithm}

To assess the performance of the estimated coupling $\widehat{\pi}$, we computed estimates of the covariance and the cumulative distribution function at $(-0.25, -0.25)$, given by
\begin{align*}
&
\widehat{\rho}
= \int z_1 z_2 \, d\widehat{\pi}(z_1,z_2)
- \int z_1 \, d\widehat{\pi}(z_1,z_2) \cdot \int z_2 \, d\widehat{\pi}(z_1,z_2), \\
&
\widehat{F}(-0.25, -0.25)
= \int \mathbbm{1}(z_1 \leq -0.25, z_2 \leq -0.25) \,  d\widehat{\pi}(z_1,z_2).
\end{align*}
If the estimated coupling $\widehat{\pi}$ is close to the true coupling $\pi^0$, then $\widehat{\rho}$ and $\widehat{F}(-0.25, -0.25)$ will also be close to $\rho$ and $F(-0.25, -0.25) = \int \mathbbm{1}(z_1 \leq -0.25, z_2 \leq -0.25) \, d\pi^0(z_1,z_2)$. For comparison, we also computed the same quantities based on the $m$ coupled data, denoted by $\widetilde{\rho}$ and $\widetilde{F}(-0.25,-0.25)$; note that these estimators are consistent by the law of large numbers.

Table \ref{tab:Sim1} presents the root mean squared error (RMSE) of $\widetilde{\rho}$ and $\widehat{\rho}$, averaged over 500 repetitions, for each combination of $(m, n, \rho)$. First, for each $\rho$ and fixed ratio $n/m$, the RMSE of $\widetilde{\rho}$ decreases with $m$, approximately at the rate of $1/\sqrt{m}$, which corresponds to the parametric rate. Similarly, the RMSE of $\widehat{\rho}$ also decreases with $m$ at roughly the same rate.

Second, the performance of $\widehat{\rho}$ depends on both the entropic regularization parameter $\eta$ and the true value of $\rho$. Specifically, when $\rho = 0$ (i.e., $z_1$ and $z_2$ are independent), larger values of $\eta$ yield the smallest RMSE among the candidate choices. Conversely, when $\rho = 0.75$ (i.e., $z_1$ and $z_2$ are strongly dependent), smaller values of $\eta$ achieve the lowest RMSE. Moreover, if $\eta$ is mismatched to $\rho$---for example, a small $\eta$ when $\rho$ is small or a large $\eta$ when $\rho$ is large---the resulting $\widehat{\rho}$ performs worse than $\widetilde{\rho}$. This indicates that, under such settings, the proposed optimal extension of coupling method offers no practical advantage, as the estimator based solely on the coupled data performs better. 

Third, regardless of the value of $\rho$, the cross-validation procedure yields $\widehat{\rho}$ that outperforms $\widetilde{\rho}$. This shows that selecting $\eta$ via cross-validation consistently enables the use of marginal data to obtain an estimator more efficient than one based solely on the coupled data. As noted above, such uniform performance gains are not achievable with a fixed, a priori choice of $\eta$. Moreover, the last column of Table \ref{tab:Sim1} shows that the procedure tends to select a large $\eta$ when $\rho$ is small and a small $\eta$ when $\rho$ is large, indicating that cross-validation automatically adapts to the underlying dependence structure.

\begin{table}[!htp]
\scriptsize
\renewcommand{\arraystretch}{1.2}
\centering
\setlength{\tabcolsep}{6pt}
\begin{tabular}{|c|c|c|cccc|c|}
\hline
\multirow{3}{*}{$\rho$} & \multirow{3}{*}{$n/m$} & \multirow{3}{*}{$m$} & \multicolumn{4}{c|}{RMSE}                                                                                                                                 & \multirow{3}{*}{Average $\log_2 \eta_{\text{cv}}$} \\ \cline{4-7}
                        &                        &                      & \multicolumn{1}{c|}{\multirow{2}{*}{$\widetilde{\rho}$}} & \multicolumn{3}{c|}{$\widehat{\rho}$}                                                          &                                                    \\ \cline{5-7}
                        &                        &                      & \multicolumn{1}{c|}{}                                    & \multicolumn{1}{c|}{$\eta=2^0$} & \multicolumn{1}{c|}{$\eta=2^{-6}$} & $\eta=\eta_{\text{cv}}$ &                                                    \\ \hline
\multirow{9}{*}{0.00}   & \multirow{3}{*}{5}     & 50                   & \multicolumn{1}{c|}{14.19}                               & \multicolumn{1}{c|}{8.71}       & \multicolumn{1}{c|}{14.34}         & 13.00                   & -2.63                                              \\ \cline{3-8} 
                        &                        & 100                  & \multicolumn{1}{c|}{10.02}                               & \multicolumn{1}{c|}{6.13}       & \multicolumn{1}{c|}{10.01}         & 9.05                    & -2.75                                              \\ \cline{3-8} 
                        &                        & 200                  & \multicolumn{1}{c|}{6.89}                                & \multicolumn{1}{c|}{4.22}       & \multicolumn{1}{c|}{6.93}          & 6.19                    & -2.71                                              \\ \cline{2-8} 
                        & \multirow{3}{*}{10}    & 50                   & \multicolumn{1}{c|}{14.19}                               & \multicolumn{1}{c|}{8.69}       & \multicolumn{1}{c|}{14.35}         & 12.96                   & -2.53                                              \\ \cline{3-8} 
                        &                        & 100                  & \multicolumn{1}{c|}{10.02}                               & \multicolumn{1}{c|}{6.11}       & \multicolumn{1}{c|}{10.00}         & 9.18                    & -3.08                                              \\ \cline{3-8} 
                        &                        & 200                  & \multicolumn{1}{c|}{6.89}                                & \multicolumn{1}{c|}{4.21}       & \multicolumn{1}{c|}{6.93}          & 6.17                    & -2.70                                              \\ \cline{2-8} 
                        & \multirow{3}{*}{25}    & 50                   & \multicolumn{1}{c|}{14.19}                               & \multicolumn{1}{c|}{8.66}       & \multicolumn{1}{c|}{14.29}         & 13.02                   & -2.63                                              \\ \cline{3-8} 
                        &                        & 100                  & \multicolumn{1}{c|}{10.02}                               & \multicolumn{1}{c|}{6.11}       & \multicolumn{1}{c|}{10.00}         & 9.06                    & -2.83                                              \\ \cline{3-8} 
                        &                        & 200                  & \multicolumn{1}{c|}{6.89}                                & \multicolumn{1}{c|}{4.21}       & \multicolumn{1}{c|}{6.94}          & 6.19                    & -2.70                                              \\ \hline
\multirow{9}{*}{0.25}   & \multirow{3}{*}{5}     & 50                   & \multicolumn{1}{c|}{14.46}                               & \multicolumn{1}{c|}{12.88}      & \multicolumn{1}{c|}{13.47}         & 13.77                   & -4.03                                              \\ \cline{3-8} 
                        &                        & 100                  & \multicolumn{1}{c|}{10.28}                               & \multicolumn{1}{c|}{11.76}      & \multicolumn{1}{c|}{9.46}          & 10.04                   & -4.49                                              \\ \cline{3-8} 
                        &                        & 200                  & \multicolumn{1}{c|}{7.11}                                & \multicolumn{1}{c|}{10.49}      & \multicolumn{1}{c|}{6.58}          & 6.83                    & -5.21                                              \\ \cline{2-8} 
                        & \multirow{3}{*}{10}    & 50                   & \multicolumn{1}{c|}{14.46}                               & \multicolumn{1}{c|}{12.79}      & \multicolumn{1}{c|}{13.35}         & 13.71                   & -3.79                                              \\ \cline{3-8} 
                        &                        & 100                  & \multicolumn{1}{c|}{10.28}                               & \multicolumn{1}{c|}{11.68}      & \multicolumn{1}{c|}{9.40}          & 10.01                   & -4.51                                              \\ \cline{3-8} 
                        &                        & 200                  & \multicolumn{1}{c|}{7.11}                                & \multicolumn{1}{c|}{10.47}      & \multicolumn{1}{c|}{6.53}          & 6.82                    & -5.22                                              \\ \cline{2-8} 
                        & \multirow{3}{*}{25}    & 50                   & \multicolumn{1}{c|}{14.46}                               & \multicolumn{1}{c|}{12.78}      & \multicolumn{1}{c|}{13.41}         & 13.72                   & -3.82                                              \\ \cline{3-8} 
                        &                        & 100                  & \multicolumn{1}{c|}{10.28}                               & \multicolumn{1}{c|}{11.67}      & \multicolumn{1}{c|}{9.35}          & 9.88                    & -4.67                                              \\ \cline{3-8} 
                        &                        & 200                  & \multicolumn{1}{c|}{7.11}                                & \multicolumn{1}{c|}{10.47}      & \multicolumn{1}{c|}{6.54}          & 6.78                    & -5.14                                              \\ \hline
\multirow{9}{*}{0.75}   & \multirow{3}{*}{5}     & 50                   & \multicolumn{1}{c|}{17.25}                               & \multicolumn{1}{c|}{31.27}      & \multicolumn{1}{c|}{9.09}          & 9.15                    & -5.77                                              \\ \cline{3-8} 
                        &                        & 100                  & \multicolumn{1}{c|}{12.61}                               & \multicolumn{1}{c|}{30.63}      & \multicolumn{1}{c|}{6.52}          & 6.53                    & -5.97                                              \\ \cline{3-8} 
                        &                        & 200                  & \multicolumn{1}{c|}{8.72}                                & \multicolumn{1}{c|}{29.77}      & \multicolumn{1}{c|}{4.32}          & 4.32                    & -6.00                                              \\ \cline{2-8} 
                        & \multirow{3}{*}{10}    & 50                   & \multicolumn{1}{c|}{17.25}                               & \multicolumn{1}{c|}{31.05}      & \multicolumn{1}{c|}{8.00}          & 8.09                    & -5.73                                              \\ \cline{3-8} 
                        &                        & 100                  & \multicolumn{1}{c|}{12.61}                               & \multicolumn{1}{c|}{30.36}      & \multicolumn{1}{c|}{5.66}          & 5.66                    & -5.99                                              \\ \cline{3-8} 
                        &                        & 200                  & \multicolumn{1}{c|}{8.72}                                & \multicolumn{1}{c|}{29.74}      & \multicolumn{1}{c|}{3.91}          & 3.91                    & -6.00                                              \\ \cline{2-8} 
                        & \multirow{3}{*}{25}    & 50                   & \multicolumn{1}{c|}{17.25}                               & \multicolumn{1}{c|}{30.89}      & \multicolumn{1}{c|}{7.36}          & 7.44                    & -5.79                                              \\ \cline{3-8} 
                        &                        & 100                  & \multicolumn{1}{c|}{12.61}                               & \multicolumn{1}{c|}{30.41}      & \multicolumn{1}{c|}{5.22}          & 5.23                    & -5.99                                              \\ \cline{3-8} 
                        &                        & 200                  & \multicolumn{1}{c|}{8.72}                                & \multicolumn{1}{c|}{29.73}      & \multicolumn{1}{c|}{3.61}          & 3.61                    & -6.00                                              \\ \hline
\end{tabular}
\caption{\scriptsize Summary of simulation results under the infinite support case. The four subcolumns under ``RMSE'' report the RMSE of the estimators for $\rho$, computed over 500 simulation repetitions. All RMSE values are scaled by a factor of 100. The last column, labeled ``Average $\log_{2} \eta_{\text{cv}}$,'' reports the mean of $\log_{2}\eta_{\text{cv}}$ across 500 repetitions.}
\label{tab:Sim1}
\end{table}

In Appendix \ref{sec:appendix:additional sim}, we present the results for $\widehat{F}(-0.25, -0.25)$, which exhibit patterns similar to those in Table \ref{tab:Sim1}. These simulation results indicate that the proposed method is consistent with the properties described in \Cref{cor: consistency of entropic estimator} and demonstrates practical merit when the entropic regularization parameter $\eta$ is selected in a data-driven manner.

\subsection{Finite Support Case}

Next, we examined a setting with finite support. In this setting, we fixed the size of the decoupled dataset at $n = 10^6$ and varied the number of coupled observations over $m \in \{1000, 2000, 4000, 8000, 16000, 32000\}$. We generated $m+n$ i.i.d. observations of the discrete variables $(Z_1, Z_2, Z_3) \in \{1,2\} \otimes \{1,2,3,4\} \otimes \{1,2,3,4,5\}$ according to the following joint distribution: 
\begin{align*}
    \Pr(Z_1=z_1, Z_2=z_2, Z_3=z_3)
    =
    \left\{
    \begin{array}{ll}
    7/105 & \text{ if } z_1 = 2, z_2=4, z_3 \in \{1,2,3,4\} 
    \\
    42/105 & \text{ if } z_1 = 2, z_2=4, z_3 =5 
    \\
    1/105 & \text{ otherwise}
    \end{array}
    \right. 
\end{align*}
The coupled dataset was formed from the first $m$ observations, while the decoupled dataset was generated by decoupling the remaining $n$ observations.

Given the finite support of the data, we estimated the coupling for the marginal data using the method described in Section \ref{section: finite support case}. We employed a separable cost function defined as:
\begin{align*}
    c((z_1,z_2,z_3),(z_1',z_2',z_3'))
    =
    \sum_{j=1}^{3} | z_j-z_j' | \ .
\end{align*}
To assess the accuracy of the estimated coupling $\widehat{\pi}$, we estimated the conditional probability $\psi^0 = \Pr(Z_3=5 \mid Z_1=2, Z_2=4) = 0.6$. If $\widehat{\pi}$ is close to the true coupling $\pi^0$, the resulting estimate $\widehat{\psi}$ should also be close to $\psi^0$. For comparison, we also computed the estimator $\widetilde{\psi} = \sum_{i=1}^{m} \mathbb{I} (Z_{i,1}=2,Z_{i,2}=4,X_{i,3}=5) / \sum_{i=1}^{m} \mathbb{I}(Z_{i,1}=2,Z_{i,2}=4)$ where $(Z_{i,1}, Z_{i,2}, Z_{i,3})$ denotes the $i$th coupled observation. Note that $\widetilde{\psi}$ is the maximum likelihood estimator of $\psi^0$ based on the coupled data and is a consistent estimator of $\psi^0$. For inference, we compare 95\% confidence intervals for $\psi^0$ constructed using the following methods:
\begin{itemize}
\item $\text{CI}(\widetilde{\psi},\text{boot})$: a 95\% confidence interval based on $\widetilde{\psi}$ obtained via the nonparametric bootstrap.

\item $\text{CI}(\widehat{\psi},\text{boot})$: a 95\% confidence interval based on $\widehat{\psi}$ obtained via the nonparametric bootstrap.

\item $\text{CI}(\widehat{\psi},\text{ss})$: a 95\% confidence interval based on $\widehat{\psi}$ constructed using subsampling.

\item $\text{CI}(\widehat{\psi},\text{Liu})$: a 95\% confidence interval based on $\widehat{\psi}$ constructed using the method of \citet{liu2023asymptotic}.

\end{itemize}

Table \ref{tab:Sim Discrete} reports the empirical bias and root mean squared error (RMSE) of $\widetilde{\psi}$ and $\widehat{\psi}$, as well as the coverage probabilities and average lengths of the associated confidence intervals, based on 1000 repetitions. The estimator $\widetilde{\psi}$ exhibits nearly negligible bias across all sample sizes, whereas $\widehat{\psi}$ shows a slightly larger, yet modest, finite-sample bias that diminishes as $M$ increases. Despite this slight bias, $\widehat{\psi}$ demonstrates notable gains in efficiency, yielding a consistently lower RMSE than $\widetilde{\psi}$ for all $M$.

Regarding inference, the bootstrap confidence interval for $\widetilde{\psi}$ achieves empirical coverage rates very close to the nominal 95\% level across all values of $m$. This is expected, as the bootstrap is a valid method for $\widetilde{\psi}$, which is the maximum likelihood estimator for $\psi^0$ based on the coupled data. For $\widehat{\psi}$, the standard bootstrap interval tends to undercover (ranging around 88\% to 90\%). This aligns with our discussion in Section \ref{section: finite support case}, where we noted that the bootstrap is generally an invalid inference method for our estimator. Conversely, the confidence intervals based on subsampling and the approach of \citet{liu2023asymptotic} are valid, as they successfully attain the nominal coverage. Between these two, the subsampling confidence interval performs better, with empirical coverage approaching the nominal level as $m$ increases, while the method of \citet{liu2023asymptotic} tends to be conservative, consistently overcovering (around 99\%). We remark that the lengths of these valid intervals for $\widehat{\psi}$ are shorter than $\text{CI}(\widetilde{\psi},\text{boot})$, indicating that our approach enables more efficient inference. 

Overall, the simulation results demonstrate the promise of our approach: while $\widetilde{\psi}$, which relies only on the coupled data, provides minimal bias and valid inference via the bootstrap, $\widehat{\psi}$---which further incorporates marginal data---offers more efficient inference for the underlying distribution.

\begin{table}[!htp]
\scriptsize
\renewcommand{\arraystretch}{1.2}
\centering
\setlength{\tabcolsep}{2pt}
\begin{tabular}{|c|cc|cc|cccc|cccc|}
\hline
\multirow{2}{*}{$m$\,($\times 10^3$)} & \multicolumn{2}{c|}{Bias\,($\times 10^3$)} & \multicolumn{2}{c|}{RMSE\,($\times 10^3$)} & \multicolumn{4}{c|}{CI Coverage\,(\%)} & \multicolumn{4}{c|}{CI Length\,($\times 10^3$)} \\ \cline{2-13} 
 & \multicolumn{1}{c|}{$\widetilde{\psi}$} & $\widehat{\psi}$ & \multicolumn{1}{c|}{$\widetilde{\psi}$} & $\widehat{\psi}$ & \multicolumn{1}{c|}{$\widetilde{\psi},\text{boot}$} & \multicolumn{1}{c|}{$\widehat{\psi},\text{boot}$} & \multicolumn{1}{c|}{$\widehat{\psi},\text{SS}$} & $\widehat{\psi},\text{Liu}$ & \multicolumn{1}{c|}{$\widetilde{\psi},\text{boot}$} & \multicolumn{1}{c|}{$\widehat{\psi},\text{boot}$} & \multicolumn{1}{c|}{$\widehat{\psi},\text{SS}$} & $\widehat{\psi},\text{Liu}$ \\ \hline
1 & \multicolumn{1}{c|}{-1.1} & -6.2 & \multicolumn{1}{c|}{19.1} & 12.1 & \multicolumn{1}{c|}{94.7} & \multicolumn{1}{c|}{88.4} & \multicolumn{1}{c|}{91.5} & 98.7 & \multicolumn{1}{c|}{74.4} & \multicolumn{1}{c|}{40.5} & \multicolumn{1}{c|}{39.0} & 66.2 \\ \hline
2 & \multicolumn{1}{c|}{-0.8} & -4.3 & \multicolumn{1}{c|}{13.5} & 8.5 & \multicolumn{1}{c|}{94.3} & \multicolumn{1}{c|}{89.5} & \multicolumn{1}{c|}{92.6} & 98.9 & \multicolumn{1}{c|}{52.6} & \multicolumn{1}{c|}{29.1} & \multicolumn{1}{c|}{28.6} & 46.7 \\ \hline
4 & \multicolumn{1}{c|}{-0.4} & -2.9 & \multicolumn{1}{c|}{9.3} & 5.9 & \multicolumn{1}{c|}{96.1} & \multicolumn{1}{c|}{90.1} & \multicolumn{1}{c|}{94.1} & 99.5 & \multicolumn{1}{c|}{37.2} & \multicolumn{1}{c|}{20.8} & \multicolumn{1}{c|}{20.3} & 33.0 \\ \hline
8 & \multicolumn{1}{c|}{-0.2} & -2.0 & \multicolumn{1}{c|}{6.7} & 4.2 & \multicolumn{1}{c|}{95.4} & \multicolumn{1}{c|}{90.1} & \multicolumn{1}{c|}{94.8} & 98.8 & \multicolumn{1}{c|}{26.3} & \multicolumn{1}{c|}{14.8} & \multicolumn{1}{c|}{14.9} & 23.6 \\ \hline
16 & \multicolumn{1}{c|}{-0.3} & -1.5 & \multicolumn{1}{c|}{4.8} & 2.9 & \multicolumn{1}{c|}{94.9} & \multicolumn{1}{c|}{89.4} & \multicolumn{1}{c|}{96.5} & 99.3 & \multicolumn{1}{c|}{18.6} & \multicolumn{1}{c|}{10.5} & \multicolumn{1}{c|}{10.7} & 16.9 \\ \hline
32 & \multicolumn{1}{c|}{-0.1} & -1.1 & \multicolumn{1}{c|}{3.4} & 2.2 & \multicolumn{1}{c|}{94.8} & \multicolumn{1}{c|}{88.9} & \multicolumn{1}{c|}{94.3} & 98.9 & \multicolumn{1}{c|}{13.1} & \multicolumn{1}{c|}{7.5} & \multicolumn{1}{c|}{7.7} & 12.2 \\ \hline
\end{tabular}
\caption{\scriptsize Summary of simulation results under the finite support case. All bias, RMSE, and CI length values are scaled by a factor of 1000. CI coverage is reported as a percentage.}
\label{tab:Sim Discrete}
\end{table}

\section{Real-world Application} \label{sec:data}

We applied our proposed method to a real-world survey dataset. Consistent with \Cref{sec: intro}, we focused on individuals in Illinois, using the U.S. Census Bureau’s ACS 2018 datasets for both the coupled and marginal data. Specifically, the coupled dataset consisted of the ACS 2018 5-year PUMS with $m \simeq 6.13 \times 10^5$, while the marginal dataset was drawn from the ACS 2018 5-year Summary File, representing the Illinois population with a total size of $n \approx 1.28 \times 10^7$. 

For these datasets, we considered five discrete variables $(Z_1,Z_2,Z_3,Z_4,Z_5) \in \{0,1\} \otimes \{0,1\} \otimes \{0,1\} \otimes \{0,1,2,3\} \otimes \{0,1\}$, defined as:
\begin{itemize}
\item $Z_1 = \mathbb{I}(\text{Individual has health insurance})$
\item $Z_2 = \mathbb{I}(\text{Age} \in  [25,64]\text{, i.e., working age})$
\item $Z_3 = \mathbb{I}(\text{Sex = Male})$
\item $Z_4 = \mathbb{I}(\text{Race = White}) + 2 \mathbb{I}(\text{Race = Black}) + 3 \mathbb{I}(\text{Race = Asian})$
\item $Z_5 = \mathbb{I}(\text{Education $\leq$ High school})$
\end{itemize}

Using these variables, we examined the relationship between health insurance status and socioeconomic factors, defined as $\psi^0_{ijkl} = \Pr(Z_1=0 \mid Z_2=i, Z_3=j, Z_4=k, Z_5=l)$, the conditional probability of not having health insurance given socioeconomic factors. Following the simulation study in Section \ref{sec:sim}, we compared two estimators: $\widehat{\psi}_{ijkl}$, obtained from our proposed method in Section \ref{section: finite support case}, and $\widetilde{\psi}_{ijkl}$, the maximum likelihood estimator based on the coupled data, and evaluated four 95\% confidence intervals for inference.  

Tables \ref{tab:data} and \ref{tab:data2} present the estimation results for $\psi_{00kl}^{0}$, representing the conditional probabilities of not having health insurance (scaled as percentages) for females outside the working-age range in Illinois, stratified by race and educational attainment. Consistent with the simulation results in Section \ref{sec:sim}, incorporating the marginal data via our proposed estimator $\widehat{\psi}$ yields noticeable differences compared to the standard maximum likelihood estimator $\widetilde{\psi}$ derived solely from the coupled data. While the overall values between the two estimators are similar, $\widehat{\psi}$ produces slightly higher estimates for most demographic groups (White, Black, and Others) and slightly lower estimates for the Asian demographic. 

In terms of confidence intervals, $\widehat{\psi}$ generally delivers tighter confidence intervals under both bootstrap and subsampling procedures compared with bootstrap intervals based on $\widetilde{\psi}$. However, although the 95\% bootstrap confidence intervals based on $\widehat{\psi}$ are substantially narrower than the alternatives, simulation results indicate that they may fail to attain the nominal 95\% coverage level. Moreover, while the confidence intervals proposed by \citet{liu2023asymptotic} are theoretically valid, they are typically wider than the bootstrap intervals based on $\widetilde{\psi}$, resulting in less efficient inference. Consistent with the simulation findings, we therefore recommend subsampling confidence intervals for inference. 

Comparing the lengths of the 95\% confidence intervals from subsampling based on $\widehat{\psi}$ with the bootstrap intervals based on $\widetilde{\psi}$ highlights the efficiency gains achieved by incorporating marginal information. For example, for $\psi_{0011}^0$ (the first row in Tables \ref{tab:data} and \ref{tab:data2}), the corresponding interval widths are 0.26 and 0.31, respectively. This implies that, when an investigator relies solely on coupled data and conducts inference using $\widetilde{\psi}$, approximately $(0.31/0.26)^2 \simeq 1.41$ times more coupled data is required to achieve the level of precision attainable when incorporating marginal data through $\widehat{\psi}$. Thus, leveraging marginal data enables efficient statistical inference without requiring substantially more coupled data to achieve the same level of precision.

Overall, these results reinforce a key takeaway for real-world applications: while incorporating decoupled marginal data enables the construction of improved OT-based estimators, the choice of inference procedure remains critical for balancing efficiency with statistical validity.

\begin{table}[!htp]
\scriptsize
\renewcommand{\arraystretch}{1.2}
\centering
\setlength{\tabcolsep}{2pt}

\begin{tabular}{|c|c|cc|}
\hline
\multirow{2}{*}{Race} & \multirow{2}{*}{\begin{tabular}[c]{@{}c@{}}Education\\ $\leq$ HS?\end{tabular}} & \multicolumn{2}{c|}{$\widetilde{\psi}$} \\ \cline{3-4} 
 & & \multicolumn{1}{c|}{Estimate(\%)} & 95\% Bootstrap CI \\ \hline
\multirow{2}{*}{White} & Yes & \multicolumn{1}{c|}{$2.98$} & $(2.83,\,3.14)\,[0.31]$ \\ \cline{2-4} 
 & No & \multicolumn{1}{c|}{$3.72$} & $(3.33,\,4.13)\,[0.80]$ \\ \hline
\multirow{2}{*}{Black} & Yes & \multicolumn{1}{c|}{$5.89$} & $(4.96,\,6.78)\,[1.82]$ \\ \cline{2-4} 
 & No & \multicolumn{1}{c|}{$6.40$} & $(5.71,\,7.11)\,[1.41]$ \\ \hline
\multirow{2}{*}{Asian} & Yes & \multicolumn{1}{c|}{$1.77$} & $(1.58,\,1.98)\,[0.40]$ \\ \cline{2-4} 
 & No & \multicolumn{1}{c|}{$2.80$} & $(2.12,\,3.53)\,[1.40]$ \\ \hline
\multirow{2}{*}{Others} & Yes & \multicolumn{1}{c|}{$5.04$} & $(3.90,\,6.21)\,[2.31]$ \\ \cline{2-4} 
 & No & \multicolumn{1}{c|}{$3.78$} & $(1.93,\,6.10)\,[4.17]$ \\ \hline
\end{tabular}
\caption{\scriptsize Summary of the real-world application for females outside the working-age group based on $\widetilde{\psi}$. All values are scaled by a factor of 100. Numbers in square brackets indicate the lengths of the corresponding confidence intervals.}
\label{tab:data}
\end{table}

\begin{table}[!htp]
\scriptsize
\renewcommand{\arraystretch}{1.2}
\centering
\setlength{\tabcolsep}{2pt}
\begin{tabular}{|c|c|cccc|}
\hline
\multirow{2}{*}{Race} & \multirow{2}{*}{\begin{tabular}[c]{@{}c@{}}Education\\ $\leq$ HS?\end{tabular}} & \multicolumn{4}{c|}{$\widehat{\psi}$} \\ \cline{3-6} 
 & & \multicolumn{1}{c|}{Estimate(\%)} & \multicolumn{1}{c|}{95\% Bootstrap CI} & \multicolumn{1}{c|}{95\% Subsampling CI} & 95\% CI of \citet{liu2023asymptotic} \\ \hline
\multirow{2}{*}{White} & Yes & \multicolumn{1}{c|}{$3.01$} & \multicolumn{1}{c|}{$(2.88,\,3.15)\,[0.27]$} & \multicolumn{1}{c|}{$(2.85,\,3.11)\,[0.26]$} & $(2.84,\,3.18)\,[0.34]$ \\ \cline{2-6} 
 & No & \multicolumn{1}{c|}{$3.80$} & \multicolumn{1}{c|}{$(3.45,\,4.16)\,[0.71]$} & \multicolumn{1}{c|}{$(3.32,\,4.08)\,[0.76]$} & $(3.42,\,4.19)\,[0.78]$ \\ \hline
\multirow{2}{*}{Black} & Yes & \multicolumn{1}{c|}{$5.73$} & \multicolumn{1}{c|}{$(4.90,\,6.55)\,[1.65]$} & \multicolumn{1}{c|}{$(4.55,\,6.38)\,[1.83]$} & $(4.79,\,6.74)\,[1.96]$ \\ \cline{2-6} 
 & No & \multicolumn{1}{c|}{$6.73$} & \multicolumn{1}{c|}{$(6.11,\,7.40)\,[1.29]$} & \multicolumn{1}{c|}{$(5.90,\,7.26)\,[1.36]$} & $(6.03,\,7.45)\,[1.42]$ \\ \hline
\multirow{2}{*}{Asian} & Yes & \multicolumn{1}{c|}{$1.80$} & \multicolumn{1}{c|}{$(1.63,\,1.98)\,[0.35]$} & \multicolumn{1}{c|}{$(1.56,\,1.92)\,[0.37]$} & $(1.57,\,2.03)\,[0.46]$ \\ \cline{2-6} 
 & No & \multicolumn{1}{c|}{$2.91$} & \multicolumn{1}{c|}{$(2.29,\,3.55)\,[1.25]$} & \multicolumn{1}{c|}{$(1.84,\,3.24)\,[1.40]$} & $(2.21,\,3.61)\,[1.41]$ \\ \hline
\multirow{2}{*}{Others} & Yes & \multicolumn{1}{c|}{$4.91$} & \multicolumn{1}{c|}{$(3.93,\,5.90)\,[1.98]$} & \multicolumn{1}{c|}{$(3.47,\,5.51)\,[2.04]$} & $(3.60,\,5.96)\,[2.36]$ \\ \cline{2-6} 
 & No & \multicolumn{1}{c|}{$4.31$} & \multicolumn{1}{c|}{$(3.04,\,5.96)\,[2.92]$} & \multicolumn{1}{c|}{$(1.03,\,5.05)\,[4.02]$} & $(2.73,\,6.23)\,[3.51]$ \\ \hline
\end{tabular}
\caption{\scriptsize Summary of the real-world application for females outside the working-age group based on $\widetilde{\psi}$. All values are scaled by a factor of 100. Numbers in square brackets indicate the lengths of the corresponding confidence intervals.}
\label{tab:data2}
\end{table}

\section{Conclusion}\label{sec: conclusion}
In this work, we study the extension of coupling via the optimal transport projection framework. Our proposed approach provides the most geometrically optimal coupling among all possible ones without any assumptions, which guarantees full generality for any type of data. Compared to previous literature, which does not establish an intuitive interpretation, the proposed estimator admits a very natural geometric interpretation, a projection over the Wasserstein space, which allows us to verify consistency, stability, sample complexity, limit distribution and confidence sets. Furthermore, it enjoys nice computational benefits by shadow and entropic regularization, by which the corresponding optimization is solvable almost linear time. Also, we provide several real/synthetic data experiments that empirically support the strength of the proposed method. A practically important finding is that subsampling confidence intervals based on the OT estimator $\hat{\pi}$ are narrower than bootstrap confidence intervals based on the coupled data-only estimator, as demonstrated in both the simulation and ACS application. This means that by incorporating freely available marginal data through our framework, one can achieve more efficient inference about the joint distribution without the need to collect additional coupled data. We believe that our work opens a new direction for recovering a true coupling distribution in terms of the fully nonparametric perspective.

There are many new follow-up questions of interest to both theorists and practitioners. Let us pose three questions. The first is about the relation of the structural assumption of $\pi^0$ to the estimation. Although our framework works for general $\pi^0$, specialists have their own a priori knowledge about it which is informative for inference. Thus, it is desirable if one can exploit this a priori knowledge to improve the accuracy of the proposed estimator. For instance, if $\pi^0$ is generated by some map $T$ which is smooth and monotone, in other words, if strong correlation exists among variables, is there a better way to incorporate it with the method? In contrast, if $\pi^0$ is close to a product measure, is just the product of empirical measures optimal?

The second is to extend the limit distribution theory to continuous settings. The limit distribution we develop is valid only for finite support cases. There is a fundamental bottleneck for this limitation, which is related to the limit distribution of Wasserstein distances of empirical distributions. That for dimensions greater than or equal to 3 is unknown due to the failure of classical empirical process theory to apply to it. Developing limit distributions beyond finite settings is important for statistical inference, yet adequate mathematical tools for studying them still appear to be lacking.

The last question is about a data-driven choice of the entropic parameter discussed in \Cref{sec:sim}. Our procedure is a variant of cross-validation, and under certain regularity conditions, cross-validation is known to be consistent for selecting the (near) optimal regularization parameter \citep{vdVaart2006, Yang2007CV}. While our approach showed strong performance in simulations, it remains an interesting question whether it can be formally justified from a theoretical standpoint.

\bibliographystyle{plainnat}
\bibliography{references_noURL.bib} 

@article{GOZLAN20173327,
title = {Kantorovich duality for general transport costs and applications},
journal = {Journal of Functional Analysis},
volume = {273},
number = {11},
pages = {3327-3405},
year = {2017},
author = {Nathael Gozlan and Cyril Roberto and Paul-Marie Samson and Prasad Tetali},
}

@article{dudley69,
 author = {R. M. Dudley},
 journal = {The Annals of Mathematical Statistics},
 number = {1},
 pages = {40--50},
 publisher = {Institute of Mathematical Statistics},
 title = {The Speed of Mean Glivenko-Cantelli Convergence},
 volume = {40},
 year = {1969}
}

@article {Dereich_Scheutzow_Schottstedt2013,
    AUTHOR = {Dereich, Steffen and Scheutzow, Michael and Schottstedt, Reik},
     TITLE = {Constructive quantization: approximation by empirical
              measures},
   JOURNAL = {Ann. Inst. Henri Poincar\'{e} Probab. Stat.},
  FJOURNAL = {Annales de l'Institut Henri Poincar\'{e} Probabilit\'{e}s et
              Statistiques},
    VOLUME = {49},
      YEAR = {2013},
    NUMBER = {4},
     PAGES = {1183--1203},
MRREVIEWER = {A.\ F.\ Gualtierotti},
}

@article {JW_FB_sample_rates,
    AUTHOR = {Niles-Weed, Jonathan and Bach, Francis},
     TITLE = {Sharp asymptotic and finite-sample rates of convergence of
              empirical measures in {W}asserstein distance},
   JOURNAL = {Bernoulli},
  FJOURNAL = {Bernoulli. Official Journal of the Bernoulli Society for
              Mathematical Statistics and Probability},
    VOLUME = {25},
      YEAR = {2019},
    NUMBER = {4A},
     PAGES = {2620--2648},
MRREVIEWER = {Aihua\ Xia},
}

@article {NF_AG_rate_Wasserstein,
    AUTHOR = {Fournier, Nicolas and Guillin, Arnaud},
     TITLE = {On the rate of convergence in {W}asserstein distance of the
              empirical measure},
   JOURNAL = {Probab. Theory Related Fields},
  FJOURNAL = {Probability Theory and Related Fields},
    VOLUME = {162},
      YEAR = {2015},
    NUMBER = {3-4},
     PAGES = {707--738},
MRREVIEWER = {Jos\'{e}\ Trashorras},
}

@article {MR4255123,
    AUTHOR = {H\"{u}tter, Jan-Christian and Rigollet, Philippe},
     TITLE = {Minimax estimation of smooth optimal transport maps},
   JOURNAL = {Ann. Statist.},
  FJOURNAL = {The Annals of Statistics},
    VOLUME = {49},
      YEAR = {2021},
    NUMBER = {2},
     PAGES = {1166--1194},
MRREVIEWER = {Marc\ Henry},
}

@article {MR4441130,
    AUTHOR = {Niles-Weed, Jonathan and Berthet, Quentin},
     TITLE = {Minimax estimation of smooth densities in {W}asserstein
              distance},
   JOURNAL = {Ann. Statist.},
  FJOURNAL = {The Annals of Statistics},
    VOLUME = {50},
      YEAR = {2022},
    NUMBER = {3},
     PAGES = {1519--1540},
}

@article{merigot2021non,
  title={Non-asymptotic convergence bounds for Wasserstein approximation using point clouds},
  author={M{\'e}rigot, Quentin and Santambrogio, Filippo and Sarrazin, Cl{\'e}ment},
  journal={Advances in Neural Information Processing Systems},
  volume={34},
  pages={12810--12821},
  year={2021}
}

@article {MR4359822,
    AUTHOR = {Ledoux, Michel and Zhu, Jie-Xiang},
     TITLE = {On optimal matching of {G}aussian samples {III}},
   JOURNAL = {Probab. Math. Statist.},
  FJOURNAL = {Probability and Mathematical Statistics},
    VOLUME = {41},
      YEAR = {2021},
    NUMBER = {2},
     PAGES = {237--265},
}

@article {MR3189084,
    AUTHOR = {Boissard, Emmanuel and Le Gouic, Thibaut},
     TITLE = {On the mean speed of convergence of empirical and occupation
              measures in {W}asserstein distance},
   JOURNAL = {Ann. Inst. Henri Poincar\'{e} Probab. Stat.},
  FJOURNAL = {Annales de l'Institut Henri Poincar\'{e} Probabilit\'{e}s et
              Statistiques},
    VOLUME = {50},
      YEAR = {2014},
    NUMBER = {2},
     PAGES = {539--563},
MRREVIEWER = {Marta\ Tyran-Kami\'{n}ska},
}

@article {MR2861675,
    AUTHOR = {Boissard, Emmanuel},
     TITLE = {Simple bounds for convergence of empirical and occupation
              measures in 1-{W}asserstein distance},
   JOURNAL = {Electron. J. Probab.},
  FJOURNAL = {Electronic Journal of Probability},
    VOLUME = {16},
      YEAR = {2011},
     PAGES = {no. 83, 2296--2333},
MRREVIEWER = {Huse\ Fatki\'{c}},
}

@article {MR2280433,
    AUTHOR = {Bolley, Fran\c{c}ois and Guillin, Arnaud and Villani,
              C\'{e}dric},
     TITLE = {Quantitative concentration inequalities for empirical measures
              on non-compact spaces},
   JOURNAL = {Probab. Theory Related Fields},
  FJOURNAL = {Probability Theory and Related Fields},
    VOLUME = {137},
      YEAR = {2007},
    NUMBER = {3-4},
     PAGES = {541--593},
MRREVIEWER = {Hac\`ene\ Djellout},
}

@article{monge1781memoire,
  title={M{\'e}moire sur la th{\'e}orie des d{\'e}blais et des remblais},
  author={Monge, Gaspard},
  journal={Mem. Math. Phys. Acad. Royale Sci.},
  pages={666--704},
  year={1781}
}

@inproceedings{kantorovich1942translocation,
  title={On the translocation of masses},
  author={Kantorovich, Leonid V},
  booktitle={Dokl. Akad. Nauk. USSR (NS)},
  volume={37},
  pages={199--201},
  year={1942}
}

@inproceedings{kantorovich1948problem,
  title={On a problem of Monge},
  author={Kantorovich, Leonid V},
  booktitle={CR (Doklady) Acad. Sci. URSS (NS)},
  volume={3},
  pages={225--226},
  year={1948}
}

@article{otto2001geometry,
  title={The geometry of dissipative evolution equations: the porous medium equation},
  author={Otto, Felix},
  journal={Comm. Partial Differential Equations},
  volume={26},
  pages={101--174},
  year={2001}
}

@article {Brenier91,
    AUTHOR = {Brenier, Yann},
     TITLE = {Polar factorization and monotone rearrangement of
              vector-valued functions},
   JOURNAL = {Comm. Pure Appl. Math.},
  FJOURNAL = {Communications on Pure and Applied Mathematics},
    VOLUME = {44},
      YEAR = {1991},
    NUMBER = {4},
     PAGES = {375--417},
MRREVIEWER = {Robert\ McOwen},
}

@article {Caffarelli_contraction,
    AUTHOR = {Caffarelli, Luis A.},
     TITLE = {Monotonicity properties of optimal transportation and the
              {FKG} and related inequalities},
   JOURNAL = {Comm. Math. Phys.},
  FJOURNAL = {Communications in Mathematical Physics},
    VOLUME = {214},
      YEAR = {2000},
    NUMBER = {3},
     PAGES = {547--563},
MRREVIEWER = {Ludger\ R\"{u}schendorf},
}

@book {villani2008optimal,
    AUTHOR = {Villani, C\'{e}dric},
     TITLE = {Optimal transport},
    SERIES = {Grundlehren der mathematischen Wissenschaften [Fundamental
              Principles of Mathematical Sciences]},
    VOLUME = {338},
      NOTE = {Old and new},
 PUBLISHER = {Springer-Verlag, Berlin},
      YEAR = {2009},
     PAGES = {xxii+973},
      ISBN = {978-3-540-71049-3},
MRREVIEWER = {Dario\ Cordero-Erausquin},
}

@article{delon2020wasserstein,
  title={A Wasserstein-type distance in the space of Gaussian mixture models},
  author={Delon, Julie and Desolneux, Agnes},
  journal={SIAM Journal on Imaging Sciences},
  volume={13},
  number={2},
  pages={936--970},
  year={2020},
  publisher={SIAM}
}

@inproceedings{cuturi2013sinkhorn,
	author = {Cuturi, Marco},
	booktitle = {Advances in Neural Information Processing Systems},
	editor = {C.J. Burges and L. Bottou and M. Welling and Z. Ghahramani and K.Q. Weinberger},
	pages = {},
	publisher = {Curran Associates, Inc.},
	title = {Sinkhorn Distances: Lightspeed Computation of Optimal Transport},
	volume = {26},
	year = {2013}
}

@article{altschuler2017near,
  title={Near-linear time approximation algorithms for optimal transport via Sinkhorn iteration},
  author={Altschuler, Jason and Niles-Weed, Jonathan and Rigollet, Philippe},
  journal={Advances in neural information processing systems},
  volume={30},
  year={2017}
}

@article{lin2022complexity,
  title={On the complexity of approximating multimarginal optimal transport},
  author={Lin, Tianyi and Ho, Nhat and Cuturi, Marco and Jordan, Michael I},
  journal={Journal of Machine Learning Research},
  volume={23},
  number={65},
  pages={1--43},
  year={2022}
}

@article{Eckstein_Nutz_2022,
author = {Eckstein, Stephan and Nutz, Marcel},
title = {Quantitative Stability of Regularized Optimal Transport and Convergence of Sinkhorn's Algorithm},
journal = {SIAM Journal on Mathematical Analysis},
volume = {54},
number = {6},
pages = {5922-5948},
year = {2022},

eprint = { 
    
        https://doi.org/10.1137/21M145505X
    
    

}
,
}

@article{bayraktar_StabilitySampleComplexity_2025,
  title={Stability and sample complexity of divergence regularized optimal transport},
  author={Bayraktar, Erhan and Eckstein, Stephan and Zhang, Xin},
  journal={Bernoulli},
  volume={31},
  number={1},
  pages={213--239},
  year={2025},
  publisher={Bernoulli Society for Mathematical Statistics and Probability}
}

@article{barycenter_NPhard2022,
author = {Altschuler, Jason M. and Boix-Adser\`{a}, Enric},
title = {Wasserstein Barycenters Are {NP}-Hard to Compute},
journal = {SIAM Journal on Mathematics of Data Science},
volume = {4},
number = {1},
pages = {179-203},
year = {2022},

eprint = { 
    
        https://doi.org/10.1137/21M1390062
    
    

}
,
}

@book {Nutz_introEOT2022,
    AUTHOR = {Nutz, Marcel},
     TITLE = {Introduction to Entropic Optimal Transport},
 PUBLISHER = {},
    VOLUME = {129},
      YEAR = {2022},
}

@article {Christian2012,
    AUTHOR = {L\'eonard, Christian},
     TITLE = {From the {S}chr\"odinger problem to the {M}onge-{K}antorovich
              problem},
   JOURNAL = {J. Funct. Anal.},
  FJOURNAL = {Journal of Functional Analysis},
    VOLUME = {262},
      YEAR = {2012},
    NUMBER = {4},
     PAGES = {1879--1920},
MRREVIEWER = {Luca\ Granieri},
}

@article {CarlierDuvalPeyreSchmitzer2017,
    AUTHOR = {Carlier, Guillaume and Duval, Vincent and Peyr\'e, Gabriel and
              Schmitzer, Bernhard},
     TITLE = {Convergence of entropic schemes for optimal transport and
              gradient flows},
   JOURNAL = {SIAM J. Math. Anal.},
  FJOURNAL = {SIAM Journal on Mathematical Analysis},
    VOLUME = {49},
      YEAR = {2017},
    NUMBER = {2},
     PAGES = {1385--1418},
MRREVIEWER = {Kathrin\ Welker},
}

@article {BerntonGhosalNutz2022,
    AUTHOR = {Bernton, Espen and Ghosal, Promit and Nutz, Marcel},
     TITLE = {Entropic optimal transport: geometry and large deviations},
   JOURNAL = {Duke Math. J.},
  FJOURNAL = {Duke Mathematical Journal},
    VOLUME = {171},
      YEAR = {2022},
    NUMBER = {16},
     PAGES = {3363--3400},
MRREVIEWER = {Marc\ Sedjro},
}

@article {NutzWiesel2022,
    AUTHOR = {Nutz, Marcel and Wiesel, Johannes},
     TITLE = {Entropic optimal transport: convergence of potentials},
   JOURNAL = {Probab. Theory Related Fields},
  FJOURNAL = {Probability Theory and Related Fields},
    VOLUME = {184},
      YEAR = {2022},
    NUMBER = {1-2},
     PAGES = {401--424},
MRREVIEWER = {Yuxin\ Ge},
}

@article{GHOSAL2022109622,
title = {Stability of entropic optimal transport and Schrödinger bridges},
journal = {Journal of Functional Analysis},
volume = {283},
number = {9},
pages = {109622},
year = {2022},
author = {Promit Ghosal and Marcel Nutz and Espen Bernton},
}

@article{Sommerfeld_Munk_inference_OT2017,
    author = {Sommerfeld, Max and Munk, Axel},
    title = {Inference for Empirical Wasserstein Distances on Finite Spaces},
    journal = {Journal of the Royal Statistical Society Series B: Statistical Methodology},
    volume = {80},
    number = {1},
    pages = {219-238},
    year = {2017},
    month = {05},
    eprint = {https://academic.oup.com/jrsssb/article-pdf/80/1/219/49271410/jrsssb_80_1_219.pdf},
}

@inproceedings{liu2023asymptotic,
  title={Asymptotic confidence sets for random linear programs},
  author={Liu, Shuyu and Bunea, Florentina and Niles-Weed, Jonathan},
  booktitle={The Thirty Sixth Annual Conference on Learning Theory},
  pages={3919--3940},
  year={2023},
  organization={PMLR}
}

@article {limit_random_lp22,
    AUTHOR = {Klatt, Marcel and Munk, Axel and Zemel, Yoav},
     TITLE = {Limit laws for empirical optimal solutions in random linear
              programs},
   JOURNAL = {Ann. Oper. Res.},
  FJOURNAL = {Annals of Operations Research},
    VOLUME = {315},
      YEAR = {2022},
    NUMBER = {1},
     PAGES = {251--278},
MRREVIEWER = {I.\ M.\ Stancu-Minasian},
}

@book {boyd2004convex,
    AUTHOR = {Boyd, Stephen and Vandenberghe, Lieven},
     TITLE = {Convex optimization},
 PUBLISHER = {Cambridge University Press, Cambridge},
      YEAR = {2004},
     PAGES = {xiv+716},
      ISBN = {0-521-83378-7},
MRREVIEWER = {Dan\ Butnariu},
}

@article{Deville1992,
  title={Calibration estimators in survey sampling},
  author={Deville, Jean-Claude and S{\"a}rndal, Carl-Erik},
  journal={Journal of the American statistical Association},
  volume={87},
  number={418},
  pages={376--382},
  year={1992},
  publisher={Taylor \& Francis}
}

@article{Kim2010,
  title={Calibration estimation in survey sampling},
  author={Kim, Jae Kwang and Park, Mingue},
  journal={International Statistical Review},
  volume={78},
  number={1},
  pages={21--39},
  year={2010},
  publisher={Wiley Online Library}
}

@article{Wu2003,
  title={Optimal calibration estimators in survey sampling},
  author={Wu, Changbao},
  journal={Biometrika},
  volume={90},
  number={4},
  pages={937--951},
  year={2003},
  publisher={Oxford University Press}
}

@article{Montanari2005,
  title={Nonparametric model calibration estimation in survey sampling},
  author={Montanari, Giorgio E and Ranalli, M Giovanna},
  journal={Journal of the American Statistical Association},
  volume={100},
  number={472},
  pages={1429--1442},
  year={2005},
  publisher={Taylor \& Francis}
}

@article{Imbens1994,
  title={Combining micro and macro data in microeconometric models},
  author={Imbens, Guido W and Lancaster, Tony},
  journal={The Review of Economic Studies},
  volume={61},
  number={4},
  pages={655--680},
  year={1994},
  publisher={Wiley-Blackwell}
}

@article{Hellerstein1999,
  title={Imposing moment restrictions from auxiliary data by weighting},
  author={Hellerstein, Judith K and Imbens, Guido W},
  journal={Review of Economics and Statistics},
  volume={81},
  number={1},
  pages={1--14},
  year={1999},
  publisher={MIT Press 238 Main St., Suite 500, Cambridge, MA 02142-1046, USA journals}
}

@article{Sadinle2019sequentially,
  title={Sequentially additive nonignorable missing data modelling using auxiliary marginal information},
  author={Sadinle, Mauricio and Reiter, Jerome P},
  journal={Biometrika},
  volume={106},
  number={4},
  pages={889--911},
  year={2019},
  publisher={Oxford University Press}
}

@article{Akande2021leveraging,
  title={Leveraging auxiliary information on marginal distributions in nonignorable models for item and unit nonresponse},
  author={Akande, Olanrewaju and Madson, Gabriel and Hillygus, D Sunshine and Reiter, Jerome P},
  journal={Journal of the Royal Statistical Society Series A: Statistics in Society},
  volume={184},
  number={2},
  pages={643--662},
  year={2021},
  publisher={Oxford University Press}
}

@article{Nevo2003,
  title={Using weights to adjust for sample selection when auxiliary information is available},
  author={Nevo, Aviv},
  journal={Journal of Business \& Economic Statistics},
  volume={21},
  number={1},
  pages={43--52},
  year={2003},
  publisher={Taylor \& Francis}
}

@article{Chen2008,
 author = {Xiaohong Chen and Han Hong and Alessandro Tarozzi},
 journal = {The Annals of Statistics},
 number = {2},
 pages = {808--843},
 publisher = {Institute of Mathematical Statistics},
 title = {Semiparametric Efficiency in GMM Models with Auxiliary Data}, 
 volume = {36},
 year = {2008}
}

@article{Qian2025,
  title={A deep generative framework for joint households and individuals population synthesis},
  author={Qian, Xiao and Gangwal, Utkarsh and Dong, Shangjia and Davidson, Rachel},
  journal={Applied Soft Computing},
  pages={114375},
  year={2025},
  publisher={Elsevier}
}

@article{Sane2025,
  title={Population synthesis with deep generative model: {A} joint household-individual approach},
  author={San{\'e}, Abdoul Razac and Belaroussi, Rachid and Hankach, Pierre and Vandanjon, Pierre-Olivier},
  journal={Computational Urban Science},
  volume={5},
  number={1},
  pages={34},
  year={2025},
  publisher={Springer}
}

@article{GMM1982,
 author = {Lars Peter Hansen},
 journal = {Econometrica},
 number = {4},
 pages = {1029--1054},
 publisher = {[Wiley, Econometric Society]},
 title = {Large Sample Properties of Generalized Method of Moments Estimators}, 
 volume = {50},
 year = {1982}
}

@article{Wu2001model,
  title={A model-calibration approach to using complete auxiliary information from survey data},
  author={Wu, Changbao and Sitter, Randy R},
  journal={Journal of the American Statistical Association},
  volume={96},
  number={453},
  pages={185--193},
  year={2001},
  publisher={Taylor \& Francis}
}

@article {Benamou_brener2000,
    AUTHOR = {Benamou, Jean-David and Brenier, Yann},
     TITLE = {A computational fluid mechanics solution to the
              {M}onge-{K}antorovich mass transfer problem},
   JOURNAL = {Numer. Math.},
  FJOURNAL = {Numerische Mathematik},
    VOLUME = {84},
      YEAR = {2000},
    NUMBER = {3},
     PAGES = {375--393},
MRREVIEWER = {Enrique\ Fern\'andez Cara},
}

@book{mccann1994convexity,
  title={A convexity theory for interacting gases and equilibrium crystals},
  author={McCann, Robert John},
  year={1994},
  publisher={Princeton University}
}

@article{Otto2000GeneralizationOA,
  title={Generalization of an Inequality by Talagrand and Links with the Logarithmic Sobolev Inequality},
  author={Felix Otto and C{\'e}dric Villani},
  journal={Journal of Functional Analysis},
  year={2000},
  volume={173},
  pages={361-400},
}

@article{Lott2004RicciCF,
  title={Ricci curvature for metric-measure spaces via optimal transport},
  author={John Lott and C{\'e}dric Villani},
  journal={Annals of Mathematics},
  year={2004},
  volume={169},
  pages={903-991},
}

@article {Figalli2010Invent,
    AUTHOR = {Figalli, A. and Maggi, F. and Pratelli, A.},
     TITLE = {A mass transportation approach to quantitative isoperimetric
              inequalities},
   JOURNAL = {Invent. Math.},
  FJOURNAL = {Inventiones Mathematicae},
    VOLUME = {182},
      YEAR = {2010},
    NUMBER = {1},
     PAGES = {167--211},
MRREVIEWER = {Lorenzo\ Brasco},
}

@article {Kim2010JEMS,
    AUTHOR = {Kim, Young-Heon and McCann, Robert J.},
     TITLE = {Continuity, curvature, and the general covariance of optimal
              transportation},
   JOURNAL = {J. Eur. Math. Soc. (JEMS)},
  FJOURNAL = {Journal of the European Mathematical Society (JEMS)},
    VOLUME = {12},
      YEAR = {2010},
    NUMBER = {4},
     PAGES = {1009--1040},
MRREVIEWER = {Lorenzo\ Brasco},
}

@article {Figall2011duke,
    AUTHOR = {Carrillo, J. A. and DiFrancesco, M. and Figalli, A. and
              Laurent, T. and Slep\v cev, D.},
     TITLE = {Global-in-time weak measure solutions and finite-time
              aggregation for nonlocal interaction equations},
   JOURNAL = {Duke Math. J.},
  FJOURNAL = {Duke Mathematical Journal},
    VOLUME = {156},
      YEAR = {2011},
    NUMBER = {2},
     PAGES = {229--271},
MRREVIEWER = {Teemu\ Lukkari},
}

@article {Caffarelli2013Reine,
    AUTHOR = {Caffarelli, Luis and Figalli, Alessio},
     TITLE = {Regularity of solutions to the parabolic fractional obstacle
              problem},
   JOURNAL = {J. Reine Angew. Math.},
  FJOURNAL = {Journal f\"ur die Reine und Angewandte Mathematik. [Crelle's
              Journal]},
    VOLUME = {680},
      YEAR = {2013},
     PAGES = {191--233},
MRREVIEWER = {Luca\ Lorenzi},
}

@article {Figalli2015CMP,
    AUTHOR = {Figalli, A. and Fusco, N. and Maggi, F. and Millot, V. and
              Morini, M.},
     TITLE = {Isoperimetry and stability properties of balls with respect to
              nonlocal energies},
   JOURNAL = {Comm. Math. Phys.},
  FJOURNAL = {Communications in Mathematical Physics},
    VOLUME = {336},
      YEAR = {2015},
    NUMBER = {1},
     PAGES = {441--507},
MRREVIEWER = {Isabel\ M. C. Salavessa},
}

@article {MR3314838,
    AUTHOR = {Colombo, Maria and De Pascale, Luigi and Di Marino, Simone},
     TITLE = {Multimarginal optimal transport maps for one-dimensional
              repulsive costs},
   JOURNAL = {Canad. J. Math.},
  FJOURNAL = {Canadian Journal of Mathematics. Journal Canadien de
              Math\'{e}matiques},
    VOLUME = {67},
      YEAR = {2015},
    NUMBER = {2},
     PAGES = {350--368},
MRREVIEWER = {Lorenzo Brasco},
}

@article {MR2110613,
    AUTHOR = {Ekeland, Ivar},
     TITLE = {An optimal matching problem},
   JOURNAL = {ESAIM Control Optim. Calc. Var.},
  FJOURNAL = {ESAIM. Control, Optimisation and Calculus of Variations},
    VOLUME = {11},
      YEAR = {2005},
    NUMBER = {1},
     PAGES = {57--71},
MRREVIEWER = {Luigi De Pascale},
}

@article {MR2564439,
    AUTHOR = {Chiappori, Pierre-Andr\'{e} and McCann, Robert J. and Nesheim,
              Lars P.},
     TITLE = {Hedonic price equilibria, stable matching, and optimal
              transport: equivalence, topology, and uniqueness},
   JOURNAL = {Econom. Theory},
  FJOURNAL = {Economic Theory},
    VOLUME = {42},
      YEAR = {2010},
    NUMBER = {2},
     PAGES = {317--354},
MRREVIEWER = {Wilfrid Gangbo},
}

@article{Carlier2010MatchingFT,
  title={Matching for teams},
  author={Guillaume Carlier and Ivar Ekeland},
  journal={Economic Theory},
  year={2010},
  volume={42},
  pages={397-418}
}

@article{de2021diffusion,
  title={Diffusion schr{\"o}dinger bridge with applications to score-based generative modeling},
  author={De Bortoli, Valentin and Thornton, James and Heng, Jeremy and Doucet, Arnaud},
  journal={Advances in Neural Information Processing Systems},
  volume={34},
  pages={17695--17709},
  year={2021}
}

@inproceedings{wang2021deep,
  title={Deep generative learning via Schr{\"o}dinger bridge},
  author={Wang, Gefei and Jiao, Yuling and Xu, Qian and Wang, Yang and Yang, Can},
  booktitle={International conference on machine learning},
  pages={10794--10804},
  year={2021},
  organization={PMLR}
}

@article{hamdouche2023generative,
  title={Generative modeling for time series via Schr{\"o}dinger bridge},
  author={Hamdouche, Mohamed and Henry-Labordere, Pierre and Pham, Huy{\^e}n},
  journal={arXiv preprint arXiv:2304.05093},
  year={2023}
}

@article{lambert2022variational,
  title={Variational inference via Wasserstein gradient flows},
  author={Lambert, Marc and Chewi, Sinho and Bach, Francis and Bonnabel, Silv{\`e}re and Rigollet, Philippe},
  journal={Advances in Neural Information Processing Systems},
  volume={35},
  pages={14434--14447},
  year={2022}
}

@inproceedings{diao2023forward,
  title={Forward-backward Gaussian variational inference via JKO in the Bures-Wasserstein space},
  author={Diao, Michael Ziyang and Balasubramanian, Krishna and Chewi, Sinho and Salim, Adil},
  booktitle={International Conference on Machine Learning},
  pages={7960--7991},
  year={2023},
  organization={PMLR}
}

@inproceedings{
pydi2021the,
title={The Many Faces of Adversarial Risk},
author={Muni Sreenivas Pydi and Varun Jog},
booktitle={Thirty-Fifth Conference on Neural Information Processing Systems},
year={2021},
}

@article{Trillos2020AdversarialCN,
  author  = {Nicol{\'a}s {Garc{\'i}a Trillos} and Ryan W. Murray},
  title   = {Adversarial Classification: Necessary Conditions and Geometric Flows},
  journal = {Journal of Machine Learning Research},
  year    = {2022},
  volume  = {23},
  number  = {187},
  pages   = {1--38},
}

@article{jakwang2023JMLR,
author = {Trillos, Nicol\'{a}s Garc\'{\i}a and Kim, Jakwang and Jacobs, Matt},
title = {The multimarginal optimal transport formulation of adversarial multiclass classification},
year = {2024},
issue_date = {January 2023},
publisher = {JMLR.org},
volume = {24},
number = {1},
journal = {J. Mach. Learn. Res.},
month = {mar},
articleno = {45},
numpages = {56},
}

@article{jakwang_2024existence,
  title={On the existence of solutions to adversarial training in multiclass classification},
  author={Garc\'{\i}a Trillos, Nicol\'{a}s and Jacobs, Matt and Kim, Jakwang},
  journal={European Journal of Applied Mathematics},
  pages={1--21},
  year={2024},
  publisher={Cambridge University Press}
}

@article{jakwang_2024optimal,
author = {Garc\'{\i}a Trillos, Nicol\'{a}s and Jacobs, Matt and Kim, Jakwang and Werenski, Matthew},
title = {An optimal transport approach for computing adversarial training lower bounds in multiclass classification},
year = {2024},
issue_date = {January 2024},
publisher = {JMLR.org},
volume = {25},
number = {1},
journal = {J. Mach. Learn. Res.},
month = jan,
articleno = {393},
numpages = {45},
}

@article{mei2018mean,
  title={A mean field view of the landscape of two-layer neural networks},
  author={Mei, Song and Montanari, Andrea and Nguyen, Phan-Minh},
  journal={Proceedings of the National Academy of Sciences},
  volume={115},
  number={33},
  pages={E7665--E7671},
  year={2018},
  publisher={National Acad Sciences}
}

@article{chizat2018global,
  title={On the global convergence of gradient descent for over-parameterized models using optimal transport},
  author={Chizat, Lenaic and Bach, Francis},
  journal={Advances in neural information processing systems},
  volume={31},
  year={2018}
}

@article{sirignano2020mean,
  title={Mean field analysis of neural networks: A law of large numbers},
  author={Sirignano, Justin and Spiliopoulos, Konstantinos},
  journal={SIAM Journal on Applied Mathematics},
  volume={80},
  number={2},
  pages={725--752},
  year={2020},
  publisher={SIAM}
}

@article {Rotskoff2022CPAM,
    AUTHOR = {Rotskoff, G. M. and Vanden-Eijnden, E.},
     TITLE = {Trainability and accuracy of artificial neural networks: an
              interacting particle system approach},
   JOURNAL = {Comm. Pure Appl. Math.},
  FJOURNAL = {Communications on Pure and Applied Mathematics},
    VOLUME = {75},
      YEAR = {2022},
    NUMBER = {9},
     PAGES = {1889--1935},
}

@article{Gangbo1998OptimalMF,
  title={Optimal maps for the multidimensional Monge-Kantorovich problem},
  author={Wilfrid Gangbo and Andrzej \'{S}wi{e}ch},
  journal={Communications on Pure and Applied Mathematics},
  year={1998},
  volume={51},
  pages={23-45}
}

@article{Kim2013MultimarginalOT,
  title={Multi-marginal optimal transport on Riemannian manifolds},
  author={Young-Heon Kim and Brendan Pass},
  journal={American Journal of Mathematics},
  year={2013},
  volume={137},
  pages={1045 - 1060}
}

@article{KimPass-SIAM2014,
author = {Kim, Young-Heon and Pass, Brendan},
title = {A General Condition for Monge Solutions in the Multi-Marginal Optimal Transport Problem},
journal = {SIAM Journal on Mathematical Analysis},
volume = {46},
number = {2},
pages = {1538-1550},
year = {2014},

eprint = { 
    
        https://doi.org/10.1137/130930443
    
    

}
,
}

@article {MR3423275,
    AUTHOR = {Pass, Brendan},
     TITLE = {Multi-marginal optimal transport: theory and applications},
   JOURNAL = {ESAIM Math. Model. Numer. Anal.},
  FJOURNAL = {ESAIM. Mathematical Modelling and Numerical Analysis},
    VOLUME = {49},
      YEAR = {2015},
    NUMBER = {6},
     PAGES = {1771--1790},
}

@article {MR3482267,
    AUTHOR = {Kitagawa, Jun and Pass, Brendan},
     TITLE = {The multi-marginal optimal partial transport problem},
   JOURNAL = {Forum Math. Sigma},
  FJOURNAL = {Forum of Mathematics. Sigma},
    VOLUME = {3},
      YEAR = {2015},
     PAGES = {Paper No. e17, 28},
MRREVIEWER = {Edgard Pimentel},
}

@article{seidl2007strictly,
  title={Strictly correlated electrons in density-functional theory: A general formulation with applications to spherical densities},
  author={Seidl, Michael and Gori-Giorgi, Paola and Savin, Andreas},
  journal={Physical Review A},
  volume={75},
  number={4},
  pages={042511},
  year={2007},
  publisher={APS}
}

@article{buttazzo2012optimal,
  title={Optimal-transport formulation of electronic density-functional theory},
  author={Buttazzo, Giuseppe and De Pascale, Luigi and Gori-Giorgi, Paola},
  journal={Physical Review A},
  volume={85},
  number={6},
  pages={062502},
  year={2012},
  publisher={APS}
}

@article {MR3020313,
    AUTHOR = {Cotar, Codina and Friesecke, Gero and Kl\"{u}ppelberg, Claudia},
     TITLE = {Density functional theory and optimal transportation with
              {C}oulomb cost},
   JOURNAL = {Comm. Pure Appl. Math.},
  FJOURNAL = {Communications on Pure and Applied Mathematics},
    VOLUME = {66},
      YEAR = {2013},
    NUMBER = {4},
     PAGES = {548--599},
MRREVIEWER = {Gabriel Stoltz},
}

@article{mendl2013kantorovich,
  title={Kantorovich dual solution for strictly correlated electrons in atoms and molecules},
  author={Mendl, Christian B and Lin, Lin},
  journal={Physical Review B},
  volume={87},
  number={12},
  pages={125106},
  year={2013},
  publisher={APS}
}

@article {MR2801182,
    AUTHOR = {Agueh, Martial and Carlier, Guillaume},
     TITLE = {Barycenters in the {W}asserstein space},
   JOURNAL = {SIAM J. Math. Anal.},
  FJOURNAL = {SIAM Journal on Mathematical Analysis},
    VOLUME = {43},
      YEAR = {2011},
    NUMBER = {2},
     PAGES = {904--924},
MRREVIEWER = {Beno\^{\i}t Kloeckner},
}

@inproceedings{cuturi2014fast,
  title={Fast computation of Wasserstein barycenters},
  author={Cuturi, Marco and Doucet, Arnaud},
  booktitle={International conference on machine learning},
  pages={685--693},
  year={2014},
  organization={PMLR}
}

@article{benamou2015iterative,
  title={Iterative Bregman projections for regularized transportation problems},
  author={Benamou, Jean-David and Carlier, Guillaume and Cuturi, Marco and Nenna, Luca and Peyr{\'e}, Gabriel},
  journal={SIAM Journal on Scientific Computing},
  volume={37},
  number={2},
  pages={A1111--A1138},
  year={2015},
  publisher={SIAM}
}

@article {MR3423268,
    AUTHOR = {Carlier, Guillaume and Oberman, Adam and Oudet, Edouard},
     TITLE = {Numerical methods for matching for teams and {W}asserstein
              barycenters},
   JOURNAL = {ESAIM Math. Model. Numer. Anal.},
  FJOURNAL = {ESAIM. Mathematical Modelling and Numerical Analysis},
    VOLUME = {49},
      YEAR = {2015},
    NUMBER = {6},
     PAGES = {1621--1642},
MRREVIEWER = {Alessio Brancolini},
}

@article{srivastava2018scalable,
  title={Scalable Bayes via barycenter in Wasserstein space},
  author={Srivastava, Sanvesh and Li, Cheng and Dunson, David B},
  journal={The Journal of Machine Learning Research},
  volume={19},
  number={1},
  pages={312--346},
  year={2018},
  publisher={JMLR. org}
}

@article {MR4809472,
    AUTHOR = {Pass, Brendan and Vargas-Jim\'enez, Adolfo},
     TITLE = {A general framework for multi-marginal optimal transport},
   JOURNAL = {Math. Program.},
  FJOURNAL = {Mathematical Programming},
    VOLUME = {208},
      YEAR = {2024},
    NUMBER = {1-2},
     PAGES = {75--110},
}

@misc{pass2025dynamicalformulationmultimarginaloptimal,
      title={A dynamical formulation of multi-marginal optimal transport}, 
      author={Brendan Pass and Yair Shenfeld},
      year={2025},
      eprint={2509.22494},
      archivePrefix={arXiv},
      primaryClass={math.OC},
}

@book{politis1999subsampling,
  title={Subsampling},
  author={Politis, D.N. and Romano, J.P. and Wolf, M.},
  isbn={9780387988542},
  lccn={99015017},
  series={Springer Series in Statistics},
  year={1999},
  publisher={Springer New York}
}

@article{politis1994large,
  title={Large sample confidence regions based on subsamples under minimal assumptions},
  author={Politis, Dimitris N and Romano, Joseph P},
  journal={The Annals of Statistics},
  pages={2031--2050},
  year={1994},
  publisher={JSTOR}
}

@book{efron1994introduction,
  title={An Introduction to the Bootstrap},
  author={Efron, Bradley and Tibshirani, Robert J},
  year={1994},
  publisher={Chapman and Hall/CRC}
}

@article{shao1994bootstrap,
 author = {Jun Shao},
 journal = {Proceedings of the American Mathematical Society},
 number = {4},
 pages = {1251--1262},
 publisher = {American Mathematical Society},
 title = {Bootstrap Sample Size in Nonregular Cases},
 volume = {122},
 year = {1994}
}

@article{andrews2000inconsistency,
  title={Inconsistency of the bootstrap when a parameter is on the boundary of the parameter space},
  author={Andrews, Donald WK},
  journal={Econometrica},
  pages={399--405},
  year={2000},
  publisher={JSTOR}
}

@article{fang2019inference,
  title={Inference on directionally differentiable functions},
  author={Fang, Zheng and Santos, Andres},
  journal={The Review of Economic Studies},
  volume={86},
  number={1},
  pages={377--412},
  year={2019},
  publisher={Oxford University Press}
}

@article{moonboot,
    title = {moonboot: An R Package Implementing m-out-of-n Bootstrap Methods},
    author = {Christoph Dalitz and Felix Lögler},
    year = {2025},
    journal = {The R Journal},
    volume = {17},
    issue = {3},
    pages = {125-137},
    note = {https://github.com/cdalitz/moonboot/},
  }

@article{Dalitz_2025,
   title={moonboot: An R Package Implementing m-out-of-n Bootstrap Methods},
   volume={17},
   number={3},
   journal={The R Journal},
   publisher={The R Foundation},
   author={Dalitz, Christoph and Lögler, Felix},
   year={2025},
   month=oct, pages={125–137} }

@article{Yang2007CV,
 author = {Yuhong Yang},
 journal = {The Annals of Statistics},
 number = {6},
 pages = {2450--2473},
 publisher = {Institute of Mathematical Statistics},
 title = {Consistency of Cross Validation for Comparing Regression Procedures},
 volume = {35},
 year = {2007}
}

@article{vdVaart2006,
title = {Oracle inequalities for multi-fold cross validation},
title = {},
author = {Aad W. {van der Vaart} and Sandrine {Dudoit} and Mark J. {van der Laan}},
pages = {351--371},
volume = {24},
number = {3},
journal = {Statistics \& Decisions},
year = {2006},
lastchecked = {2026-03-22}
}

\appendices

\section{Appendix: Optimal transport projection framework}\label{sec:appendix:deferred proofs}

The problem \eqref{eq: Wasserstein projection} can be generalized as follows. Let $\mu^i \in \mathcal{P}(\mathcal{X}_i)$ for $i=1, \dots, K$ and $\pi^0 \in \mathcal{P}(\mathcal{Z})$. Assume that $c : \mathcal{X}_1 \times \dots \times \mathcal{X}_K \times \mathcal{Z} \to \mathbb{R}$ is continuous. The problem of interest is 
\begin{equation}\label{eq: primal}
    \min_{\pi \in \Pi(\bm{\mu})} \min_{\gamma \in \Pi(\pi, \pi^0)} \int c(\boldsymbol{x}, \boldsymbol{z}) d\gamma(\boldsymbol{x}, \boldsymbol{z}).
\end{equation}
This is the optimal transport projection of $\pi^0$ onto $\Pi(\bm{\mu})$. This projection problem can be equivalently formulated as the following MOT problem:
\begin{equation}\label{eq: primal MOT version}
    \min_{\gamma \in \Pi(\bm{\mu}, \pi^0)} \int  c d\gamma
\end{equation}
where the marginal constraints are given as
\[
    (\mathrm{P}_{x_i})_\#(\gamma) = \mu^i \text{ for $i=1, \dots, K$}, \quad (\mathrm{P}_{\bm{z}})_\#(\gamma) = \pi^0.
\]
Under the assumption on $c$ as above, \eqref{eq: primal MOT version} has a solution \cite[Theorem 4.1]{villani2008optimal}. Let us denote an optimal multimarginal coupling by $\gamma^*$. Then, a solution for \eqref{eq: primal}, denoted by $\pi^*$, is attained from $\gamma^*$ by
\begin{equation*}\label{eq: pi^*}
    \pi^* := (\mathrm{P}_{\boldsymbol{x}})_\# (\gamma^*)
\end{equation*}
where $\mathrm{P}_{\boldsymbol{x}}$ is the projection onto $\mathcal{X}_1 \times \dots \times \mathcal{X}_K$.

The first theorem concerns the stability of \eqref{eq: primal MOT version}, hence of \eqref{eq: primal} in terms of input measures. The convergence result is already well known in the literature: see \cite[Theorem 5.20]{villani2008optimal}.

\begin{theorem}\label{theorem: stability of MOT}
Let $m=m(t), n=n(t) \to \infty$ as $t \to \infty$. Assume that $\pi^0_m \to \pi^0, \mu^i_n \to \mu^i$ for $1\leq i \leq K$ weakly and
\[
    \liminf_{t \to \infty} \min_{\gamma \in \Pi(\mu_n^1, \dots, \mu_n^K, \pi_m^0)} \int  c d\gamma < \infty.
\]
If $\{\gamma^*_t\}$ is the sequence of minimizers for \eqref{eq: primal MOT version} with $\mu^1_{n(t)}, \dots, \mu^K_{n(t)}, \pi^0_{m(t)}$, then there is a subsequence that converges to $\gamma^*$, and $\gamma^*$ is a minimizer for \eqref{eq: primal MOT version} with $\mu^1, \dots, \mu^K, \pi^0$. Furthermore, there is a subsequence of $\{\pi^*_t := (\mathrm{P}_{\boldsymbol{x}})_\# (\gamma^*_t)\}$ that converges to some $\pi^*$, and $\pi^*$ is a solution for \eqref{eq: primal} with $\mu^1, \dots, \mu^K, \pi^0$.
\end{theorem}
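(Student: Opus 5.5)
The argument follows the standard stability proof for optimal transport, as in \cite[Theorem 5.20]{villani2008optimal}, adapted to the multimarginal constraint set $\Pi(\bm{\mu}, \pi^0)$. There are four steps: tightness, identification of the limit's marginals, a lower semicontinuity bound, and optimality of the limit. The projection to $\pi^*$ then follows directly.

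\textbf{Tightness and marginals.} Weakly convergent sequences on Polish spaces are tight by Prokhorov's theorem. Hence $\{\mu^i_{n(t)}\}_t$ is tight for each $i$, and so is $\{\pi^0_{m(t)}\}_t$. A family of couplings whose marginals lie in tight families is itself tight: given $\varepsilon>0$, choose compact sets $K_i \subset \mathcal{X}_i$ and $K_0 \subset \mathcal{Z}$ carrying mass at least $1-\varepsilon/(K+1)$ uniformly in $t$. Then $K_1\times\dots\times K_K\times K_0$ carries mass at least $1-\varepsilon$ under every $\gamma^*_t$. Prokhorov therefore gives a subsequence with $\gamma^*_t \to \gamma^*$ weakly. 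Since the coordinate projections $\mathrm{P}_{x_i}$ and $\mathrm{P}_{\bm z}$ are continuous, the marginals of $\gamma^*$ are the weak limits $\mu^i$ and $\pi^0$. So $\gamma^* \in \Pi(\bm{\mu}, \pi^0)$.

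\textbf{Optimality.} This is the main step. Only continuity of $c$ is assumed, so I would also assume, as in Villani, that $c$ is bounded below. Lower semicontinuity of $\gamma \mapsto \int c\, d\gamma$ under weak convergence (Portmanteau applied to $c \wedge M$, then letting $M\uparrow\infty$) gives
\[
\int c\, d\gamma^* \le \liminf_t \int c\, d\gamma^*_t < \infty .
\]
For the reverse inequality, the cleanest route is Villani's. Each $\gamma^*_t$ is $c$-cyclically monotone, viewed as a plan between $\pi_t := (\mathrm{P}_{\bm x})_\#\gamma^*_t$ and $\pi^0_{m(t)}$ with the constraint that the $\bm{x}$-marginals of competitors lie in $\Pi(\bm\mu_{n(t)})$. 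This description is awkward in the multimarginal setting. I would instead argue by approximation. Fix an optimizer $\tilde\gamma$ for the limit problem. Build competitors $\tilde\gamma_t \in \Pi(\bm\mu_{n(t)}, \pi^0_{m(t)})$ with $\tilde\gamma_t \to \tilde\gamma$ and $\int c\,d\tilde\gamma_t \to \int c\,d\tilde\gamma$. They come from gluing: glue $\tilde\gamma$ along each marginal with optimal couplings between $\mu^i$ and $\mu^i_{n(t)}$, and between $\pi^0$ and $\pi^0_{m(t)}$. This produces a plan on the doubled space, which is then projected.

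Convergence of the costs is where the difficulty lies. It needs either $c$ bounded and continuous (weak convergence suffices) or a uniform integrability bound. Under a growth condition $|c| \lesssim \sum_i d_i^p + d_{\mathcal Z,p}^p$, weak convergence plus convergence of $p$-th moments, i.e.\ $W_p$ convergence, is enough. I would state this extra hypothesis as the one implicitly needed; the finite-liminf assumption in the theorem plays the role of ruling out infinite cost. The resulting chain of inequalities is
\[
\int c\,d\gamma^* \le \liminf_t \int c\,d\gamma^*_t \le \lim_t \int c\,d\tilde\gamma_t = \int c\,d\tilde\gamma ,
\]
so $\gamma^*$ is a minimizer.

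\textbf{Projection.} Finally, $\mathrm{P}_{\bm x}$ is continuous, so $\pi^*_t=(\mathrm{P}_{\bm x})_\#\gamma^*_t \to (\mathrm{P}_{\bm x})_\#\gamma^* =: \pi^*$ along the same subsequence. Since \eqref{eq: primal} and \eqref{eq: primal MOT version} are equivalent, $\pi^*$ solves \eqref{eq: primal} with inputs $\mu^1,\dots,\mu^K,\pi^0$.
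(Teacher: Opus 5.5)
Your tightness argument, the identification of the limit marginals, the lower semicontinuity bound, and the final projection step are all fine. Adding $\inf c>-\infty$, as Villani does, is also harmless.

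The gap is the reverse inequality. Since $\tilde\gamma$ is optimal for the limit problem and $\gamma^*_t$ for the $t$-th problem, your chain asserts $\liminf_t \min_{\Pi(\bm\mu_{n(t)},\pi^0_{m(t)})}\int c\,d\gamma\le\min_{\Pi(\bm\mu,\pi^0)}\int c\,d\gamma$, i.e.\ convergence of the optimal values. This is false under the hypotheses of the theorem. Take $K=1$, $\mathcal X_1=\mathcal Z=\mathbb R$, $c(x,z)=|x-z|^2$, $\mu^1_{n(t)}=(1-\frac1t)\delta_0+\frac1t\delta_{\sqrt t}\to\delta_0$ and $\pi^0_{m(t)}=\delta_0$. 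The only admissible plan $\mu^1_{n(t)}\otimes\delta_0$ has cost $1$ for every $t$, so the finite-liminf hypothesis holds. These plans converge to $\delta_{(0,0)}$, which is optimal for the limit problem with cost $0$. So the theorem's conclusion holds, yet your middle inequality reads $1\le 0$. No choice of competitors can repair this, since every competitor costs at least $1$.

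Hence your extra hypotheses (bounded $c$, or a growth bound with $W_p$ convergence) are not ``implicitly needed''. Under them your argument is correct, but it proves a strictly weaker statement; the theorem as stated needs only weak convergence.

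The missing idea is the one you set aside: cyclical monotonicity. This is exactly how \cite[Theorem 5.20]{villani2008optimal}, which the paper invokes, is proved. It is not awkward in the $(K+1)$-marginal formulation \eqref{eq: primal MOT version}. Call $\Gamma\subset\mathcal X_1\times\dots\times\mathcal X_K\times\mathcal Z$ $c$-cyclically monotone if $\sum_{j=1}^N c(\bm x^j,\bm z^j)\le\sum_{j=1}^N c(x_1^{\sigma_1(j)},\dots,x_K^{\sigma_K(j)},\bm z^j)$ for every finite family $(\bm x^j,\bm z^j)_{j\le N}\subset\Gamma$ and all permutations $\sigma_1,\dots,\sigma_K$ of $\{1,\dots,N\}$. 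The argument then runs as follows.

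(a) Along a subsequence with finite costs, $\spt\gamma^*_t$ is $c$-cyclically monotone. Otherwise, by continuity of $c$, pick product neighbourhoods $U^j$ of the offending points on which the strict inequality persists and $c$ is bounded. Let $\gamma_j$ be the normalized restriction of $\gamma^*_t$ to $U^j$. Replacing $\varepsilon\sum_j\gamma_j$ by $\varepsilon\sum_j(\mathrm P_{x_1})_\#\gamma_{\sigma_1(j)}\otimes\dots\otimes(\mathrm P_{x_K})_\#\gamma_{\sigma_K(j)}\otimes(\mathrm P_{\bm z})_\#\gamma_j$ keeps all $K+1$ marginals and strictly lowers the cost.

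(b) Every point of $\spt\gamma^*$ is a limit of points of $\spt\gamma^*_t$ (Portmanteau on open balls). By continuity of $c$ the finite inequalities pass to the limit, so $\spt\gamma^*$ is $c$-cyclically monotone.

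(c) Lower semicontinuity together with the finite-liminf hypothesis gives $\int c\,d\gamma^*<\infty$, so the limit problem has finite value. This, not uniform integrability, is the actual role of that hypothesis.

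(d) The multi-marginal counterpart of the sufficiency half of Villani's Theorem 5.10 (a $c$-cyclically monotone plan of finite cost is optimal) concludes.

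Because this argument only manipulates finitely many points of the supports, it never requires the optimal values to converge. That is why it survives the example above, where your competitor argument cannot.
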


For separable cost functions which are written as the sum of cost functions of pairwise coordinates, one can obtain $\gamma^*$ and $\pi^*$ more precisely by shadow proposed by \citet{Eckstein_Nutz_2022}. Note that there are possibly other optimal solutions not obtained by shadow.

\begin{theorem}[Separable case]\label{thm: separable case}
Assume that $\mathcal{Z} = \mathcal{Z}_1 \times \dots \times \mathcal{Z}_K$ and a cost function is separable in the sense that
\[
    c(\boldsymbol{x}, \boldsymbol{z}) = \sum_{i=1}^K c_i(x_i, z_i)
\]
for $c_i : \mathcal{X}_i \times \mathcal{Z}_i \to \mathbb{R}$. For each $i=1, \dots, K$, let $\nu^i := (\mathrm{P}_{z_i})_\# (\pi^0)$ where $\mathrm{P}_{z_i}$ is the projection from $\mathcal{Z}$ to $\mathcal{Z}_i$. Then
\begin{equation}\label{eq: separable primal}
    \min_{\gamma \in \Pi(\bm{\mu}, \pi^0)} \int  c d\gamma= \sum_{i=1}^K \min_{\gamma^i \in \Pi(\mu^i, \nu^i)} \int c_i d\gamma^i.
\end{equation}

Let $\gamma^{i,*}$ be an optimal coupling for $\min_{\gamma^i \in \Pi(\mu^i, \nu^i)} \int c_i d\gamma^i$, which can be written as $\gamma^{i,*}(dx_i, dz_i) = \nu^i(dz_i) \kappa^i(dx_i |z_i)$ by disintegration formula. Then
\begin{equation*}
    \gamma^*(d \boldsymbol{x}, d\boldsymbol{z}) = \pi^0(d\boldsymbol{z}) \kappa(d\boldsymbol{x} | \boldsymbol{z})
\end{equation*}
is optimal where 
\[
    \kappa(d\boldsymbol{x} | \boldsymbol{z}):=  \kappa_1(dx_1 |z_1) \otimes \dots \otimes \kappa_K(dx_K |z_K).
\]
In particular, an optimal $\pi^*$ for \eqref{eq: primal} is also obtained by
    \[
        \pi^*(d\boldsymbol{x}) = \int_{\mathcal{Z}} \pi^0(d\boldsymbol{z}) \kappa(d\boldsymbol{x} | \boldsymbol{z}).
    \]
\end{theorem}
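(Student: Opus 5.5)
We prove \eqref{eq: separable primal} by two inequalities and read off the optimality of $\gamma^*$ from the construction used in the second one.

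\emph{Lower bound.} Take any $\gamma \in \Pi(\bm{\mu}, \pi^0)$ and set $\gamma^i := (\mathrm{P}_{x_i}, \mathrm{P}_{z_i})_\# \gamma$. The $x_i$-marginal of $\gamma^i$ is $\mu^i$ by the constraint on $\gamma$. Its $z_i$-marginal is $(\mathrm{P}_{z_i})_\#\pi^0 = \nu^i$, since the $\bm{z}$-marginal of $\gamma$ is $\pi^0$. Hence $\gamma^i \in \Pi(\mu^i, \nu^i)$. Because $c$ is separable,
\[
\int c\, d\gamma = \sum_{i=1}^K \int c_i\, d\gamma^i \geq \sum_{i=1}^K \min_{\Pi(\mu^i,\nu^i)} \int c_i .
\]
Taking the infimum over $\gamma$ gives ``$\geq$'' in \eqref{eq: separable primal}. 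Here the integrals are assumed to make sense, for instance $c_i$ bounded below (as for $d_i^p$), and optimal couplings $\gamma^{i,*}$ are assumed to exist, for instance $c_i$ lower semicontinuous and bounded below, as in \cite[Theorem 4.1]{villani2008optimal}.

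\emph{Upper bound via the glued measure.} Disintegrate $\gamma^{i,*}(dx_i,dz_i) = \nu^i(dz_i)\kappa_i(dx_i|z_i)$. This is possible because the spaces are Polish, and $\kappa_i$ is a measurable kernel determined $\nu^i$-a.e. Define $\gamma^* := \pi^0(d\bm{z})\,\kappa_1(dx_1|z_1)\otimes\cdots\otimes\kappa_K(dx_K|z_K)$. This is a well-defined probability measure, since $\bm{z}\mapsto\kappa(\cdot|\bm{z})$ is a measurable kernel.

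The main step is to check the marginals. The $\bm{z}$-marginal is $\pi^0$ because each $\kappa_i(\cdot|z_i)$ is a probability measure. For the $(x_i,z_i)$-marginal, integrate out $x_j$ for $j\neq i$; this leaves $\pi^0(d\bm{z})\kappa_i(dx_i|z_i)$. Then integrate out $z_j$ for $j\neq i$, which turns $\pi^0$ into $\nu^i$ and gives $\nu^i(dz_i)\kappa_i(dx_i|z_i) = \gamma^{i,*}$. In particular the $x_i$-marginal is $\mu^i$, so $\gamma^*\in\Pi(\bm{\mu},\pi^0)$.

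The one subtlety is that $\kappa_i$ is defined only $\nu^i$-a.e. The set where it is undefined is $\{\bm{z}: z_i\in N_i\}$ with $\nu^i(N_i)=0$, and this set is $\pi^0$-null. So the construction does not depend on the version of $\kappa_i$ chosen.

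By separability again,
\[
\int c\,d\gamma^* = \sum_i \int c_i\, d\gamma^{i,*} = \sum_i \min_{\Pi(\mu^i,\nu^i)} \int c_i .
\]
This gives ``$\leq$'', hence equality in \eqref{eq: separable primal}, and shows that $\gamma^*$ is optimal for \eqref{eq: primal MOT version}.

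\emph{Consequence for \eqref{eq: primal}.} The problem \eqref{eq: primal} equals \eqref{eq: primal MOT version}: any pair $(\pi,\gamma)$ with $\pi\in\Pi(\bm{\mu})$ and $\gamma\in\Pi(\pi,\pi^0)$ is an element of $\Pi(\bm{\mu},\pi^0)$, and conversely any $\gamma\in\Pi(\bm{\mu},\pi^0)$ gives such a pair with $\pi = (\mathrm{P}_{\bm{x}})_\#\gamma$. Therefore $\pi^* = (\mathrm{P}_{\bm{x}})_\#\gamma^* = \int \pi^0(d\bm{z})\kappa(d\bm{x}|\bm{z})$ is optimal for \eqref{eq: primal}.

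The argument is elementary. The only place needing care is the marginal computation for $\gamma^*$, which relies on the product structure of $\kappa$ together with the measurability and a.e.-definedness of the kernels.
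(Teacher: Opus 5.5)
Your proof is correct and is essentially the shadow argument of \cite[Lemma 3.2]{Eckstein_Nutz_2022}, to which the paper defers without giving a separate proof. The lower bound comes from pushing any $\gamma\in\Pi(\bm{\mu},\pi^0)$ forward to its pairwise couplings $(\mathrm{P}_{x_i},\mathrm{P}_{z_i})_\#\gamma\in\Pi(\mu^i,\nu^i)$, and the matching upper bound and the optimality of $\gamma^*$ come from the product-kernel construction, whose marginals you verify correctly; the integrability and existence hypotheses you state explicitly (e.g.\ $c_i$ lower semicontinuous and bounded below) are only implicit in the statement's use of ``$\min$'' and in the splitting $\int c\,d\gamma=\sum_i\int c_i\,d\gamma^i$, so flagging them is appropriate.
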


Recall the problem setting. Let $\pi^0 \in \Pi(\bm{\mu})$ be the true coupling generating the data. We denote by $\left\{ \bm{Z}_j=(Z_{1j}, \dots, Z_{Kj}) : j=1, \dots, m \right\}$ and $\{ X_{1j} : j=1, \dots, n \}, \dots, \{ X_{Kj} : j=1, \dots, n \}$ the set of coupled data i.i.d. from $\pi^0$, and the sets of marginal data, respectively. We use
\[
    \pi^0_m:= \sum_{j=1}^m \delta_{\bm{Z}_j}, \quad \mu^i_n:= \sum_{j=1}^n \delta_{X_{ij}} \text{ for $i=1, \dots, K$.}
\]
The problem is to reconstruct a coupling which should be consistent with the marginal observations. In other words, it should lie in $\Pi(\bm{\mu}_n)$, and close to $\pi^0_m$ as much as possible.

Using the optimal transport-projection framework, or the equivalent multimarginal optimal transport formulation, one can construct a coupling over $\mu^i_n$'s as follows: first, compute $\hat{\gamma}$, which is a solution for
\[
    \min_{\gamma \in \Pi(\bm{\mu}_n, \pi^0_m)} \int  c d\gamma,
\]
and the estimated coupling $\hat{\pi}$ is obtained by the push forward measure of $\hat{\gamma}$ by $\mathrm{P}_{\boldsymbol{x}}$ as
\[
    \hat{\pi} = (\mathrm{P}_{\boldsymbol{x}})_\# (\hat{\gamma}).
\]

As a corollary of \Cref{theorem: stability of MOT}, we can derive the consistency of an estimator $\hat{\pi}$.

\begin{corollary}[Consistency]
Assume that $\mathcal{Z} = \mathcal{X}_1 \times \dots \times \mathcal{X}_K$, $c \in L^1((\otimes \bm{\mu}_n) \otimes \pi^0)$, $c$ is nonnegative and $c(\bm{x}, \bm{x})=0$ for all $\bm{x} \in \mathcal{Z}$, and $\pi^0 \in \Pi(\bm{\mu})$. Let $m=m(t), n=n(t) \to \infty$ as $t \to \infty$, and $\hat{\pi}_t$ be a solution for \eqref{eq: primal} with empirical input. Then there is a subsequence of $\{ \hat{\pi}_t \}$ that converges to $\pi^0$.
\end{corollary}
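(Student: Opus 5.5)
We plan to deduce the consistency statement directly from \Cref{theorem: stability of MOT}, applied with the empirical inputs $\mu^i_{n(t)}$ and $\pi^0_{m(t)}$, and then to identify the limit via a uniqueness argument for the limiting projection problem.

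\textbf{Step 1 (weak convergence of the inputs).} By Varadarajan's theorem, almost surely $\pi^0_{m(t)} \to \pi^0$ and $\mu^i_{n(t)} \to \mu^i$ weakly for every $i$, since the $\mathcal{X}_i$ are Polish. We work on this full-measure event throughout.

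\textbf{Step 2 (finiteness of the liminf).} We must verify
\[
    \liminf_{t\to\infty} \min_{\gamma \in \Pi(\bm{\mu}_{n}, \pi^0_{m})} \int c\, d\gamma < \infty .
\]
The natural competitor is the product coupling $(\otimes \bm{\mu}_n) \otimes \pi^0_m$, which lies in $\Pi(\bm{\mu}_n,\pi^0_m)$. Its cost is finite by the integrability hypothesis $c \in L^1((\otimes\bm{\mu}_n)\otimes\pi^0)$. We still need a bound that is uniform in $t$, and this is the main obstacle: pointwise finiteness of each term does not by itself bound the liminf. We would try first to read the hypothesis as uniform integrability, or to pass via the strong law of large numbers to $\int c\, d((\otimes\bm{\mu})\otimes\pi^0)$, which would give a finite limit. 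An alternative, which makes the bound transparent, is to combine continuity with a moment bound, or to assume bounded $c$ on compact supports. With such a uniform bound the liminf is finite.

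\textbf{Step 3 (limit is a projection).} \Cref{theorem: stability of MOT} now yields a subsequence along which $\hat{\gamma}_t \to \gamma^*$ and $\hat\pi_t = (\mathrm{P}_{\bm{x}})_\#\hat\gamma_t \to \pi^*$. Here $\gamma^*$ is optimal for \eqref{eq: primal MOT version} with marginals $(\bm{\mu},\pi^0)$, and $\pi^*$ solves \eqref{eq: primal} with the limiting data.

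\textbf{Step 4 (identification $\pi^*=\pi^0$).} Because $\pi^0\in\Pi(\bm{\mu})$, the diagonal coupling $(\mathrm{Id},\mathrm{Id})_\#\pi^0$ is admissible and has zero cost, using $c(\bm{x},\bm{x})=0$. Since $c\ge 0$, the optimal value is $0$, so $\gamma^*$ is concentrated on $\{c=0\}$.

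To conclude $(\mathrm{P}_{\bm{x}})_\#\gamma^*=\pi^0$, we need $\{c=0\}$ to equal the diagonal, i.e.\ $c(\bm{x},\bm{z})>0$ for $\bm{x}\neq\bm{z}$. This holds for $c=d_{\mathcal{Z},p}^p$, and we would add or interpret it as implicit in the hypotheses. Under it, $\gamma^*=(\mathrm{Id},\mathrm{Id})_\#\pi^0$ and hence $\pi^*=\pi^0$, which completes the proof.
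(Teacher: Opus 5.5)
Your route is the one the paper has in mind: Varadarajan for the inputs, \Cref{theorem: stability of MOT} for a convergent subsequence of optimizers, then identification of the limit through the zero-cost diagonal coupling. The paper itself offers nothing beyond ``corollary of \Cref{theorem: stability of MOT}''.

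Your Step 4 caveat is more than a convenience: the statement as written is false without $c(\bm{x},\bm{z})>0$ for $\bm{x}\neq\bm{z}$. Indeed $c\equiv 0$ satisfies every listed hypothesis, and then every element of $\Pi(\bm{\mu}_n)$ is a solution. The choice $\hat\pi_t=\otimes\bm{\mu}_{n(t)}$ converges to $\otimes\bm{\mu}$, which differs from $\pi^0$ unless $\pi^0$ is the independent coupling. With that hypothesis added, your argument is correct: $\gamma^*$ charges only the diagonal and so equals $(\mathrm{Id},\mathrm{Id})_\#\pi^0$.

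The genuine gap is Step 2, which you leave as a menu of options rather than an argument. Uniform integrability is not among the hypotheses. The SLLN route would need a strong law for the $(K+1)$-sample average $C_t:=\int c\,d\big((\otimes\bm{\mu}_{n(t)})\otimes\pi^0_{m(t)}\big)$ along an arbitrary path $(n(t),m(t))$. That is not the standard i.i.d.\ statement and would need its own proof. Also, finiteness of each individual $C_t$ is automatic because the measures are finitely supported; that is not what the integrability hypothesis is for.

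The missing idea is that you only need a finite $\liminf$, not convergence. Read the hypothesis as $c\in L^1((\otimes\bm{\mu})\otimes\pi^0)$. The literal version with $\bm{\mu}_n$ only gives $\mathbb{E}[C_t\mid\text{marginal data}]<\infty$ for each fixed $n$, with no control as $n\to\infty$. Take the $K$ marginal samples and the coupled sample independent, as in the setup. Then each term $c(X_{1j_1},\dots,X_{Kj_K},\bm{Z}_l)$ of $C_t$ has its arguments drawn from independent families, so $\mathbb{E}C_t=\int c\,d((\otimes\bm{\mu})\otimes\pi^0)<\infty$ for every $t$. Since $C_t\ge 0$, Fatou's lemma gives $\mathbb{E}\liminf_t C_t<\infty$, hence $\liminf_t C_t<\infty$ almost surely. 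The optimal MOT value is at most $C_t$, so its $\liminf$ is finite on this event. Intersecting this full-measure event with the one from Step 1, Steps 3 and 4 go through.
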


\subsection{Entropic optimal transport}\label{sec:appendix: EOT}
The entropic regularization of \eqref{eq: primal} with a general cost $c$ is defined as
\begin{equation}\label{eq: entropic primal}
    \min_{\pi \in \Pi(\bm{\mu})} \min_{\gamma \in \Pi(\pi, \pi^0)} \int c(\boldsymbol{x}, \boldsymbol{z}) d\gamma(\boldsymbol{x}, \boldsymbol{z}) + \eta \mathrm{KL} \left( \gamma \, \Vert\,  (\otimes \bm{\mu}_n) \otimes \pi^0 \right).
\end{equation}
It is equivalent to 
\begin{equation}\label{eq: entropic MOT}
    \min_{\gamma \in \Pi(\bm{\mu}, \pi^0)} \int c d \gamma  + \eta \mathrm{KL} \left( \gamma \, \Vert\,  (\otimes \bm{\mu}_n) \otimes \pi^0 \right), 
\end{equation}
which is also regarded as the entropic regularization of \eqref{eq: primal MOT version}. 
Due to the strong convexity of the KL divergence, there is a unique optimal coupling $\gamma^{\eta}_{\star}$ for \eqref{eq: entropic MOT}. This leads to a unique solution for \eqref{eq: entropic primal}, denoted by $\pi^{\eta}_{\star}$. Note that since $(\mathrm{P}_{\bm{x}})_\# \gamma^{\eta}_{\star} \ll (\otimes \bm{\mu})$, $\pi^{\eta}_{\star}$ is absolutely continuous with respect to $(\otimes \bm{\mu})$.

For the separable cost case, i.e., $c(\bm{x}, \bm{z}) = \sum c_i(x_i, z_i)$, one can utilize the entropic shadow approach. Given a coupling $\gamma \in \Pi(\bm{\mu}, \pi^0)$, let $\gamma^i(dx_i, dz_i) := (\mathrm{P}_{x_i}, \mathrm{P}_{z_i})_\# (\gamma)$ be its marginal coupling between $\mu^i$ and $\nu^i = (\mathrm{P}_{z_i})_\# (\pi^0)$. Consider
\begin{align}\label{eq: local entropic MOT}
    \min_{\gamma \in \Pi(\bm{\mu}, \pi^0)} \sum_{i=1}^K \int  c_i(x_i, z_i)  \gamma(d \boldsymbol{x}, d \boldsymbol{z)} + \eta \mathrm{KL}(\gamma^i \, \Vert\, \mu^i \otimes \nu^i).
\end{align}
For $i=1, \dots, K$, let $\gamma^{i,\eta} \in \Pi(\mu^i, \nu^i)$ be optimal for 
\[
    \min_{\gamma^{i} \in \Pi(\mu^i, \nu^i)} \int  c_i  d\gamma^i  + \eta \mathrm{KL}(\gamma^i \, \Vert\, \mu^i \otimes \nu^i),
\]
and $\kappa^{i,\eta}(dx_i |z_i) \nu^i(dz_i) = \gamma^{i, \eta}(dx_i, d z_i)$ by disintegration formula. The MOT entropic coupling for \eqref{eq: local entropic MOT}, denoted by $\gamma^\eta_{\dagger}$, is constructed by
\[
    \gamma^\eta_{\dagger} (d\boldsymbol{x}, d \boldsymbol{z} ) = \pi^0(d \boldsymbol{z}) \kappa^\eta(d \boldsymbol{x} |\boldsymbol{z})
\]
where $\kappa^\eta = \kappa^{1,\eta}(dx_1 |z_1) \otimes \dots \otimes \kappa^{K,\eta}(dx_K |z_K)$. Then, similarly, the projection solution is obtained by $\pi^\eta_{\dagger}:=(\mathrm{P}_{\bm{x}})_\# \gamma^\eta_{\dagger}$.

As mentioned, due to the entropic bias, both estimators $\pi^\eta_{\star}$ and $\gamma^{\eta}_{\dagger}$ are not necessarily the same as $\pi^*$. It is known that $\pi^\eta_{\star}$ and $\gamma^{\eta}_{\dagger}$ converge to $\pi^*$ as $\eta \to 0$ in $\Gamma$-convergence sense under mild conditions.

\begin{definition}\label{def: Gamma convergence}
Given the sequence of functionals $\{ F_n \}$ over a topological space $\mathcal{X}$, the $\Gamma$-lower limit $\Gamma \text{-}\liminf_{n \to \infty} F_n : \mathcal{X} \to \mathbb{R}\cup\{\pm \infty\}$ as
\[
    \left( \Gamma \text{-}\liminf_{n \to \infty} F_n \right) (x) := \inf \left\{ \liminf_{n \to \infty} F_n (x_n) : x_n \to x \right\},
\]
and the $\Gamma$-upper limit $\Gamma \text{-}\limsup_{n \to \infty} F_n : \mathcal{X} \to \mathbb{R}\cup\{\pm \infty\}$ as
\[
    \left( \Gamma \text{-}\limsup_{n \to \infty} F_n \right) (x) := \inf \left\{ \limsup_{n \to \infty} F_n (x_n) : x_n \to x \right\},
\]
are defined for every $x \in \mathcal{X}$. The $\Gamma$-limit is $F : \mathcal{X} \to \mathbb{R}\cup\{\pm \infty\}$ if for any $x \in \mathcal{X}$,
\[
    F(x) := \left(\Gamma \text{-}\lim_{n \to \infty} F_n \right)(x)= \left( \Gamma \text{-}\liminf_{n \to \infty} F_n \right) (x) = \left( \Gamma \text{-}\limsup_{n \to \infty} F_n \right) (x).
\]
\end{definition}

\begin{proposition}[$\Gamma$-convergence]
Assume that either $c$ is non-negative and lower semicontinuous or continuous such that $c \in L^1((\otimes \bm{\mu}_n) \otimes \pi^0)$. Then, as $\eta \to 0$, both \eqref{eq: entropic MOT} and \eqref{eq: local entropic MOT} $\Gamma$-converge to \eqref{eq: primal MOT version} and \eqref{eq: separable primal}, respectively. Furthermore, there is a subsequence of $\pi^\eta_{\star}$ which converges to a solution for \eqref{eq: primal}. Similarly, there is a subsequence of $\pi^\eta_{\dagger}$ which converges to a solution for \eqref{eq: separable primal}.
\end{proposition}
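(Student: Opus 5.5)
The plan is to derive both claims from standard $\Gamma$-convergence of entropic OT, applied separately to \eqref{eq: entropic MOT} and to \eqref{eq: local entropic MOT}, followed by the fundamental theorem of $\Gamma$-convergence.

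\emph{Setting.} Let $\eta \to 0$ along a sequence. All functionals live on the convex set $\Pi(\bm{\mu}, \pi^0)$ with the weak topology. This set is weakly compact by tightness of the marginals and Prokhorov's theorem. Compactness gives equicoercivity for free, so only the $\Gamma$-liminf and $\Gamma$-limsup inequalities need to be checked.

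\emph{$\Gamma$-liminf.} Since $\mathrm{KL} \geq 0$, we have $F_\eta(\gamma) \geq \int c\, d\gamma$, where $F_\eta$ denotes either entropic functional and $\int c\,d\gamma$ is the unregularized cost. Suppose $\gamma_\eta \to \gamma$ weakly. If $c$ is nonnegative and lower semicontinuous, the portmanteau theorem gives $\liminf_\eta \int c\, d\gamma_\eta \geq \int c\, d\gamma$. If instead $c$ is continuous and $c \in L^1$ of the product reference measure, I would argue as follows. Write $c = c^+ - c^-$. Use a uniform integrability argument for $c$ along couplings in $\Pi(\bm{\mu},\pi^0)$, which holds because $|c|$ is dominated in terms of the marginals. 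This handles the negative part and yields lower semicontinuity of $\gamma \mapsto \int c\, d\gamma$ on the compact set. For the separable functional the same bound holds term by term.

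\emph{$\Gamma$-limsup (recovery sequence).} This is the main obstacle. Fix $\gamma$ with finite cost; in general $\mathrm{KL}(\gamma \| \cdot) = \infty$, so the constant sequence does not work. I would use the block approximation (shadow-type smoothing) of \cite{CarlierDuvalPeyreSchmitzer2017} and \cite[Section 5]{Nutz_introEOT2022}. Partition each factor space into cells of size $\delta$. Replace $\gamma$ on each product cell by the product of the conditional marginals restricted to that cell. This gives $\gamma_\delta \in \Pi(\bm{\mu},\pi^0)$ with $\gamma_\delta \to \gamma$ weakly as $\delta \to 0$ and $\mathrm{KL}(\gamma_\delta \| \text{ref}) \leq C\log(1/\delta)$ (or at least finite). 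By continuity and integrability of $c$, $\int c\, d\gamma_\delta \to \int c\, d\gamma$. Choose $\delta = \delta(\eta) \to 0$ slowly enough that $\eta\,\mathrm{KL}(\gamma_{\delta(\eta)}) \to 0$; this gives $\limsup F_\eta(\gamma_{\delta(\eta)}) \leq \int c\, d\gamma$.

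For \eqref{eq: local entropic MOT}, the KL term involves only the bivariate marginals $\gamma^i$. Applying the block construction produces $\gamma_\delta$ whose marginals $\gamma_\delta^i$ also have finite entropy, since KL decreases under pushforward. Alternatively, one can work directly with shadows: recover each $\gamma^i$ by $\gamma^i_\delta$ and glue them via the kernel product as in \Cref{thm: separable case}. The gluing preserves the marginal constraints, and the cost splits as a sum, so the two-marginal recovery results apply coordinatewise.

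\emph{Conclusion.} By the fundamental theorem of $\Gamma$-convergence under compactness, minimum values converge. Moreover, any weak cluster point of the minimizers $\gamma^\eta_\star$ (respectively $\gamma^\eta_\dagger$) minimizes the limit problem, and such cluster points exist by compactness. The map $(\mathrm{P}_{\bm{x}})_\#$ is weakly continuous, so $\pi^\eta_\star$ and $\pi^\eta_\dagger$ have subsequences converging to the projected minimizers. These are solutions of \eqref{eq: primal}, using \Cref{thm: separable case} and \eqref{eq: separable primal} to identify the limit problem in the separable case.
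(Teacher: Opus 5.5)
Your architecture is the standard route: compactness of $\Pi(\bm{\mu},\pi^0)$, then the $\Gamma$-liminf from $\mathrm{KL}\geq 0$ plus lower semicontinuity of $\gamma\mapsto\int c\,d\gamma$, then block-approximation recovery sequences with $\delta(\eta)\to 0$ slowly, and finally the fundamental theorem of $\Gamma$-convergence together with weak continuity of $(\mathrm{P}_{\bm{x}})_\#$. The paper gives no argument of its own beyond citing \cite[Section 5]{Nutz_introEOT2022}.

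The genuine gap is the recovery sequence under the first alternative of the hypothesis. You get $\int c\,d\gamma_\delta\to\int c\,d\gamma$ ``by continuity of $c$''. For $c$ merely nonnegative and lower semicontinuous, weak convergence only yields $\liminf\geq$, and no other construction can help, because the claim is false in that case. Take $K=2$, $\mathcal{X}_1=\mathcal{X}_2=[0,1]$, $\mu^1=\mu^2$ Lebesgue, $\pi^0=(\mathrm{Id},\mathrm{Id})_\#\mu^1$, and $c(\bm{x},\bm{z})=1_{\{x_1\neq z_1\}}+1_{\{x_2\neq z_2\}}$, which is nonnegative, lower semicontinuous and separable.
\begin{itemize}
\item The limit problem has value $0$, and every optimizer has $\bm{x}=\bm{z}$ almost surely, so $\pi^*=\pi^0$.
\item The measures $\mu^1\otimes\mu^2\otimes\pi^0$ and $\mu^i\otimes\nu^i$ give zero mass to $\{x_i=z_i\}$. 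Hence every finite-entropy coupling in \eqref{eq: entropic MOT} or \eqref{eq: local entropic MOT} costs exactly $2$.
\item So the entropic values are identically $2$, and the $\Gamma$-limsup at $(\mathrm{Id},\mathrm{Id})_\#\pi^0$ is at least $2>0$.
\item Moreover $\pi^\eta_\star=\pi^\eta_\dagger=\mu^1\otimes\mu^2\neq\pi^0$ for every $\eta$.
\end{itemize}
Your proof, and the statement, must therefore be restricted to continuous $c$. The lsc results of \citet{Christian2012} concern Schr\"odinger problems whose reference measures satisfy a large deviation principle with rate $c$, not a product reference measure plus a cost.

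Even for continuous $c$, the integrability you invoke in both inequalities is asserted rather than derived. Knowing $c\in L^1$ of the product reference measure does not make $|c|$ dominated by functions of the marginals, nor uniformly integrable along $\Pi(\bm{\mu},\pi^0)$. Your block approximations have densities of order (cell mass)$^{-1}$ relative to the product, so they can charge regions the product barely sees. For instance:
\begin{itemize}
\item Take $K=1$, $\mathcal{X}_1=\mathcal{Z}=\mathbb{N}$, and $\mu^1=\pi^0$ with weights $p_{2k}=p_{2k+1}$ of finite Shannon entropy.
\item Set $c(2k,2k+1)=-1/p_{2k}$ and $c=0$ elsewhere; then $c$ is integrable for the product.
\item Swapping the masses at $2k$ and $2k+1$ gives couplings converging to the diagonal coupling. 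Each has cost $-1$ and entropy bounded by the Shannon entropy of $(p_k)$, while the diagonal costs $0$.
\end{itemize}
So even the $\Gamma$-liminf fails.

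The repair is a hypothesis such as $|c(\bm{x},\bm{z})|\leq\sum_i a_i(x_i)+b(\bm{z})$ with $a_i\in L^1(\mu^i)$ and $b\in L^1(\pi^0)$, or simply $c$ bounded. Then $c$ is uniformly integrable along all of $\Pi(\bm{\mu},\pi^0)$. This gives both lower semicontinuity of $\gamma\mapsto\int c\,d\gamma$ and $\int c\,d\gamma_\delta\to\int c\,d\gamma$.

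The partitions should also be finite: a large compact set cut into cells of diameter at most $\delta$, plus one leftover cell. With countably many cells, $\mathrm{KL}(\gamma_\delta)$ may be infinite. The leftover cells are exactly where the domination is needed.

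Finally, in the finitely supported setting where the paper actually uses the result, none of this machinery is needed. Every coupling there has entropy bounded by a constant depending only on the support sizes, so constant sequences are already recovery sequences.
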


\subsection{Limit distribution for finite support case}\label{sec:appendix:limit for finite case}
Recall the linear programming of interest
\begin{equation}\label{def: general finite linear primal}
    \min_{\gamma \in \mathbb{R}^{S_*}} \langle \boldsymbol{c}, \gamma \rangle,\quad \textrm{s.t.}\ \bm{A}\gamma= \bm{b}_{n,m},\ \gamma\geq\bm{0}
\end{equation}
where $\boldsymbol{c} \in \mathbb{R}^{S_*}$ is the vectorization of $s_1 \times \dots \times s_K \times s_0$ cost tensor (defined as $d_{\mathcal{Z}, p}^p$), $\bm{A} \in  \mathbb{R}^{S_+ \times S_*}$ is the projection matrix corresponding the marginal constraints, and 
\begin{equation*}
    \bm{b}_{n,m}:=
\begin{bmatrix}
\bm{\mu}_n\\
\pi^0_m
\end{bmatrix},
\quad 
\bm{b} := 
\begin{bmatrix}
\bm{\mu}\\
\pi^0
\end{bmatrix}\in \mathbb{R}^{s_1} \times \dots \times \mathbb{R}^{s_K} \times \mathbb{R}^{s_0}
\end{equation*}
be the vector of marginal distributions.

The lemma below is a black box to attain \Cref{thm: limit distribution of projection}.

\begin{lemma}{\cite[Theorem 4]{liu2023asymptotic}}\label{lemma: aymptotic of linear programming}
Suppose \eqref{def: general finite linear primal} satisfies \Cref{assumption: finite support case}. If $\bm{b}_{n,m}$ has the distribution limit $\mathbb{G}$ and the rate $r_{n,m}$, and $\gamma(\bm{b})$ is singleton, then
\begin{equation*}
    r_{n,m} ( \gamma(\bm{b}_{n,m}) - \gamma(\bm{b}) ) \overset{d}{\longrightarrow} p^*_{\bm{b}}(\mathbb{G})
\end{equation*}
where $p^*_{\bm{b}}(\mathbb{G})$ is the set of optimal solutions to the following linear program:
\begin{equation*}
    \min_{p \in \mathbb{R}^{S_*}} \langle \boldsymbol{c}, p \rangle,\quad \textrm{s.t.}\ \bm{A}p= \mathbb{G},\ p_{ij} \geq 0 \text{ for all } (ij) \notin \text{spt}(\gamma(\bm{b})).
\end{equation*}
Here $\gamma(\bm{b}_{n,m}) - \gamma(\bm{b}) := \{ \gamma -\gamma(\bm{b}) : \gamma \in \gamma(\bm{b}_{n,m}) \}$.
\end{lemma}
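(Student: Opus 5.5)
The plan is to prove the statement by a localization (delta-method) argument for the solution correspondence of a linear program, reducing it to a deterministic stability statement and then applying Skorokhod representation and a continuous mapping argument.

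\textbf{Step 1: rescale the problem.} Write $\gamma^* $ for the unique element of $\gamma(\bm{b})$ and set $\mathbb{G}_{n,m}:=r_{n,m}(\bm{b}_{n,m}-\bm{b})$. For any $\gamma\in\gamma(\bm{b}_{n,m})$, put $p:=r_{n,m}(\gamma-\gamma^*)$. Then $\gamma$ is optimal for \eqref{def: general finite linear primal} exactly when $p$ is optimal for the rescaled LP
\[
\min_{p}\ \langle \bm{c},p\rangle \quad \textrm{s.t.}\ \bm{A}p=\mathbb{G}_{n,m},\ p_j\ge -r_{n,m}\gamma^*_j \text{ for all } j.
\]
Off $\spt(\gamma^*)$ the constraint reads $p_j\ge 0$. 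On $\spt(\gamma^*)$ the lower bound $-r_{n,m}\gamma^*_j$ tends to $-\infty$. So, formally, the rescaled feasible set converges to the feasible set of the limiting LP, i.e.\ to the tangent cone of $\{\gamma\ge 0\}$ at $\gamma^*$ intersected with the affine constraint. This identifies $p^*_{\bm{b}}(\mathbb{G})$ as the natural limit.

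\textbf{Step 2: deterministic stability.} Show that if $g_k\to g$ and $r_k\to\infty$, then the optimal sets $P_k(g_k)$ of the rescaled LPs converge to $p^*_{\bm{b}}(g)$ in Hausdorff distance. This has three parts.
\begin{itemize}
\item Feasibility and recovery: by Slater's condition and full rank of $\bm{A}$, Hoffman's error bound shows the rescaled feasible sets are nonempty for large $k$. It also shows any feasible point of the limit LP can be approximated by feasible points of the $k$-th LP, giving the $\limsup$ (recovery) inequality for the optimal values.
\item Uniform boundedness: since $\gamma(\bm{b})$ is a bounded singleton, the dual has a strictly complementary solution. The primal solution map of an LP is Hausdorff upper Lipschitz in the right-hand side (Walkup--Wets / Robinson). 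Hence $\|\gamma-\gamma^*\|\le L\|\bm{b}_{n,m}-\bm{b}\|$ for all $\gamma\in\gamma(\bm{b}_{n,m})$, i.e.\ $\|p\|\le L\|g_k\|$, which is uniformly bounded. This boundedness step is where I expect the main difficulty.
\item Identification of the limit: every cluster point of $P_k(g_k)$ is feasible for the limit LP and attains its value, so it lies in $p^*_{\bm{b}}(g)$. Conversely, the piecewise-linear (polyhedral) structure shows each point of $p^*_{\bm{b}}(g)$ is a limit of optimal $p_k$. Since the pieces come from finitely many bases, the optimal face of the rescaled LP coincides with that of the limit LP once $r_k$ is large enough that the constraints $p_j\ge -r_k\gamma^*_j$ on the support become inactive on bounded sets.
\end{itemize}

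\textbf{Step 3: probabilistic conclusion.} Invoke Skorokhod's representation to realize $\mathbb{G}_{n,m}\to\mathbb{G}$ almost surely. Apply Step~2 pathwise and conclude by the (set-valued) continuous mapping theorem in the Hausdorff metric on compact sets. The upper-Lipschitz bound in Step~2 also gives the tightness needed to pass from almost sure convergence back to convergence in distribution.

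\textbf{Main obstacle.} The hardest step is the lower inclusion in Step~2, i.e.\ showing the whole set $p^*_{\bm{b}}(g)$, not just a subset, is attained, together with the uniform bound. I would first try to obtain both from the finiteness of the set of bases: the optimal value function $v(g)$ of the limit LP is the directional derivative of the original value function at $\bm{b}$. Matching values plus a Hoffman bound applied to the optimal face then yields Hausdorff convergence.
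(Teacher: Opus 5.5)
Your proof is correct, and it takes a genuinely different route from the paper. The paper gives no argument of its own; it imports the result from \cite[Theorem 4]{liu2023asymptotic} and \cite[Theorem 6.1]{limit_random_lp22}. Those works argue through optimal bases, writing the rescaled solutions as basic solutions $\gamma(\mathrm{I};\mathbb{G}_{n,m})$ over a random family of bases that eventually coincides with the optimal bases of the local program.

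You instead reduce everything to a deterministic Hausdorff-stability statement for the rescaled LP, followed by the extended continuous mapping theorem. Once the constraints on $\spt(\gamma^*)$ are inactive on a ball containing the uniformly bounded solution sets, you in fact get $P_k(g_k)=p^*_{\bm{b}}(g_k)$ exactly. This is the local program with right-hand side $g_k$, not $g$, so the real content is continuity of $g\mapsto p^*_{\bm{b}}(g)$.

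Your closing idea supplies exactly this. Apply Hoffman's bound to the optimal face $\{p:\bm{A}p=g,\ \langle\bm{c},p\rangle=v(g),\ p_j\ge 0 \text{ for } j\notin\spt(\gamma^*)\}$. Its right-hand side $(g,v(g))$ is Lipschitz in $g$ because $v$ is a finite polyhedral convex function, so the optimal face depends Hausdorff--Lipschitz on $g$ (the Mangasarian--Shiau estimate). Hence the appeal to finitely many bases is unnecessary, and the step you flagged as the main obstacle is already closed by what you wrote.

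Your route is more conceptual and even quantitative. What the basis route buys is an explicit description of the limit through basic solutions, which is what the paper needs afterwards for the confidence sets in \Cref{lem: confidence_set} and \Cref{thm: confidence sets pi^0}.

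Three small corrections.
\begin{enumerate}
\item Nonemptiness of the perturbed feasible sets, and feasibility of the local program for every $g$, come from Slater's condition plus full row rank, not from Hoffman's bound, which presupposes nonemptiness. For instance, $\gamma\mapsto\bm{A}\gamma$ is open at $\gamma_0>0$, and $p_0+M(\gamma_0-\gamma^*)$ with $\bm{A}p_0=g$ and $M$ large is feasible for the local program.
\item The boundedness step is easy, and no strict complementarity is needed. Robinson's local upper Lipschitz property of polyhedral multifunctions gives it directly. Alternatively, argue by recession: a nonzero $d$ with $\bm{A}d=0$, $d_j\ge 0$ off $\spt(\gamma^*)$ and $\langle\bm{c},d\rangle\le 0$ would make $\gamma^*+td$ optimal for small $t>0$, contradicting uniqueness. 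The same argument shows $p^*_{\bm{b}}(g)$ is compact, so the Hausdorff limit is meaningful.
\item Step 3 needs no tightness. The rescaled solution set is a deterministic function of $\mathbb{G}_{n,m}$, so the Skorokhod copies have the same laws, and almost sure convergence already gives convergence in distribution.
\end{enumerate}
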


The next lemma verifies that our problem satisfies \Cref{assumption: finite support case}.

\begin{lemma}\label{lemma: assumption check}
\eqref{def: general finite linear primal} satisfies \Cref{assumption: finite support case}.

\begin{proof}
First, since our problem is MOT, the solution set $\gamma(\bm{b}) \subseteq [0,1]^{S_*}$ is always nonempty and bounded. In particular, for generic $\bm{c}$ with vanishing diagonal, i.e., $c(\bm{x},\bm{x})=0$ for all $\bm{x} \in \mathcal{Z}$, $\gamma(\bm{b})$ is $\{ (\mathrm{Id}, \mathrm{Id})_\#(\pi^0) \}$, hence singleton.

A projection matrix $\bm{A}$ has rank $S_+-1$, so it does not have full rank. However, there is a canonical way to adjust \eqref{def: general finite linear primal} so that $\bm{A}$ is of full rank discussed in \cite[Section 6]{limit_random_lp22}. Since $\mu^1, \dots, \mu^K$  are probability measures, it is sufficient to consider $\mu^{1, \dagger}, \dots, \mu^{K, \dagger}$ where $\mu^{\dagger}$ is the truncation of a probability vector obtained by removing the last coordinate. Accordingly, let $\bm{A}^\dagger \in \mathbb{R}^{(S_+ - K) \times S_*}$ denote the matrix obtained from $\bm{A}$ by removing its $s_1$-th, $(s_1 + s_2)$-th, $\dots$ and $(s_1 + \dots, s_K)$-th rows. Then, the linear constraint $\bm{A}\gamma= \bm{b}$ reduces to
\[
    \bm{A}^\dagger \gamma =
\begin{bmatrix}
\bm{\mu}^\dagger\\
\pi^{0}
\end{bmatrix},
\]
by which the full rank assumption is satisfied. To avoid multiple notation, we abuse the notation $\bm{A} \gamma = \bm{b}$ so that it should be understood in the sense as we discussed above.

The limit distribution of $\bm{b}_{n,m}$ is as follows. Assuming $\frac{m}{n+m} \to \lambda \in (0, 1)$ as $n,m \to \infty$, it holds that
\begin{align}\label{eq: limit of mu and pi^0}
    \sqrt{\frac{nm}{n+m}}\left( \bm{b}_{n,m} - \bm{b} \right) \overset{d}{\longrightarrow} \mathbb{G}_{\bm{\mu}, \pi^0} =
\begin{bmatrix}
\sqrt{\lambda} \mathbb{G}_{\bm{\mu}}\\
\sqrt{1-\lambda} \mathbb{G}_{\pi^0}
\end{bmatrix} 
\end{align}
where $\mathbb{G}_{\bm{\mu}}$ is the centered Gaussian on $\mathbb{R}^{s_0}$ with the covariance matrix
\begin{equation*}
    \begin{bmatrix}
\Sigma(\mu_1) \\
 & \Sigma(\mu_2)   \\
 &  & \ddots
&  \\
 & & 
& \Sigma(\mu_K)
\end{bmatrix}
\end{equation*}
with $\Sigma(\mu^i)$, the $s_i \times s_i$ covariance matrix of multinomial distribution $\mu^i$, and $\mathbb{G}_{\pi^0}$ is the centered Gaussian on $\mathbb{R}^{s_0}$ with the covariance matrix $\Sigma(\pi^0)$, which is the $s_0 \times s_0$ covariance matrix of multinomial distribution $\pi^0$. The ratio $\lambda$ is natural: see \cite[Remark 6.2]{limit_random_lp22} for more details.

The last one is the Slater's condition. Such a $\gamma_0$ in \Cref{assumption: finite support case} is sometimes called strictly feasible. Slater’s condition implies that strong duality holds, and the problem becomes convex: See \cite[Section 5.2.3]{boyd2004convex} for more details. This condition is trivially satisfied in this setting by $\gamma_0 = (\otimes \bm{\mu}_n) \otimes \pi^0$. Therefore, \Cref{assumption: finite support case} for \eqref{def: general finite linear primal} is satisfied. 

\end{proof}
\end{lemma}

\subsection{The asymptotic confidence sets}\label{subsec: confidence sets}
In the below, we present the asymptotic confidence sets for $\hat{\pi}$ from \cite{liu2023asymptotic}.

Let $[s]:= \{1, \dots, s\}$. Recall that the columns of $\bm{A}$ can be corresponded by $[s_1] \times \dots \times [s_K] \times [s_0]$ (the length of the column of $\bm{A}$ is $S_*$), which is arranged in lexicographical order. For any subset $\mathrm{I} \subseteq [s_1] \times \dots \times [s_K] \times [s_0]$, we denote by $\bm{A}_\mathrm{I}$ the $S_+ \times |\mathrm{I}|$ submatrix of $\bm{A}$ obtained by taking the columns of $\bm{A}$ which are contained in $\mathrm{I}$.
Analogously, for $\gamma \in \mathbb{R}^{S_*}$, we denote by $\gamma_\mathrm{I}$ the vector of length $|\mathrm{I}|$ consisting of the coordinates of $\gamma$ contained in $\mathrm{I}$.

\begin{definition}
A set $\mathrm{I} \subseteq [s_1] \times \dots \times [s_K] \times [s_0]$ is a basis if
\begin{equation*}
	|\mathrm{I}|=k, \quad \operatorname{rank}(\bm{A}_\mathrm{I})=k.\label{Intro:index_set}
\end{equation*}
Given a basis $\mathrm{I}$, $\gamma(\mathrm{I};\bm{b})$ is called the basic solution which is the vector $\gamma$ satisfying 
\begin{equation*}\label{basis_def}
    \gamma(\mathrm{I};\bm{b})_\mathrm{I} = \bm{A}_\mathrm{I}^{-1} \bm{b}, \quad \gamma(\mathrm{I};\bm{b})_{\mathrm{I}^c} = \bm{0}.
\end{equation*} 
If it is feasible, i.e., $\bm{A}_\mathrm{I}^{-1} \bm{b} \geq \bm{0}$, $\gamma(\mathrm{I};\bm{b})$ is called a basic feasible solution.
\end{definition}

\begin{definition}
\begin{align*}
    \mathcal{I}(\bm{b}) &:= \left\{ \mathrm{I} :  \text{$\mathrm{I}$ is a basis, }  \text{$\gamma(\mathrm{I}; \bm{b})$ is a basic feasible solution} \right\},\\
    \mathcal{I}^*(\bm{b}) &:= \left\{ \mathrm{I} \in \mathcal{I}(\bm{b}) : \text{optimal for \eqref{def: general finite linear primal}}\right\}.
\end{align*}
The set of optimal basic solutions of \eqref{def: general finite linear primal} is denoted by
\begin{equation*}
	\mathcal{V}(\bm{b}) := \{\gamma(\mathrm{I}; \bm{b}): \mathrm{I} \in \mathcal{I}^*(\bm{b})\}.
\end{equation*}
\end{definition}

Let $\gamma(\bm{b})$ be a set of optimal solutions for \eqref{def: general finite linear primal}. In terms of linear programming language, $\mathcal{V}(\bm{b})$ is the set of extreme solutions, or vertex solutions, of which elements cannot be written as any convex combination of elements of $\gamma(\bm{b})$. In fact, $\gamma(\bm{b}) =  \operatorname{conv}(\mathcal{V}(\bm{b}))$, the convex hull of $\mathcal{V}(\bm{b})$. Notice that $\gamma(\bm{b})$ is always bounded if $\bm{b}$ is the marginal (probability) constraint vector since \eqref{def: general finite linear primal} is (multimarginal) optimal transport problem.

Now, suppose a statistician solved \eqref{def: general finite linear primal} with an empirical input $\bm{b}_{n,m}$ and obtained a (random) solution $\hat \gamma(\bm{b}_{n,m}) \in \mathcal{V}(\bm{b}_{n,m})$ such that a corresponding basis $\mathrm{I}_n \in \mathcal{I}^*(\bm{b}_{n,m})$. The goal is to construct a confidence set based on $\hat \gamma(\bm{b}_{n,m})$ that will contain at least one element of $\gamma(\bm{b})$ with high probability. 
Let $\gamma(\mathrm{I}_n; \bm{b})$ be the basic solution  obtained by random basis $\mathrm{I}_n$, which may be neither feasible for \eqref{def: general finite linear primal} nor optimal. To remedy these issues, define the projection
\begin{equation}\label{proj}
	\overline{\gamma}^*_n :=  \argmin_{ \gamma \in \gamma(\bm{b})} \| \gamma(\mathrm{I}_n; \bm{b}) -\gamma \|\,.
\end{equation}

The following result is about the confidence set for \eqref{def: general finite linear primal}.

\begin{lemma}{\cite[Theorem 8]{liu2023asymptotic}}\label{lem: confidence_set}
Suppose that \eqref{def: general finite linear primal} satisfies \cref{assumption: finite support case} and $\bm{b}_{n,m}$ satisfies the distributional limit $\mathbb{G}$. Let $G_\alpha$ be an open set such that $\mathbb{P} \{\mathbb{G} \in G_\alpha \} \geq 1 - \alpha$.
	Then
	\begin{equation}
		\liminf_{n \to \infty} \mathbb{P}\left\{ r_{n,m}(\hat\gamma(\bm{b}_{n,m}) - \overline{\gamma}^*_n)\in \gamma(\mathrm{I}_n;G_\alpha) \right\} \geq 1-\alpha\,,
		\label{Confidence set}
	\end{equation}
where $\gamma(\mathrm{I}_n;G_\alpha) := \{\gamma(\mathrm{I}_n;\bm{G}): \bm{G} \in G_\alpha\}$.
\end{lemma}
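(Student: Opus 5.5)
The plan is a basis-stability argument. It uses only the finiteness of the set of bases, the linearity of $\bm{b} \mapsto \gamma(\mathrm{I};\bm{b})$ for a fixed basis $\mathrm{I}$, and the portmanteau theorem.

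First, note that $\hat\gamma(\bm{b}_{n,m}) = \gamma(\mathrm{I}_n;\bm{b}_{n,m})$ by definition of $\mathrm{I}_n$. Linearity then gives
\[
r_{n,m}\bigl(\hat\gamma(\bm{b}_{n,m}) - \gamma(\mathrm{I}_n;\bm{b})\bigr) = \gamma\bigl(\mathrm{I}_n; r_{n,m}(\bm{b}_{n,m}-\bm{b})\bigr).
\]
Hence, on the event $E_n := \{\overline{\gamma}^*_n = \gamma(\mathrm{I}_n;\bm{b})\}$, we have
\[
\{ r_{n,m}(\bm{b}_{n,m}-\bm{b}) \in G_\alpha\}\cap E_n \subseteq \bigl\{ r_{n,m}(\hat\gamma(\bm{b}_{n,m}) - \overline{\gamma}^*_n)\in \gamma(\mathrm{I}_n;G_\alpha) \bigr\}.
\]
Since $G_\alpha$ is open, the portmanteau theorem gives $\liminf_n \mathbb{P}\{r_{n,m}(\bm{b}_{n,m}-\bm{b})\in G_\alpha\} \ge \mathbb{P}\{\mathbb{G}\in G_\alpha\}\ge 1-\alpha$. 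It therefore suffices to show $\mathbb{P}(E_n)\to 1$.

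To prove $\mathbb{P}(E_n)\to1$, I would split the finitely many bases $\mathrm{I}$ into two classes, using that optimality of a basis is certified by nonnegative reduced costs $\bm{c} - \bm{A}^\top \bm{A}_\mathrm{I}^{-\top}\bm{c}_\mathrm{I} \ge \bm{0}$, a condition independent of the right-hand side. If $\bm{A}_\mathrm{I}^{-1}\bm{b}\ge \bm{0}$, then $\gamma(\mathrm{I};\bm{b})$ is feasible and, by the reduced-cost certificate, optimal. It then lies in $\gamma(\bm{b})$, so the projection \eqref{proj} returns it and $E_n$ holds whenever $\mathrm{I}_n=\mathrm{I}$. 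If instead some coordinate of $\bm{A}_\mathrm{I}^{-1}\bm{b}$ is strictly negative, then, because $\bm{b}_{n,m}\to\bm{b}$ in probability (as $r_{n,m}\to\infty$ and $r_{n,m}(\bm{b}_{n,m}-\bm{b})$ is tight), $\mathbb{P}\{\bm{A}_\mathrm{I}^{-1}\bm{b}_{n,m}\ge\bm{0}\}\to 0$. Consequently $\mathbb{P}\{\mathrm{I}_n=\mathrm{I}\}\to0$ for every such basis. A union bound over the finitely many bases of the second kind yields $\mathbb{P}(E_n^c)\to 0$.

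The step I expect to be the main obstacle is this reduced-cost claim. Under degeneracy, a basis $\mathrm{I}_n$ whose basic solution happens to be optimal for $\bm{b}_{n,m}$ need not carry nonnegative reduced costs, and then feasibility of $\gamma(\mathrm{I}_n;\bm{b})$ would not automatically give optimality for $\bm{b}$. I would first try to resolve this by interpreting $\mathcal{I}^*(\bm{b}_{n,m})$ as bases optimal in the simplex sense (primal and dual feasible), which is what a simplex-type solver returns and is, I believe, the convention of \cite{liu2023asymptotic}. If a purely primal notion is required, the fallback is a compactness argument: along any subsequence on which $\mathrm{I}_n\equiv\mathrm{I}$ with positive probability, pass to the limit in $\langle\bm{c},\gamma(\mathrm{I};\bm{b}_{n,m})\rangle = \min$ using continuity of the optimal value in $\bm{b}$, which follows from \Cref{assumption: finite support case}. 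This shows that $\gamma(\mathrm{I};\bm{b})$, whenever feasible, is optimal for $\bm{b}$.
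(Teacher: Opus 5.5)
Your proof is correct. The paper gives no argument for this lemma; it imports the statement as Theorem 8 of \cite{liu2023asymptotic}. So your proposal is a self-contained replacement for a black-box citation, not a variant of a proof in the text. The citation buys brevity, but it leaves implicit two things: which convention for $\mathcal{I}^*(\bm{b}_{n,m})$ is in force, and why the projection \eqref{proj} is harmless. Your argument makes both explicit. Its whole content is the deterministic fact that $\mathcal{I}^*(\bm{b}')\subseteq\mathcal{I}^*(\bm{b})$ for all $\bm{b}'$ near $\bm{b}$. Hence $\overline{\gamma}^*_n=\gamma(\mathrm{I}_n;\bm{b})$ with probability tending to one; the projection only guarantees $\overline{\gamma}^*_n\in\gamma(\bm{b})$ always, as \Cref{cor: confidence_cor} needs. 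After that, linearity of $\bm{b}\mapsto\gamma(\mathrm{I};\bm{b})$ and the portmanteau theorem finish the proof.

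\textbf{Take the fallback route.} Your fallback is the route to take, not the reduced-cost one. The paper defines $\mathcal{I}^*(\bm{b})$ as the bases whose basic feasible solution is optimal, which is a purely primal notion. Degeneracy is also the typical case here: $(\mathrm{Id},\mathrm{Id})_\#\pi^0$ has only $s_0$ nonzero entries, fewer than the $S_+-K$ elements of a basis. So nonnegative reduced costs cannot be assumed for the returned basis. The value-function argument covers either convention and makes the reduced-cost discussion unnecessary.

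\textbf{Phrase that step as a neighborhood argument.} Avoid the loose phrase ``subsequences on which $\mathrm{I}_n\equiv\mathrm{I}$ with positive probability''; argue instead as follows.
\begin{enumerate}
\item Slater's condition and full row rank give primal feasibility near $\bm{b}$. Existence of an optimal solution gives dual feasibility. So the optimal value $v$ is finite and convex near $\bm{b}$, hence continuous at $\bm{b}$.
\item Take any of the finitely many bases $\mathrm{I}$ with $\gamma(\mathrm{I};\bm{b})\notin\gamma(\bm{b})$. The condition ``$\bm{A}_\mathrm{I}^{-1}\bm{b}'\ge\bm{0}$ and $\langle\bm{c},\gamma(\mathrm{I};\bm{b}')\rangle=v(\bm{b}')$'' is relatively closed and fails at $\bm{b}$, so it fails on some neighborhood $U_\mathrm{I}$.
\item Therefore $\mathbb{P}\{\mathrm{I}_n=\mathrm{I}\}\le\mathbb{P}\{\bm{b}_{n,m}\notin U_\mathrm{I}\}\to0$.
\end{enumerate}
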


\begin{corollary}[Confidence set for an optimal solution] {\cite[Corollary 9]{liu2023asymptotic}}\label{cor: confidence_cor}
In the setting of \Cref{lem: confidence_set}, the set $C_n := \{\hat \gamma(\bm{b}_{n,m}) - r_{n,m}^{-1} \gamma: \gamma \in \gamma(\mathrm{I}_n;G_\alpha)\}$ contains an element of $\gamma(\bm{b})$ with asymptotic probability at least $1-\alpha$.
\end{corollary}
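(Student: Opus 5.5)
The argument is a direct reformulation of \Cref{lem: confidence_set}. Its key observation is that the random point $\overline{\gamma}^*_n$ defined in \eqref{proj} is always an element of $\gamma(\bm{b})$. So whenever the event in \eqref{Confidence set} occurs, $\overline{\gamma}^*_n$ itself witnesses that $C_n$ meets $\gamma(\bm{b})$.

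First I would check that $\overline{\gamma}^*_n$ is well defined and lies in $\gamma(\bm{b})$. Under \Cref{assumption: finite support case}, $\gamma(\bm{b})$ is nonempty and bounded. It is also closed and convex, being the optimal face of a linear program, i.e.\ the intersection of the feasible polyhedron with a supporting hyperplane. The Euclidean projection of the (possibly infeasible) basic solution $\gamma(\mathrm{I}_n;\bm{b})$ onto this compact convex set therefore exists and is unique. In particular $\overline{\gamma}^*_n \in \gamma(\bm{b})$ for every realization of the data. This holds whatever basis $\mathrm{I}_n \in \mathcal{I}^*(\bm{b}_{n,m})$ the statistician's solver returns.

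Next, fix $n$ and consider the event
\[
E_n := \left\{ r_{n,m}\bigl(\hat\gamma(\bm{b}_{n,m}) - \overline{\gamma}^*_n\bigr) \in \gamma(\mathrm{I}_n;G_\alpha) \right\}.
\]
On $E_n$ there is some $\gamma \in \gamma(\mathrm{I}_n;G_\alpha)$ with $r_{n,m}(\hat\gamma(\bm{b}_{n,m}) - \overline{\gamma}^*_n) = \gamma$. Since $r_{n,m}>0$, this gives $\overline{\gamma}^*_n = \hat\gamma(\bm{b}_{n,m}) - r_{n,m}^{-1}\gamma \in C_n$. Combined with the first step, this yields the inclusion of events
\[
E_n \subseteq \{ C_n \cap \gamma(\bm{b}) \neq \emptyset \}.
\]
Taking probabilities and applying $\liminf$, \Cref{lem: confidence_set} gives
\[
\liminf_{n\to\infty} \mathbb{P}\{C_n \cap \gamma(\bm{b}) \neq \emptyset\} \ge \liminf_{n\to\infty}\mathbb{P}(E_n) \ge 1-\alpha,
\]
which is the claim.

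The only point needing care is measurability: the event $\{C_n \cap \gamma(\bm{b}) \neq \emptyset\}$ is defined through an uncountable family of translates. I would handle this either by interpreting the probability as an inner probability, since it contains the event $E_n$ to which \Cref{lem: confidence_set} already assigns a probability, or by noting that $\mathrm{I}_n$ ranges over finitely many bases. Conditioning on $\mathrm{I}_n = \mathrm{I}$, the set $\gamma(\mathrm{I};G_\alpha)$ is the image of an open set under a fixed linear map. The resulting event can then be written with finitely many measurable pieces. Either way, this is the main obstacle only in a formal sense; the substance of the proof is the inclusion $E_n \subseteq \{C_n\cap\gamma(\bm{b})\neq\emptyset\}$.
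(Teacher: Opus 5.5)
Your proof is correct and matches the argument behind the cited result. The paper gives no proof of its own and defers to \cite[Corollary 9]{liu2023asymptotic}, where the corollary is read off from \Cref{lem: confidence_set} exactly as you do: $\overline{\gamma}^*_n \in \gamma(\bm{b})$ by construction, and the event in \eqref{Confidence set} is equivalent to $\{\overline{\gamma}^*_n \in C_n\}$ (your remarks on well-definedness of the projection and on measurability are harmless extra care).
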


 From \Cref{cor: confidence_cor}, one can obtain asymptotic confidence sets containing $\pi^0$ with probability at least $1-\alpha$ by projection. We omit the proof.

\begin{theorem}[Confidence set]\label{thm: confidence sets pi^0}
Fix $\alpha \in (0,1)$. Assume that $\frac{m}{n+m} \to \lambda \in (0, 1)$ as $n,m \to \infty$ and $c=d_{\mathcal{Z},p}^p$ in \eqref{eq: metric over Z}. Let $\hat{\pi}(\bm{b}_{n,m})$ be a solution for \eqref{eq: Wasserstein projection} with input $\bm{\mu}_n$ and $\pi^0_m$. Let $G_\alpha$ be an open set such that $\mathbb{P} \{\mathbb{G}_{\bm{\mu}, \pi^0} \in G_\alpha \} \geq 1 - \alpha$ where $\mathbb{G}_{\bm{\mu}, \pi^0}$ is the limit distribution of $\bm{b}_{n,m}$ given in \eqref{eq: limit of b_n}. A confidence set
\[
    C_\alpha:= \left\{ \hat{\pi}(\bm{b}_{n,m}) - \sqrt{\frac{n+m}{nm}} \pi: \pi \in \pi(\tilde{\mathrm{I}}; G_\alpha) \right\}
\]
contains $\pi^0$ with asymptotic probability at least $1-\alpha$ where $\tilde{\mathrm{I}} \in \mathcal{I}^*(\bm{b}_{n,m})$ and $\pi(\tilde{\mathrm{I}}; G_\alpha) := \{ (\mathrm{P}_{\bm{x}})_\# \gamma(\tilde{\mathrm{I}};\bm{G}): \bm{G} \in G_\alpha \}$.
\end{theorem}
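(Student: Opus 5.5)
The plan is to deduce the statement from \Cref{cor: confidence_cor}, applied to the linear program \eqref{def: general finite linear primal}, and then push everything forward under the linear map $(\mathrm{P}_{\bm{x}})_\#$.

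\textbf{Step 1: check the hypotheses.} By \Cref{lemma: assumption check}, \eqref{def: general finite linear primal} (in its reduced full-rank form) satisfies \Cref{assumption: finite support case}. By \eqref{eq: limit of mu and pi^0}, $\bm{b}_{n,m}$ has distributional limit $\mathbb{G}_{\bm{\mu},\pi^0}$ at rate $r_{n,m}=\sqrt{nm/(n+m)}$. Hence \Cref{lem: confidence_set} and \Cref{cor: confidence_cor} apply, with $\hat\gamma(\bm{b}_{n,m})$ a vertex solution whose basis is $\tilde{\mathrm{I}}\in\mathcal{I}^*(\bm{b}_{n,m})$. The solution $\hat{\pi}(\bm{b}_{n,m})$ is taken to be $(\mathrm{P}_{\bm{x}})_\#\hat\gamma(\bm{b}_{n,m})$ for this basic solution; the statement implicitly assumes such a choice.

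\textbf{Step 2: identify the population solution set.} Since $c=d_{\mathcal{Z},p}^p$ vanishes exactly on the diagonal and $\pi^0\in\Pi(\bm{\mu})$, the proof of \Cref{lemma: assumption check} shows that $\gamma(\bm{b})=\{(\mathrm{Id},\mathrm{Id})_\#\pi^0\}$ is a singleton. Therefore the event in \Cref{cor: confidence_cor} reads
\[
(\mathrm{Id},\mathrm{Id})_\#\pi^0\in C_n=\bigl\{\hat\gamma(\bm{b}_{n,m})-r_{n,m}^{-1}\gamma:\ \gamma\in\gamma(\tilde{\mathrm{I}};G_\alpha)\bigr\},
\]
and its asymptotic probability is at least $1-\alpha$.

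\textbf{Step 3: project.} $(\mathrm{P}_{\bm{x}})_\#$ acts on vectors in $\mathbb{R}^{S_*}$ as a linear summation map (summing over the $\bm{z}$ indices), so it commutes with differences and scalar multiples. Applying it on the event above gives
\[
\pi^0=(\mathrm{P}_{\bm{x}})_\#(\mathrm{Id},\mathrm{Id})_\#\pi^0=\hat\pi(\bm{b}_{n,m})-r_{n,m}^{-1}(\mathrm{P}_{\bm{x}})_\#\gamma
\]
for some $\gamma\in\gamma(\tilde{\mathrm{I}};G_\alpha)$. Since $(\mathrm{P}_{\bm{x}})_\#\gamma\in\pi(\tilde{\mathrm{I}};G_\alpha)$ by definition, this says $\pi^0\in C_\alpha$. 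So the event $\{(\mathrm{Id},\mathrm{Id})_\#\pi^0\in C_n\}$ is contained in $\{\pi^0\in C_\alpha\}$, and taking $\liminf$ of probabilities gives the claim.

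\textbf{Main obstacle: bookkeeping of the full-rank reduction.} Basic solutions $\gamma(\tilde{\mathrm{I}};\bm{G})$ are defined through $\bm{A}^\dagger$ and the truncated vector $\bm{G}$ (last coordinate of each marginal block removed). I need $G_\alpha$ and $\mathbb{G}_{\bm{\mu},\pi^0}$ to be read in these same truncated coordinates, so that \Cref{lem: confidence_set} applies verbatim. Consistently with the convention in \Cref{lemma: assumption check}, I will state explicitly that $\mathbb{G}_{\bm{\mu},\pi^0}$ and $G_\alpha$ are taken in the reduced coordinates. No other step involves more than linearity.
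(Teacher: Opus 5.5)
Your proposal is correct and follows the argument the paper has in mind; the paper omits the proof and says only that it follows from \Cref{cor: confidence_cor} ``by projection.'' Your steps supply exactly that argument: apply \Cref{cor: confidence_cor} via \Cref{lemma: assumption check}, identify $\gamma(\bm{b})=\{(\mathrm{Id},\mathrm{Id})_\#\pi^0\}$, and push forward by the linear map $(\mathrm{P}_{\bm{x}})_\#$. You also usefully make explicit two conventions the statement leaves implicit: $\hat{\pi}(\bm{b}_{n,m})$ must be the projection of the basic solution with basis $\tilde{\mathrm{I}}$, and $G_\alpha$ must be read in the reduced, full-rank coordinates.
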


\section{Appendix: Additional Simulation Results} \label{sec:appendix:additional sim}

Table \ref{tab:Sim2} presents the root mean squared error (RMSE) of $\widetilde{F}(-0.25,-0.25)$ and $\widehat{F}(-0.25,-0.25)$, averaged over 500 repetitions, for each combination of $(m, n, \rho)$. The results are largely consistent with those in Table \ref{tab:Sim1}. First, for each $\rho$ and fixed ratio $n/m$, the RMSE values of $\widetilde{F}$ and $\widehat{F}$ decreases with $m$, approximately at the rate of $1/\sqrt{m}$. Second, the optimal $\eta$ depends on $\rho$, with smaller $\eta$ preferred for large $\rho$ and larger $\eta$ for small $\rho$. Third, the cross-validation procedure yields $\widehat{F}$ that outperforms $\widetilde{F}$, demonstrating the practical advantage of incorporating marginal data for more efficient estimation.

\begin{table}[!htp]
\scriptsize
\renewcommand{\arraystretch}{1.2}
\centering
\setlength{\tabcolsep}{6pt}
\begin{tabular}{|c|c|c|cccc|c|}
\hline
\multirow{3}{*}{$\rho$} & \multirow{3}{*}{$n/m$} & \multirow{3}{*}{$m$} & \multicolumn{4}{c|}{RMSE}                                                                                                                              & \multirow{3}{*}{Average $\log_2 \eta_{\text{cv}}$} \\ \cline{4-7}
                        &                        &                      & \multicolumn{1}{c|}{\multirow{2}{*}{$\widetilde{F}$}} & \multicolumn{3}{c|}{$\widehat{F}$}                                                             &                                                    \\ \cline{5-7}
                        &                        &                      & \multicolumn{1}{c|}{}                                 & \multicolumn{1}{c|}{$\eta=2^0$} & \multicolumn{1}{c|}{$\eta=2^{-6}$} & $\eta=\eta_{\text{cv}}$ &                                                    \\ \hline
\multirow{9}{*}{0.00}   & \multirow{3}{*}{5}     & 50                   & \multicolumn{1}{c|}{5.18}                             & \multicolumn{1}{c|}{2.29}       & \multicolumn{1}{c|}{3.22}          & 3.20                    & -2.63                                              \\ \cline{3-8} 
                        &                        & 100                  & \multicolumn{1}{c|}{3.65}                             & \multicolumn{1}{c|}{1.63}       & \multicolumn{1}{c|}{2.36}          & 2.29                    & -2.75                                              \\ \cline{3-8} 
                        &                        & 200                  & \multicolumn{1}{c|}{2.55}                             & \multicolumn{1}{c|}{1.06}       & \multicolumn{1}{c|}{1.48}          & 1.42                    & -2.71                                              \\ \cline{2-8} 
                        & \multirow{3}{*}{10}    & 50                   & \multicolumn{1}{c|}{5.18}                             & \multicolumn{1}{c|}{1.86}       & \multicolumn{1}{c|}{2.94}          & 2.87                    & -2.53                                              \\ \cline{3-8} 
                        &                        & 100                  & \multicolumn{1}{c|}{3.65}                             & \multicolumn{1}{c|}{1.34}       & \multicolumn{1}{c|}{2.14}          & 2.16                    & -3.08                                              \\ \cline{3-8} 
                        &                        & 200                  & \multicolumn{1}{c|}{2.55}                             & \multicolumn{1}{c|}{0.92}       & \multicolumn{1}{c|}{1.39}          & 1.35                    & -2.70                                              \\ \cline{2-8} 
                        & \multirow{3}{*}{25}    & 50                   & \multicolumn{1}{c|}{5.18}                             & \multicolumn{1}{c|}{1.55}       & \multicolumn{1}{c|}{2.72}          & 2.77                    & -2.63                                              \\ \cline{3-8} 
                        &                        & 100                  & \multicolumn{1}{c|}{3.65}                             & \multicolumn{1}{c|}{1.17}       & \multicolumn{1}{c|}{2.04}          & 1.99                    & -2.83                                              \\ \cline{3-8} 
                        &                        & 200                  & \multicolumn{1}{c|}{2.55}                             & \multicolumn{1}{c|}{0.80}       & \multicolumn{1}{c|}{1.34}          & 1.29                    & -2.70                                              \\ \hline
\multirow{9}{*}{0.25}   & \multirow{3}{*}{5}     & 50                   & \multicolumn{1}{c|}{5.46}                             & \multicolumn{1}{c|}{2.83}       & \multicolumn{1}{c|}{3.32}          & 3.53                    & -4.03                                              \\ \cline{3-8} 
                        &                        & 100                  & \multicolumn{1}{c|}{4.01}                             & \multicolumn{1}{c|}{2.30}       & \multicolumn{1}{c|}{2.47}          & 2.71                    & -4.49                                              \\ \cline{3-8} 
                        &                        & 200                  & \multicolumn{1}{c|}{2.84}                             & \multicolumn{1}{c|}{1.91}       & \multicolumn{1}{c|}{1.60}          & 1.76                    & -5.21                                              \\ \cline{2-8} 
                        & \multirow{3}{*}{10}    & 50                   & \multicolumn{1}{c|}{5.46}                             & \multicolumn{1}{c|}{2.41}       & \multicolumn{1}{c|}{2.96}          & 3.27                    & -3.79                                              \\ \cline{3-8} 
                        &                        & 100                  & \multicolumn{1}{c|}{4.01}                             & \multicolumn{1}{c|}{2.11}       & \multicolumn{1}{c|}{2.24}          & 2.48                    & -4.51                                              \\ \cline{3-8} 
                        &                        & 200                  & \multicolumn{1}{c|}{2.84}                             & \multicolumn{1}{c|}{1.77}       & \multicolumn{1}{c|}{1.47}          & 1.66                    & -5.22                                              \\ \cline{2-8} 
                        & \multirow{3}{*}{25}    & 50                   & \multicolumn{1}{c|}{5.46}                             & \multicolumn{1}{c|}{2.19}       & \multicolumn{1}{c|}{2.69}          & 3.04                    & -3.82                                              \\ \cline{3-8} 
                        &                        & 100                  & \multicolumn{1}{c|}{4.01}                             & \multicolumn{1}{c|}{1.98}       & \multicolumn{1}{c|}{2.13}          & 2.41                    & -4.67                                              \\ \cline{3-8} 
                        &                        & 200                  & \multicolumn{1}{c|}{2.84}                             & \multicolumn{1}{c|}{1.69}       & \multicolumn{1}{c|}{1.39}          & 1.57                    & -5.14                                              \\ \hline
\multirow{9}{*}{0.75}   & \multirow{3}{*}{5}     & 50                   & \multicolumn{1}{c|}{6.26}                             & \multicolumn{1}{c|}{6.34}       & \multicolumn{1}{c|}{3.42}          & 3.60                    & -5.77                                              \\ \cline{3-8} 
                        &                        & 100                  & \multicolumn{1}{c|}{4.33}                             & \multicolumn{1}{c|}{6.09}       & \multicolumn{1}{c|}{2.60}          & 2.63                    & -5.97                                              \\ \cline{3-8} 
                        &                        & 200                  & \multicolumn{1}{c|}{3.10}                             & \multicolumn{1}{c|}{5.88}       & \multicolumn{1}{c|}{1.92}          & 1.80                    & -6.00                                              \\ \cline{2-8} 
                        & \multirow{3}{*}{10}    & 50                   & \multicolumn{1}{c|}{6.26}                             & \multicolumn{1}{c|}{6.08}       & \multicolumn{1}{c|}{2.95}          & 3.18                    & -5.73                                              \\ \cline{3-8} 
                        &                        & 100                  & \multicolumn{1}{c|}{4.33}                             & \multicolumn{1}{c|}{5.95}       & \multicolumn{1}{c|}{2.21}          & 2.22                    & -5.99                                              \\ \cline{3-8} 
                        &                        & 200                  & \multicolumn{1}{c|}{3.10}                             & \multicolumn{1}{c|}{5.82}       & \multicolumn{1}{c|}{1.67}          & 1.53                    & -6.00                                              \\ \cline{2-8} 
                        & \multirow{3}{*}{25}    & 50                   & \multicolumn{1}{c|}{6.26}                             & \multicolumn{1}{c|}{5.99}       & \multicolumn{1}{c|}{2.57}          & 2.80                    & -5.79                                              \\ \cline{3-8} 
                        &                        & 100                  & \multicolumn{1}{c|}{4.33}                             & \multicolumn{1}{c|}{5.93}       & \multicolumn{1}{c|}{2.01}          & 1.99                    & -5.99                                              \\ \cline{3-8} 
                        &                        & 200                  & \multicolumn{1}{c|}{3.10}                             & \multicolumn{1}{c|}{5.77}       & \multicolumn{1}{c|}{1.54}          & 1.39                    & -6.00                                              \\ \hline
\end{tabular}
\caption{\scriptsize Each cell reports the RMSE of the estimators for $F(-0.25,-0.25)$, computed over 500 simulation repetitions. The RMSE values are scaled by a factor of 100.  The last column, labeled ``Average $\log_{2} \eta_{\text{cv}}$,'' reports the mean of $\log_{2}\eta_{\text{cv}}$ across 500 repetitions.}
\label{tab:Sim2}
\end{table}

\newpage

\end{document}